\documentclass{article}

\usepackage[utf8]{inputenc}
\usepackage{amsmath, amssymb, amsthm, amsfonts}
\usepackage{fdsymbol}
\usepackage{bbold}
\usepackage{mathtools}
\usepackage{listings}
\usepackage{hyperref}
\usepackage{syntax}
\usepackage{tikz-cd}
\usepackage{stmaryrd}
\usepackage{listings}
\usepackage{enumitem}
\usepackage{subcaption}

\newtheorem{theorem}{Theorem}[section]
\newtheorem{proposition}[theorem]{Proposition}
\newtheorem{lemma}[theorem]{Lemma}

\theoremstyle{definition}
\newtheorem{definition}[theorem]{Definition}
\newtheorem{remark}[theorem]{Remark}

\usepackage{color}
\definecolor{keywordcolor}{rgb}{0.7, 0.1, 0.1}
\definecolor{tacticcolor}{rgb}{0.0, 0.1, 0.6}
\definecolor{commentcolor}{rgb}{0.4, 0.4, 0.4}
\definecolor{symbolcolor}{rgb}{0.0, 0.1, 0.6}
\definecolor{sortcolor}{rgb}{0.1, 0.5, 0.1}
\definecolor{attributecolor}{rgb}{0.7, 0.1, 0.1}
\definecolor{backgroundcolor}{rgb}{0.95, 0.95, 0.95}

\title{Formalizing Abstract Simplicial Complexes \& Stellar Subdivisions in Lean}

\newcommand\blfootnote[1]{%
  \begin{NoHyper}%
  \renewcommand\thefootnote{}\footnote{#1}%
  \addtocounter{footnote}{-1}%
  \end{NoHyper}%
}

\author{
  Garett Cunningham \\ 
  \small University of Connecticut \\ 
  \small garett.cunningham@uconn.edu
  \and
  Daniel Zach \\ 
  \small Universit\"{a}t Regensburg \\ 
  \small Daniel.Zach@stud.uni-regensburg.de
  \and
  Stefan Friedl \\
  \small Universit\"{a}t Regensburg \\
  \small stefan.friedl@mathematik.uni-regensburg.de
}

\begin{document}

\maketitle
\begin{abstract}
  The theory of simplicial complexes is a cornerstone of topology, offering a sophisticated tool for computing invariants. We present a formalization of abstract simplicial complexes and stellar subdivisions in the Lean proof assistant. We adopt a purely combinatorial framework in order to provide a cohesive foundation for studying the theory of stellar subdivisions as seen in many contexts of combinatorial topology. In particular, we provide formalizations of morphisms between abstract simplicial complexes; several crucial constructions and operations on complexes, such as links and joins; and perform a comprehensive study of how stellar subdivisions interact with these operations. We state and prove a number of identities commonly used in the study of triangulated manifolds, such as deriving equivalences between links in an abstract simplicial complex $K$ and in a stellar subdivision $\sigma_s K$, including results with no references in the standard literature. To our knowledge, this is the first formalization of stellar subdivisions in any proof assistant.
\end{abstract}\blfootnote{The authors acknowledge support from the German-U.S. Fulbright Commission and CRC 1085 ``Higher Invariants'' at Universit\"{a}t Regensburg, funded by the SFB.}

\section{Introduction}
The theory of simplicial complexes finds its roots in some of the cornerstone results of premodern geometry, beginning with the study of polyhedra in 3-dimensional Euclidean geometry. Perhaps the first topological result is the Euler characteristic and its independence from the choice of triangulation for a surface. The notion of a simplicial complex generalizes and subsumes this study, having grown to become an indispensable tool for algebraic topology, largely due to its quality of being especially amenable to computational techniques. The result is an array of methods for defining and computing topological invariants, such as simplicial homology and genus in the most elementary cases. Moreover, simplicial structure can be heavily leveraged to study the homotopy type of a space by gaining access to discretized versions of standard tools---for example, discrete Morse theory and simplicial collapse algorithms~\cite{kozlov2008combinatorial,kozlov2021organized}. Likewise, a specialization of spectral sequences computes higher homotopy groups in the simply connected case~\cite{sergeraert1994computability}, as implemented in Kenzo~\cite{sergeraert1999kenzo}.

Classically, the field of combinatorial topology has made a rigorous program of extending and refining this toolkit. The resulting efforts have received wide applications, especially in the modern era of computers. Topological data analysis often leverages representations of data as simplicial complexes to compute useful metrics or invariants. For example, persistent homology~\cite{edelsbrunner2008persistent} provides information regarding the structure of a dataset that is independent of certain biases in sampling. The software package Regina~\cite{regina} is designed for explicit computations in low-dimensional topology based on a triangulation of a space (e.g., a 3-manifold or knot), allowing preprocessing by simplifying a triangulation.

Concretely, this is done using \emph{bistellar} and \emph{stellar subdivisions}. These have the advantage of being especially pleasant for algorithms or combinatorial description. Stellar subdivisions have classically been a common choice for a ``standard'' subdivision~\cite{glaser1970geometrical, hudson1969piecewise, rourke2012introduction, seifert2004lehrbuch}. As such, they have become an essential tool in combinatorial topology---for example, in the classical proof of Newman's theorem~\cite{cohen1968proof}. Theorems of Alexander~\cite{alexander1930combinatorial} and Pachner~\cite{pachner1991pl} show that (triangulations of) manifolds are homeomorphic if and only if they are related by a sequence of stellar or bistellar moves, respectively. A direct combinatorial proof of the equivalence was given in a thesis of Hannes~\cite{friedl2024pachner}.

The traditional view defines simplicial complexes as structures embedded in Euclidean space, $\mathbb{R}^n$, via barycentric coordinates. This has the immediate benefit of strengthening ties to topological spaces. Moreover, this construction leads to a straightforward theory of general subdivisions, of which stellar subdivisions are a core representative. Indeed, to study general subdivisions, it suffices to study stellar subdivisions~\cite{glaser1970geometrical, hudson1969piecewise}. However, reliance on the ambient space adds additional data that makes computations increasingly untenable. As a result, one wishes to study the more abstract, combinatorial properties of simplicial complexes free of an ambient space, often called an \emph{abstract simplicial complex}. With this view, stellar subdivisions are equally easy (if not easier) to study.

Abstract simplicial complexes have received attention in other proof assistants, with existing formalizations in Rocq~\cite{heras2011incidence} and ACL2~\cite{aransay2021simplicial}. These projects were directed toward explicit computational applications---persistent homology in the former case~\cite{heras2013computing, heras2012towards} and Alexander duals in the latter~\cite{aransay2022formalizing}. We highlight this as a rather strange gap in Lean's development, given the long-standing interest in other communities. Indeed, \lstinline|mathlib|~\cite{mathlib2020} contains many formalizations that relate to simplicial complexes, but with no interconnection between them. We discuss how this motivated our choices in design throughout.\footnote{Since the preparation of this manuscript, {\footnotesize\texttt{mathlib}} accepted independent implementations of {\footnotesize\texttt{PreAbstractSimplicialComplex}} and {\footnotesize\texttt{AbstractSimplicialComplex}}. The former is identical to our formalization presented here and the latter includes an additional totality axiom. We refer the reader to Remark~\ref{rem:isomorphisms} and Section~\ref{sec:discussion} for a more thorough comparison to our work presented here.}

The combinatorial structure lends itself as a natural candidate for formalization. However, the highly geometric nature of the subject leads to much of the canonical literature appealing to intuition-focused arguments. One is given trust that the computations can be done and will succeed if checked. This has the pedagogical advantage of maintaining clarity of ideas, but leads to obvious gaps in logic. Certain identities with highly technical proofs may go increasingly unchecked by future generations, shifting toward the status of folklore instead of theorem. We see this work as a step toward cataloging results in a centralized and formally verified manner so that the details have been sufficiently worked out for the reader. Additionally, the task of patching published proofs led to new theorems that have, to our knowledge, no existing references in the standard literature.

Given the ubiquity of abstract simplicial complexes for computational applications, we present initial legwork toward formalizing them in Lean 4~\cite{moura2021lean}---in particular, some foundational theory regarding the combinatorial properties, constructions of subcomplexes, and initial work on stellar subdivisions. Our selection of theorems for formalization are informed by future work to formalize combinatorial manifolds, in particular toward a future formalization of Alexander and Pachners' theorems. Our contributions are as follows:
\begin{itemize}
  \item A formalization of abstract simplicial complexes; including morphisms, plus certain operations and constructions thereon.
  \item A formalization of stellar subdivisions, including identities relating how they interact with objects defined above.
  \item Formal proofs of new theorems that, as far as the authors are aware, are not found in the standard literature.
\end{itemize}

To our knowledge this is the first formalization of stellar subdivisions in any proof assistant. Throughout, we emphasize a design philosophy that prioritizes preserving computability and compatibility with existing \lstinline|mathlib| content as far as reasonably possible. The former is inspired by our interests in simplicial complexes as a computational tool with direct algorithmic applications. We detour to clarify the latter and preview some of the main issues faced in development. The guiding principle is to choose the fewest assumptions necessary, so that later efforts to connect components in \lstinline|mathlib| may be as streamlined as possible.

Some minor attention has been given to the direct study of simplicial complexes via \lstinline|SimplicialComplex|. However, its state of development is minimal. We chose to isolate the combinatorial aspects of \lstinline|SimplicialComplex| under a new type \lstinline|AbstractSimplicialComplex|. As our goals are purely combinatorial in nature, this decision removed many roadblocks and considerably reduced the overhead in formalization. The existing \lstinline|mathlib| definition is as follows:\footnote{{\footnotesize\texttt{mathlib}} has since changed the downward closure condition to use {\footnotesize\texttt{IsRelLowerSet}}. We plan to refactor our code to match this in a future update.}
\begin{lstlisting}
variable (𝕜 E) [Ring 𝕜] [PartialOrder 𝕜] [AddCommGroup E] [Module 𝕜 E]
structure SimplicialComplex where
  faces : Set (Finset E)
  empty_notMem : ∅ ∉ faces
  indep : ∀ {s}, s ∈ faces → AffineIndependent 𝕜 ((↑) : s → E)
  down_closed : ∀ {s t}, s ∈ faces → t ⊆ s → t ≠ ∅ → t ∈ faces
  inter_subset_convexHull : ∀ {s t}, s ∈ faces → t ∈ faces →
    convexHull 𝕜 ↑s ∩ convexHull 𝕜 ↑t ⊆ convexHull 𝕜 (s ∩ t : Set E)
\end{lstlisting}

Notably, the definition depends on the types $E$ and $\mathbb{k}$, where $E$ is a module over the partially ordered ring $\mathbb{k}$. This gives a minimal setting needed to define a simplicial complex as existing in an ambient $\mathbb{k}$-affine space. Formally, a simplicial complex is defined to be the convex hulls of its faces glued together nicely (\lstinline|inter_subset_convexHull|) with no degeneracies (\lstinline|indep|). For example, by setting $\mathbb{k} := \mathbb{R}$ and $E := \mathbb{R}^n$, one recovers the usual embedding in $n$-dimensional Euclidean space. We chose to drop these geometric conditions and highlight where potential overhead is generated as relevant, with a full discussion reserved for Section~\ref{sec:discussion}.

One final thematic difficulty is the dependence of definitions on some type. In particular, a simplicial complex (whether abstract or geometric) implicitly comes with the attached data of a labeling of its vertices. This is, primarily, a fragment of how Lean chooses to define sets as formulas $E \to \text{\lstinline|Prop|}$ for some type $E$. Therefore, we pay particular attention to how this data is affected by operations and maps, leading to a necessary theory of proper typecasting between simplicial complexes. Joins become the first roadblock motivating this level of care.

Throughout this paper, we do not assume familiarity with simplicial complexes or their applications. Indeed, we begin with the basic definition and work from the ground up. We refer the intrigued reader to some of the many excellent introductory resources on algebraic topology~\cite{friedl2019algebraic, munkres2018elements, sher2001handbook, spanier2012algebraic} and combinatorial topology~\cite{glaser1970geometrical, hudson1969piecewise, kozlov2008combinatorial, rourke2012introduction} for a deeper study. Definitions are presented in their Lean form and theorems are presented in their natural language form. All claims are formalized with the respective Lean names attached. Our code is publically available via our GitHub repository,\footnote{\url{https://github.com/not-gary/pachner}. All code associated with this work was developed without the aid of generative AI tools.} currently totaling $\sim$14,000 lines of code.

Section~\ref{sec:simplicial-complexes} presents a zoo of definitions and theorems describing the properties of abstract simplicial complexes and simplicial maps, as well as important subcomplexes for consideration; such as links, joins, and boundaries. Section~\ref{sec:stellar-subdivisions} extends the discussion to stellar subdivisions. In particular, we focus discussion on formalizing the notion of \emph{stellar equivalence} of simplicial complexes and prove a series of identities primarily describing the interaction of stellar subdivisions with the subcomplexes defined in Section~\ref{sec:simplicial-complexes}. Throughout, we take cursory detours to highlight design choices and complications encountered during development. Section~\ref{sec:discussion} is dedicated to a more thorough discussion thereof, with a view toward possible future work and an analysis of related contemporary work.

\section{Simplicial Complexes}\label{sec:simplicial-complexes}

\begin{figure}
  \centering
  \includegraphics[width=0.7\linewidth]{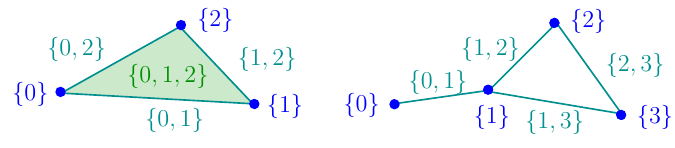}
  \caption{Examples of simplicial complexes with set-based encoding shown.}
  \label{fig:simplicial-complexes}
\end{figure}

The notion of simplicial complexes can be seen as stemming from the process of triangulating topological spaces. Informally, a simplicial complex is composed of $n$-dimensional faces (possibly none), inductively closed under its $(n - 1)$-dimensional edges for all $n \geq 1$. A point, or $0$-face, is called a \emph{vertex}. One can view simplicial complexes as a higher dimensional analog of graphs or polyhedra, which are special cases. In this more general setting, we are allowed to glue pieces of differing dimensions together. Figure~\ref{fig:simplicial-complexes} shows a visual example.

One method to encode the downward closure property is to label the vertices of a simplicial complex. An $n$-dimensional face is then defined as the set of $n + 1$ vertices that it contains (cf. Fig.~\ref{fig:simplicial-complexes}). In this manner, a simplicial complex is formally a collection of finite sets (faces) that is closed under the subset relation. To distinguish this from the geometric formulation, we call this an \emph{abstract} simplicial complex.
\begin{lstlisting}
structure AbstractSimplicialComplex (E : Type _) where
  faces : Set (Finset E)
  empty_notMem : ∅ ∉ faces
  down_closed : ∀ {s t}, s ∈ faces → t ⊆ s → t ≠ ∅ → t ∈ faces
\end{lstlisting}

In the sequel, we drop the adjective ``abstract,'' so long as the context is clear, and use $s \in K$ to mean ``$s$ is a face of $K$.'' We use $V(K)$ to denote the set of vertices of $K$. One may define this in a few equivalent ways.

\begin{definition}[{\small\texttt{AbstractSimplicialComplex.vertices}}]
  \[
    V(K) := \{x \mid \{x\} \in K\}.
  \]
\end{definition}

\begin{proposition}[{\small\texttt{vertices_eq}}, {\small\texttt{vertices_setOf}}]
  \[
    V(K) = \bigcup_{s \in K} s = \{x \mid \exists s \in K,\; x \in s\}.
  \]
\end{proposition}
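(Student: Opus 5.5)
The plan is to prove the chain of equalities by showing $V(K) = \{x \mid \exists s \in K,\; x \in s\}$ via two inclusions. The remaining equality $\bigcup_{s \in K} s = \{x \mid \exists s \in K,\; x \in s\}$ is immediate from the definition of an indexed union: it is literally membership in the union, a simp-level rewrite with \texttt{Set.mem\_iUnion}. I would therefore organize the argument as \texttt{vertices\_setOf} first, and then derive \texttt{vertices\_eq} from it by unfolding the union.

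For the inclusion $V(K) \subseteq \{x \mid \exists s \in K,\; x \in s\}$, take $x$ with $\{x\} \in K$ and use the witness $s := \{x\}$. Clearly $x \in \{x\}$, so nothing beyond the definition of $V(K)$ is needed here.

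For the reverse inclusion, suppose $s \in K$ and $x \in s$. The singleton $\{x\}$ is a subset of $s$ and is nonempty, so the axiom \texttt{down\_closed}, applied to $s$ and $t := \{x\}$, yields $\{x\} \in K$, that is, $x \in V(K)$. The nonemptiness hypothesis $\{x\} \neq \emptyset$ is discharged by \texttt{Finset.singleton\_ne\_empty}, and the subset condition by \texttt{Finset.singleton\_subset\_iff}.

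There is no real obstacle; the only point requiring care is this reverse direction, where downward closure must be invoked with its nonemptiness side condition. In Lean the main friction is bookkeeping, namely coercions between \texttt{Finset} membership and \texttt{Set} membership in the union. I would handle this with \texttt{ext x} followed by \texttt{simp} using the appropriate \texttt{mem} lemmas before splitting into the two directions.
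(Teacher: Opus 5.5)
Your proof is correct and matches what the paper relies on. The paper gives no written proof beyond citing \texttt{vertices\_eq} and \texttt{vertices\_setOf}, and the argument is the one you give. The witness $s := \{x\}$ proves $V(K) \subseteq \{x \mid \exists s \in K,\; x \in s\}$. Applying \texttt{down\_closed} to $\{x\} \subseteq s$ with $\{x\} \neq \emptyset$ proves the reverse inclusion. The equality with $\bigcup_{s \in K} s$ is just unfolding membership in the indexed union.
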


In our definition of \lstinline|AbstractSimplicialComplex|, we retain the axioms of \lstinline|SimplicialComplex| that relate to the combinatorial data, namely \lstinline|empty_notMem| and \lstinline|down_closed|, and remove the geometric conditions from \lstinline|SimplicialComplex|, \lstinline|indep| and \lstinline|inter_subset_convexHull|. As a consequence, we no longer need the additional type $\mathbb{k}$ nor the assumptions in the type signature. Therefore, we define an \lstinline|AbstractSimplicialComplex| for just any base type $E$. If some ring $\mathbb{k}$ is provided such that $E$ is a module over $\mathbb{k}$, then one can retrieve an instance of \lstinline|SimplicialComplex|.
\begin{lstlisting}
variable (𝕜 E) [Ring 𝕜] [PartialOrder 𝕜] [AddCommGroup E] [Module 𝕜 E]
def AbstractSimplicialComplex.ofGeometric (K : SimplicialComplex 𝕜 E) :
  AbstractSimplicialComplex E :=
    ⟨K.faces, K.empty_notMem, K.down_closed⟩

def SimplicialComplex.ofAbstract
    (K : AbstractSimplicialComplex E)
    (indep : ∀ {s}, s ∈ K.faces → AffineIndependent 𝕜 ((↑) : s → E))
    (inter_subset_convexHull : ∀ {s t}, s ∈ K.faces → t ∈ K.faces →
      convexHull 𝕜 ↑s ∩ convexHull 𝕜 ↑t ⊆
        convexHull 𝕜 (s ∩ t : Set E)) :
  SimplicialComplex 𝕜 E :=
    ⟨K.faces, K.empty_notMem, indep, K.down_closed,
      inter_subset_convexHull⟩
\end{lstlisting}
Hence, any results that hold for \lstinline|AbstractSimplicialComplex| will also hold for \lstinline|SimplicialComplex|. Converting back and forth yields the original complex (e.g., \lstinline|ofAbstract_ofGeometric_id|).

We carry over the same constructions that exist for \lstinline|SimplicialComplex| as applicable. However, there is a number of basic notions missing that we develop. First, \lstinline|mathlib| lacks definitions for unions and intersections of complexes.
\begin{lstlisting}
def SimplicialUnion (K L : AbstractSimplicialComplex E) :=
  ⟨K.faces ∪ L.faces, _, _⟩
def SimplicialInter (K L : AbstractSimplicialComplex E) :=
  ⟨K.faces ∩ L.faces, _, _⟩
\end{lstlisting}
This induces instances of \lstinline|instHasUnion| and \lstinline|instHasInter| respectively to take advantage of the usual set-theoretic notation for unions and intersections. Likewise, the notion of a \emph{subcomplex} derives an instance of \lstinline|instHasSubset|.
\begin{lstlisting}
def IsSubcomplex (K L : AbstractSimplicialComplex E) : Prop :=
    K.faces ⊆ L.faces
\end{lstlisting}

Second, we define a notion of \emph{dimension} for both a face (via its cardinality) and, by extension, a simplicial complex.
\begin{lstlisting}
def face_dim (s : Finset E) : ℤ := Finset.card s - 1
def AbstractSimplicialComplex.dim
    (K : AbstractSimplicialComplex E) [Fintype K.faces] : ℤ :=
  Finset.max' ((Finset.image face_dim K.faces.toFinset) ∪ {-1}) _
\end{lstlisting}
We highlight two qualities of this definition. First, the usage of \lstinline|Finset.max'| returns an integer rather than an element of type \lstinline|Option| $\mathbb{Z}$, so long as the set is nonempty. We  assign the empty complex $\bot$ dimension $-1$ as convention. Second, this requires that $K$ be finite in order to extract a \lstinline|Finset| for calculation. In practice, one is often concerned with finite (i.e., compact) simplicial complexes and explicit calculations with them. We found this useful for some dimension preservation results.

\begin{remark}\label{rem:graphs}
  A special case of simplicial complexes is graphs. Indeed, one can view undirected graphs as 1-dimensional simplicial complexes. In Lean, a graph is called a \emph{simple graph} and is defined with vertices of arbitrary type $V$.
  \begin{lstlisting}
structure SimpleGraph (V : Type _) where
  Adj : V → V → Prop
  symm : Std.Symm Adj := by aesop_graph
  loopless : Std.Irrefl Adj := by aesop_graph\end{lstlisting}
  The case for multigraphs is represented by \lstinline|Graph|. Indeed, we implement a way to convert from a 1-dimensional abstract simplicial complex to either object (\lstinline|SimpleGraph.ofAbstract|, \lstinline|Graph.ofAbstract|).
\end{remark}

\subsection{Simplicial Maps}\label{sec:simplicial-maps}

We first define a \emph{simplicial map} from $K$ to $L$ as a function $f : E \to F$ such that every face in $K$ is mapped to a face in $L$:
\begin{lstlisting}
structure SimplicialMap
    (K : AbstractSimplicialComplex E) (L : AbstractSimplicialComplex F)
  where
    map : E → F
    is_simplicial : ∀ s, s ∈ K.faces → Finset.image map s ∈ L.faces
\end{lstlisting}

In particular, all values of \lstinline|map| outside the faces of $K$ are treated as ``junk values,'' since \lstinline|is_simplicial| imposes no condition on them. Note that the property of being a simplicial map depends on the complexes used for the domain and codomain. However, there are circumstances in which a map stays simplicial after we change either parameter. For example, restriction to a subcomplex yields a simplicial map (\lstinline|SimplicialMap.restrict|). Moreover, any function applied to a simplicial complex is simplicial onto its image, including restricting the codomain to the image (\lstinline|SimplicialImage|, \lstinline|SimplicialMap.ontoImage|). We informally denote the image of a simplicial map $f$ on $K$ by $f(K)$. Simplicial maps are also closed under composition. That is, if $f : E \to F$ is simplicial from $K$ to $L$ and $g : F \to G$ is simplicial from $L$ to $Z$, then $g \circ f : E \to G$ is a simplicial map from $K$ to $Z$ (\lstinline|SimplicialMap.comp|).

\begin{remark}\label{rem:category}
  This is sufficient to define the category of abstract simplicial complexes with morphisms given by simplicial maps. Indeed, the identity map $E \to E$ is always a simplicial map (\lstinline|idSimplicialMap|). We mark this as useful future work, wherein a subset of the authors have initiated active plans toward further development. Moreover, we note challenges to deriving an equivalent definition for \lstinline|SimplicialComplex| in Section~\ref{sec:discussion}.
\end{remark}

\subsubsection{Simplicial Isomorphisms}
Geometrically, an isomorphism of simplicial complexes, $K \cong L$, will preserve the simplicial structure, but potentially relabel the vertices. Formally stated:
\begin{lstlisting}
def IsSimplicialIso
    {K : AbstractSimplicialComplex E} {L : AbstractSimplicialComplex F}
    (f : SimplicialMap K L) : Prop :=
  ∃ g : SimplicialMap L K,
    (K.vertices).restrict (g ∘ f).map = (K.vertices).restrict id ∧
    (L.vertices).restrict (f ∘ g).map = (L.vertices).restrict id
\end{lstlisting}

The above definition is propositional and is agnostic to the particular choice of inverse $g : F \to E$. We also supply a bundled version that includes a specific choice of inverse simplicial map with proofs that they are mutually inverses on the vertex sets.
\begin{lstlisting}
structure SimplicialIso
    (K : AbstractSimplicialComplex E) (L : AbstractSimplicialComplex F)
  where
    toFun : SimplicialMap K L
    invFun : SimplicialMap L K
    left_inv : ∀ {x : E}, x ∈ K.vertices →
      invFun.map (toFun.map x) = x
    right_inv : ∀ {x : F}, x ∈ L.vertices →
      toFun.map (invFun.map x) = x
\end{lstlisting}
In line with Remark~\ref{rem:category}, \lstinline|CategoryTheory.Iso| uses bundled versions of isomorphisms. We provide both definitions to allow some freedom of choice for users.

\begin{remark}\label{rem:isomorphisms}
  Note that we only stipulate the bijectivity condition for the vertices of the simplicial complex. What the actual maps $f$ and $g$ do outside the set of vertices does not matter. One might define $f : E \to F$ to be a bijection of types. However, there arise examples where constructing an explicit inverse is tricky. For example, consider $E = F = \mathbb{N}$. Take
  \[
    K := \mathcal{P}(\{1, \ldots, n\}) \setminus \{\emptyset\}
  \] and the monus operation $f(x) := x \dotminus 1$. Then,
  \[
    f(K) = \mathcal{P}(\{0, \ldots, n - 1\}) \setminus \{\emptyset\} \cong K.
  \]
  However, the monus operation is not a bijection, since
  \[
    0 \dotminus 1 = 1 \dotminus 1 = 0.
  \]
  Restricting to the set of vertices is easier in practice, since one only needs to care about a function's behavior localized to the set of vertices.

  Simplicial complexes are also sometimes defined with an additional axiom stipulating that they are \emph{total}. In this context, one would add the condition\label{total}
\begin{lstlisting}
total : ∀ x : E, {x} ∈ K.faces
\end{lstlisting}
  This avoids the above pathological example, but introduces a much more severe complication for subdivisions. Stellar subdivisions, for example, introduce a new vertex. If $K$ is total, then one either needs to change the underlying type to add a new label for this vertex, or shift the labels to make room for it. In either case, a monumental amount of overhead is generated to handle typecasting. In particular, many identities will only hold up to isomorphism where our current formalization will give true equalities.
\end{remark}

\subsubsection{Simplicial Coercions}
Isomorphisms give a useful tool for attempting to change the underlying type of a simplicial complex without loss of data. However, the definition relies on specifying a simplicial complex as the codomain and an inverse morphism. For practical use cases, it is more convenient to just provide a well-behaved function $\varphi : E \to F$ for typecasting. Moreover, one desires that this typecasting is lossless, essentially behaving as a relabeling of the vertices.
\begin{lstlisting}
structure SimplicialCoe
    (K : AbstractSimplicialComplex E) (F : Type _)
  where
    coe : E → F
    Injective : Set.InjOn coe (K.vertices)
\end{lstlisting}

The core idea is that we lift functions $\varphi : E \to F$ used for typecasting to simplicial maps. Note, $\varphi$ need not be injective everywhere. To have that $\varphi$ is an isomorphism of $K$ onto its image, it suffices to know that $\varphi$ is injective on the set of vertices of $K$.

Some operations on simplicial complexes will change the underlying type $E$. Thus, a reasonable method for typecasting will be indispensable for ensuring complexes have the ``correct'' type at any given time, particularly in the context of joins. In this instance, we find that having a specific construct for typecasting on simplicial complexes to be handy, since we can omit details about codomains and proofs that $\varphi$ is simplicial. Rather, we can just supply a function that is injective on the vertex set.

This localized injectivity condition allows us to prove some useful utility lemmas. For example, simplicial isomorphisms are coercions (\lstinline|SimplicialMap.coe|). Moreover, they are preserved under restriction to subcomplexes (\lstinline|SimplicialCoe.restrict|) and composition with isomorphisms (\lstinline|SimplicialCoeOnImage|). They also distribute over set operations like unions.
\begin{proposition}[{\small\texttt{simplicialCoe_union}}]
  Let $\varphi : E \to F$ be a simplicial coercion on $K \cup L$. Then,
  \[
    \varphi(K \cup L) = \varphi(K) \cup \varphi(L).
  \]
\end{proposition}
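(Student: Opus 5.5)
The plan is to prove the equality by extensionality on faces: both sides are simplicial complexes over $F$, and by construction their face sets are
\[
  \{\varphi(s) \mid s \in K \cup L\} \quad\text{and}\quad \{\varphi(s) \mid s \in K\} \cup \{\varphi(s) \mid s \in L\},
\]
where $\varphi(s)$ denotes \lstinline|Finset.image| $\varphi$ $s$. So I first unfold the definition of the image of a simplicial coercion (the image complex \lstinline|SimplicialImage| of $K$ under $\varphi$, whose faces are exactly the images of faces of $K$) and of \lstinline|SimplicialUnion|, whose faces are \lstinline|K.faces ∪ L.faces|. The goal then reduces to an equality of sets of finsets, which I prove by \lstinline|ext| and splitting into the two inclusions.

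For $\subseteq$, take $t = \varphi(s)$ with $s \in K \cup L$. I case on $s \in K$ or $s \in L$ and place $t$ in the corresponding side. For $\supseteq$, take $t = \varphi(s)$ with $s \in K$ (resp.\ $s \in L$). Then $s \in K \cup L$, so $t$ is a face of $\varphi(K \cup L)$. Both directions are essentially \lstinline|Set.image_union| applied to \lstinline|Finset.image| $\varphi$.

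The only real obstacle is bookkeeping. The right-hand side requires $\varphi$ to be a simplicial coercion on $K$ and on $L$ separately, and these are obtained from the given one on $K \cup L$ via \lstinline|SimplicialCoe.restrict|, since $K, L \subseteq K \cup L$. I must check that the coercion structures on the two sides agree definitionally, or at least that the image construction depends only on the underlying function \lstinline|coe| and not on the injectivity proof. If the image is defined through \lstinline|coe| alone, this is immediate, and the injectivity hypothesis plays no role in the proof of the equality itself. Otherwise I would first prove a small lemma stating that the face set of $\varphi(K)$ equals \lstinline|Finset.image φ.coe '' K.faces|, and rewrite with it on both sides before the extensionality argument.
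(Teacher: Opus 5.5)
Your proposal is correct and takes essentially the same route as the paper, which gives no written proof and leaves the identity to the formalization as a direct unfolding of definitions. Concretely: unfold the image and union constructions, reduce to the fact that the image of a union of face sets is the union of the images, and check the two inclusions by extensionality; the injectivity hypothesis is not needed for the equality itself. Your remark that the image complex should depend only on the underlying function $\varphi$, so that the restricted coercions on $K$ and $L$ yield the same complexes, is the only bookkeeping point the argument requires.
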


Most importantly, coercions yield nice interactions with the constructions in Sections~\ref{sec:operations} and~\ref{sec:stellar-subdivisions}, allowing us to prove useful identities to simplify typecasting steps in proofs. We present some of these utility lemmas as they naturally arise.

\subsection{Subcomplexes \& Operations}\label{sec:operations}
We explore some notable subcomplexes of and operations on a simplicial complex. We focus first on those that will be essential for defining and studying stellar subdivisions in Section~\ref{sec:stellar-subdivisions}. A number of additional constructions; such as stars, cones, and balls around a face; are formalized, with definitions and some simple results available. We then follow with an investigation toward implementing the join of two simplicial complexes.

\subsubsection*{Links}

The first and foremost construction is the \emph{link} of a face $s$:
\begin{lstlisting}
def Link (K : AbstractSimplicialComplex E) (s : Finset E) :=
  ⟨{t ∈ K.faces | s ∪ t ∈ K.faces ∧ s ∩ t = ∅}, _, _⟩
\end{lstlisting}
Given the $n$-face $s$ in $K$, the link $\text{Lk}(K, s)$ is the collection of all faces in $K$ that are disjoint from $s$ and that form a face of $K$ when combined with $s$. Figure~\ref{fig:subcomplexes}a shows an example of a link. Pictorially, $\text{Lk}(K, s)$ represents all the faces in $K$ that ``support'' $s$ via some ``linking'' face. They enjoy nice properties on their own, including interactions with isomorphims and set operations.

\begin{proposition}[{\small\texttt{link_simplicialIso}}]
  Let $f : E \to F$ be an isomorphism between $K$ and $L$. Then,
  \[
    \text{Lk}(K, s) \cong \text{Lk}(L, f(s)).
  \]
\end{proposition}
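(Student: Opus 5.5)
The plan is to reuse the given isomorphism and its inverse, restricted to the links. Let $g : F \to E$ be the simplicial inverse of $f$, so that $g \circ f$ is the identity on $V(K)$ and $f \circ g$ is the identity on $V(L)$.

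\textbf{Hypothesis on $s$.} I will assume that $s$ is a face of $K$, or at least that $s \subseteq V(K)$; the case $s = \emptyset$ is trivial. Some such hypothesis seems necessary. If $s$ contained a junk label $z \notin V(K)$ with $f(z) = f(v)$ for a vertex $v \in s$, then $\text{Lk}(K,s)$ would be empty while $\text{Lk}(L,f(s))$ need not be. So in the formalization I expect a hypothesis like $s \in K$ to appear.

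\textbf{Two preliminary facts.} Since $g \circ f = \mathrm{id}$ on $V(K)$, the map $f$ is injective on $V(K)$. Symmetrically, $g$ is injective on $V(L)$. Also, every face of $\text{Lk}(K,s)$ is a face of $K$, so $V(\text{Lk}(K,s)) \subseteq V(K)$, and likewise on the $L$ side. Consequently, once $f$ and $g$ are shown to be simplicial maps between the two links, the inverse conditions on vertices are inherited directly from those of $f$ and $g$. This also produces a \lstinline|SimplicialIso| by restriction.

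\textbf{$f$ maps $\text{Lk}(K,s)$ into $\text{Lk}(L,f(s))$.} Take $t \in \text{Lk}(K,s)$. Then $f(t) \in L$, and $f(s) \cup f(t) = f(s \cup t) \in L$ by \lstinline|Finset.image_union| and simpliciality of $f$. For disjointness, suppose $f(a) = f(b)$ with $a \in s$ and $b \in t$. Both $a$ and $b$ lie in the face $s \cup t$, hence in $V(K)$. Injectivity of $f$ on $V(K)$ then gives $a = b \in s \cap t = \emptyset$, a contradiction.

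\textbf{$g$ maps $\text{Lk}(L,f(s))$ into $\text{Lk}(K,s)$.} Take $u \in \text{Lk}(L,f(s))$. By the same argument applied to $g$, we get $g(u) \in \text{Lk}(K, g(f(s)))$; disjointness uses injectivity of $g$ on $V(L)$, which contains $f(s) \cup u$. It remains to rewrite $g(f(s)) = s$. This holds because $g \circ f$ fixes $s$ pointwise, as $s \subseteq V(K)$. This uses a small lemma: if a function is the identity on a finset, then its image of that finset is the finset itself.

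\textbf{Main obstacle.} I expect the hardest part to be the bookkeeping rather than any conceptual step. Concretely, this means threading the vertex-membership facts needed to invoke injectivity on $V(K)$ and $V(L)$, and performing the rewrite $g(f(s)) = s$ inside the link's face condition. I would first prove these two injectivity lemmas for isomorphisms as standalone results, and then handle the disjointness arguments with \lstinline|Finset.disjoint_left|.
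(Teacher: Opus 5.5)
Your argument is correct and is essentially the expected proof. You restrict $f$ and its vertex-inverse $g$ to the two links, get disjointness from injectivity of $f$ on $V(K)$ (resp.\ of $g$ on $V(L)$), and use $g(f(s)) = s$ for the reverse direction. The paper gives no written proof beyond citing the Lean lemma, so there is nothing more specific to compare against.

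Your added hypothesis $s \subseteq V(K)$ (for instance $s \in K$) is necessary rather than a weakness. Your junk-label counterexample is valid, so the statement as printed in the paper is missing this assumption.
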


\begin{proposition}[{\small\texttt{link_simplicialUnion}}, {\small\texttt{link_simplicialInter}}]\phantom{text}
  \begin{itemize}
    \item $\text{Lk}(K \cup L, s) = \text{Lk}(K, s) \cup \text{Lk}(L, s)$.
    \item $\text{Lk}(K \cap L, s) = \text{Lk}(K, s) \cap \text{Lk}(L, s)$.
  \end{itemize}
\end{proposition}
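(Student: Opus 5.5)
Both identities are equalities of simplicial complexes, so I would first reduce them to equalities of face sets. An \lstinline|AbstractSimplicialComplex| is a structure whose only data field is \lstinline|faces|; the other two fields are proofs, so structure extensionality applies. It then suffices to show, for every finset $t$, that $t$ is a face of the left-hand side iff it is a face of the right-hand side. Unfolding \lstinline|Link|, \lstinline|SimplicialUnion| and \lstinline|SimplicialInter| turns each side into a propositional formula built from $t \in K$, $t \in L$, $s \cup t \in K$, $s \cup t \in L$ and $s \cap t = \emptyset$.

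The key observation, used in both parts, is that the condition $t \in K$ in the definition of the link is redundant. If $t$ is a face of some complex, then $t \neq \emptyset$ by \lstinline|empty_notMem|. Moreover $t \subseteq s \cup t$. So whenever $s \cup t \in K$ and $t \neq \emptyset$, \lstinline|down_closed| gives $t \in K$. I would record this as a small helper lemma: for $t \neq \emptyset$, $t \in \text{Lk}(K,s)$ iff $s \cup t \in K$ and $s \cap t = \emptyset$.

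For the intersection, membership in $\text{Lk}(K \cap L, s)$ is
\[
  (t \in K \wedge t \in L) \wedge (s \cup t \in K \wedge s \cup t \in L) \wedge s \cap t = \emptyset .
\]
Membership in $\text{Lk}(K, s) \cap \text{Lk}(L, s)$ is the conjunction of the two link conditions. These formulas are the same up to reassociating and commuting conjunctions, which \lstinline|tauto| or \lstinline|aesop| should close directly; the helper lemma is not even needed here.

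For the union, the inclusion $\text{Lk}(K,s) \cup \text{Lk}(L,s) \subseteq \text{Lk}(K \cup L,s)$ is immediate: a face of $\text{Lk}(K,s)$ has $t \in K$ and $s \cup t \in K$, hence both lie in $K \cup L$. The converse is the only real obstacle. A face $t$ of $\text{Lk}(K \cup L, s)$ might a priori satisfy $t \in K$ but $s \cup t \in L$, mixing the two complexes. I would resolve this by case analysis on where $s \cup t$ lies.
\begin{itemize}
  \item If $s \cup t \in K$, then the helper lemma gives $t \in \text{Lk}(K, s)$.
  \item If $s \cup t \in L$, then the helper lemma gives $t \in \text{Lk}(L, s)$.
\end{itemize}
Here $t \neq \emptyset$ is supplied by \lstinline|empty_notMem| of $K \cup L$. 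This completes both identities.
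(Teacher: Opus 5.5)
Your argument is correct and follows the same route as the paper, which gives no written proof here but defers to the Lean lemmas; like its other identities, this is proved by extensionality and unfolding the face sets. You also correctly pinpoint the one step that is not pure propositional logic: in the union case, a face $t$ with $s \cup t \in L$ but a priori only $t \in K$ is handled by \texttt{down\_closed} in $L$ (using $t \neq \emptyset$), while the intersection case is just a rearrangement of conjunctions.
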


Links also enjoy a useful commutativity property with simplicial coercions.
\begin{proposition}[{\small\texttt{link_simplicialCoe_image}}]
  Let $\varphi : E \to F$ be a simplicial coercion on $K$. Then,
  \[
    \varphi(\text{Lk}(K, s)) = \text{Lk}(\varphi(K), \varphi(s)).
  \]
\end{proposition}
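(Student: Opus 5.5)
We prove the equality of face sets by extensionality: we show that a finset $u \subseteq F$ is a face of $\varphi(\text{Lk}(K,s))$ if and only if it is a face of $\text{Lk}(\varphi(K),\varphi(s))$. Throughout we unfold the image complex as $\varphi(K) = \{\varphi(t) \mid t \in K\}$, where $\varphi(t)$ denotes \lstinline|Finset.image|.

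\textbf{Preliminary lemma.} Since $\varphi$ is injective on $V(K)$, the map $t \mapsto \varphi(t)$ is injective on finsets contained in $V(K)$. Every face of $K$ is such a finset, by the characterization $V(K)=\bigcup_{s\in K} s$. Two consequences follow:
\begin{itemize}
  \item if $r, r' \subseteq V(K)$ and $\varphi(r)=\varphi(r')$, then $r=r'$;
  \item $\varphi(a)\cap\varphi(b)=\varphi(a\cap b)$ for $a,b\subseteq V(K)$.
\end{itemize}

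\textbf{Hypothesis on $s$.} I expect the statement to need $s \subseteq V(K)$ (for example $s \in K$). Without it, take $s=\{x\}$ with $x\notin V(K)$ but $\varphi(x)=\varphi(v)$ for some vertex $v$. Then $\text{Lk}(K,s)$ is empty, because $s\cup t\in K$ would force $x \in V(K)$. The right-hand side, however, can be nonempty. So I would carry the assumption $s\subseteq V(K)$; this is presumably how the Lean lemma is stated. The case $s=\emptyset$ is harmless, since then both sides equal $\varphi(K)$.

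\textbf{Forward inclusion.} Let $u=\varphi(t)$ with $t\in K$, $s\cup t\in K$ and $s\cap t=\emptyset$.
\begin{itemize}
  \item $u\in\varphi(K)$ because $t \in K$.
  \item $\varphi(s)\cup u=\varphi(s\cup t)\in\varphi(K)$, using that \lstinline|Finset.image| commutes with unions.
  \item $\varphi(s)\cap\varphi(t)=\varphi(s\cap t)=\emptyset$ by the preliminary lemma, since $s,t\subseteq V(K)$.
\end{itemize}

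\textbf{Reverse inclusion.} Suppose $u=\varphi(t)$ with $t\in K$, $\varphi(s)\cup u=\varphi(r)$ for some $r\in K$, and $\varphi(s)\cap u=\emptyset$. We must show $t \in \text{Lk}(K,s)$.
\begin{itemize}
  \item We have $\varphi(s\cup t)=\varphi(r)$, and both $s\cup t$ and $r$ lie in $V(K)$. The preliminary lemma gives $r=s\cup t$, hence $s\cup t\in K$.
  \item If $x\in s\cap t$, then $\varphi(x)\in\varphi(s)\cap u=\emptyset$, which is impossible. So $s\cap t=\emptyset$.
\end{itemize}
Hence $u=\varphi(t)$ is a face of $\varphi(\text{Lk}(K,s))$.

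\textbf{Main obstacle.} The substantive step is the recovery $r=s\cup t$ in the reverse inclusion. This is exactly where injectivity on vertices, rather than global injectivity, must suffice. It is also where the containment $s\subseteq V(K)$ is indispensable. The remaining steps are routine finset bookkeeping: images of unions, and membership in images.
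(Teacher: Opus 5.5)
Your proof is correct, and it takes the same direct route the paper describes for these identities: unfold both sides and prove two inclusions, with the one substantive step being the recovery of $s \cup t$ from $\varphi(s \cup t)$ via injectivity on $V(K)$ (the paper gives no written proof of this proposition, only the Lean lemma name). Your caveat is also justified: as printed, the statement needs a hypothesis such as $s \subseteq V(K)$ (or $s \in K$), because for $K$ the full simplex on $\{1,2\}$, $\varphi(0)=\varphi(1)=1$ with $\varphi$ the identity elsewhere, and $s=\{0\}$, the left side is empty while the right side is $\{\{2\}\}$.
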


Further, we can derive some useful identities for links of subfaces.
\begin{proposition}[{\small\texttt{link_of_face_complement}}]
  If $t \subseteq s$, then
  \[
    Lk(K, s) = Lk(Lk(K, t), s \setminus t).
  \]
\end{proposition}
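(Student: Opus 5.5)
The plan is to prove the equality by extensionality on face sets. Both sides are defined by \lstinline|Link|, so we show that for every finite set $u$, $u \in \text{Lk}(K,s)$ if and only if $u \in \text{Lk}(\text{Lk}(K,t), s\setminus t)$. We then conclude by the structure extensionality of \lstinline|AbstractSimplicialComplex|, since the two proof fields are propositions.

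First unfold both sides. The left side says
\[
u \in K,\quad s\cup u \in K,\quad s\cap u=\emptyset .
\]
The right side says $u\in \text{Lk}(K,t)$, $(s\setminus t)\cup u \in \text{Lk}(K,t)$ and $(s\setminus t)\cap u=\emptyset$. Unfolding once more, it becomes
\[
u\in K,\ t\cup u\in K,\ t\cap u=\emptyset,\ (s\setminus t)\cup u\in K,\ t\cup\bigl((s\setminus t)\cup u\bigr)\in K,\ t\cap\bigl((s\setminus t)\cup u\bigr)=\emptyset,\ (s\setminus t)\cap u=\emptyset.
\]
The key set-theoretic facts, all valid since $t\subseteq s$, are:
\begin{itemize}
\item $t\cup\bigl((s\setminus t)\cup u\bigr)=s\cup u$;
\item $t\cap (s\setminus t)=\emptyset$;
\item $s\cap u=(t\cap u)\cup\bigl((s\setminus t)\cap u\bigr)$.
\end{itemize}
In Lean these should close with \lstinline|Finset.union_sdiff_of_subset|-style lemmas together with \lstinline|ext|/\lstinline|simp|/\lstinline|tauto|.

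For the backward direction, the fifth condition gives $s\cup u\in K$ via the first identity. The third and seventh conditions give $s\cap u=\emptyset$ via the third identity.

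For the forward direction, the disjointness conditions follow from $s\cap u=\emptyset$ together with $t\cap(s\setminus t)=\emptyset$. The conditions $t\cup u\in K$ and $(s\setminus t)\cup u\in K$ follow from $s\cup u\in K$ by \lstinline|down_closed|, since both are subsets of $s\cup u$. The nonemptiness side-condition of \lstinline|down_closed| holds because $u\in K$ forces $u\neq\emptyset$ by \lstinline|empty_notMem|, so any superset of $u$ is nonempty. The fifth condition is $s\cup u\in K$ itself, after rewriting with the first identity.

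I expect no real mathematical obstacle; the main friction is bookkeeping. The degenerate cases $s=t$ (so $s\setminus t=\emptyset$) and $t=\emptyset$ need no separate treatment in the argument above, but it is worth checking that nothing silently assumes $s\setminus t\neq\emptyset$. The step most likely to be fiddly in the formalization is supplying the nonemptiness hypotheses for \lstinline|down_closed|. I would factor this out as a small helper lemma: if $a\in K$, $a\subseteq b$, $c\subseteq b$ and $b\in K$, then $a\cup c\in K$.
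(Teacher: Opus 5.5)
Your proposal is correct and follows essentially the same route as the paper, which proves such identities by directly unfolding both link definitions and checking membership. The rewrite $t \cup (s \setminus t) = s$ (valid since $t \subseteq s$) handles the union and intersection conditions, and down-closure, with nonemptiness supplied by $u \in K$, gives $t \cup u \in K$ and $(s \setminus t) \cup u \in K$; as you note, neither $s \in K$ nor $s \setminus t \neq \emptyset$ is needed.
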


\begin{figure}[t]
    \centering
    \includegraphics[width=0.8\linewidth]{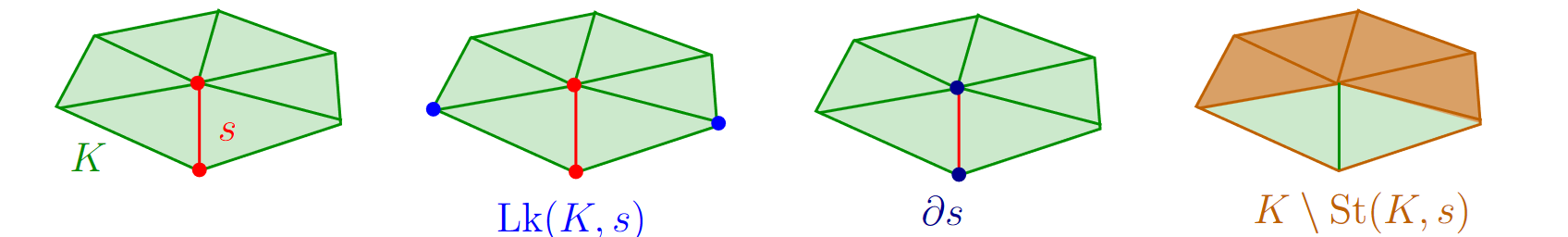}
    \caption{Examples of subcomplexes of $K$ formed by operations on the face $s$. From left to right: (a) the link $\text{Lk}(K, s)$, (b) the face boundary $\partial s$, (c) the star complement $K \setminus \text{St}(K, s)$.}
    \label{fig:subcomplexes}
\end{figure}

\subsubsection*{Face Boundaries}

The \emph{boundary} of an $n$-face $s$, denoted $\partial s$, is the simplicial complex formed by all the $(n - 1)$-faces of $s$. Figure~\ref{fig:subcomplexes}b shows an example geometrically. Described formally,
\begin{lstlisting}
def FaceBoundary (s : Finset E) := ⟨(Finset.powerset s \ {s, ∅}), _, _⟩
\end{lstlisting}

The boundary has useful qualities for studying individual faces and their subfaces. Results on monotonicity properties and simplicial maps are straightforward to obtain, such as preservation under isomorphisms and commuting with coercions.

\begin{proposition}[{\small\texttt{faceBoundary_mem_iff_subset}}]
  $t \in \partial s \iff t \subset s \land t \neq \emptyset$.
\end{proposition}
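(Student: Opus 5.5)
The plan is to unfold the definition of \lstinline|FaceBoundary| and reduce the claim to an elementary statement about finite sets. By construction, the faces of $\partial s$ are $\mathcal{P}(s) \setminus \{s, \emptyset\}$. Membership $t \in \partial s$ therefore unfolds, via \lstinline|Finset.mem_sdiff|, \lstinline|Finset.mem_powerset| and \lstinline|Finset.mem_insert|/\lstinline|Finset.mem_singleton|, to the conjunction
\[
  t \subseteq s \;\land\; \neg(t = s \lor t = \emptyset).
\]
After pushing the negation inside, this becomes $t \subseteq s \land t \neq s \land t \neq \emptyset$.

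Next I would rewrite strict inclusion using the standard characterization $t \subset s \iff t \subseteq s \land t \neq s$. In Lean this is \lstinline|Finset.ssubset_iff_subset_ne|, or \lstinline|HasSubset.Subset.ssubset_of_ne| in one direction and \lstinline|ne_of_ssubset| with \lstinline|subset_of_ssubset| in the other. Substituting this characterization makes the two sides agree up to reassociating the conjunctions.

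Concretely, I would prove the equivalence by two implications. For the forward direction, destructure the membership hypothesis into $t \subseteq s$, $t \neq s$ and $t \neq \emptyset$, and assemble $t \subset s$ from the first two. For the backward direction, take $t \subset s$ and $t \neq \emptyset$. From $t \subset s$ extract $t \subseteq s$ and $t \neq s$, and the membership then follows by the same unfolding. A single \lstinline|simp| call with the membership lemmas together with \lstinline|Finset.ssubset_iff_subset_ne| should close the goal, with \lstinline|tauto| finishing any leftover propositional reshuffling.

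There is no real mathematical obstacle here. The only step needing care is the way \lstinline|FaceBoundary| is packaged: it is an anonymous constructor whose \lstinline|faces| field is a \lstinline|Finset| coerced to a \lstinline|Set|. The first rewrite must therefore pass through \lstinline|Finset.mem_coe| before the \lstinline|Finset| membership lemmas apply. Once that coercion is handled, the rest is purely propositional.
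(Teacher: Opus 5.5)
Your proposal is correct and is essentially the paper's argument. The paper states this lemma without a written proof, and your unfolding of $\partial s = \mathcal{P}(s) \setminus \{s, \emptyset\}$ combined with $t \subset s \iff t \subseteq s \land t \neq s$ is exactly the definitional argument it relies on; whether the coercion to \texttt{Set} sits outside or inside the set difference only changes which membership lemmas \texttt{simp} needs.
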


\begin{proposition}[{\small\texttt{faceBoundary_subface_subcomplex}}]
  If $s \subseteq t$, then $\partial s \subseteq \partial t$.
\end{proposition}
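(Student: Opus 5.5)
The plan is to prove the inclusion directly from the characterization in the preceding proposition (\texttt{faceBoundary\_mem\_iff\_subset}), which reduces the statement to a set-theoretic fact about finite sets. Recall that $\partial s \subseteq \partial t$ unfolds, via \lstinline|IsSubcomplex|, to the inclusion of face sets. So we take an arbitrary face $u \in \partial s$ and aim to show $u \in \partial t$.

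By the characterization, $u \in \partial s$ gives $u \subset s$ (strict inclusion) and $u \neq \emptyset$. Since $s \subseteq t$, transitivity of a strict inclusion followed by a non-strict one yields $u \subset t$ (in Lean, \lstinline|ssubset_of_ssubset_of_subset|). Nonemptiness of $u$ is unchanged. Applying the characterization in the reverse direction then gives $u \in \partial t$.

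There is no real obstacle here. The only point needing care is that strictness survives the composition $u \subset s \subseteq t$. If $s = t$, this is immediate. If $s \subset t$, it holds \emph{a fortiori}. The transitivity lemma covers both cases uniformly, so no case split is needed. If one prefers to work from the raw definition $\mathcal{P}(s) \setminus \{s, \emptyset\}$ instead, the same argument goes through: $u \subseteq s \subseteq t$ puts $u$ in $\mathcal{P}(t)$, $u \neq \emptyset$ is inherited, and $u \neq t$ holds because $u = t$ would force $t \subseteq s$, hence $s = t = u$, contradicting $u \neq s$.
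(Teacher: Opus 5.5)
Your proof is correct and takes essentially the paper's approach. The paper gives no written proof beyond calling such monotonicity results straightforward, and unfolding membership via $u \in \partial s \iff u \subset s \land u \neq \emptyset$, then composing $u \subset s \subseteq t$, is exactly that elementary argument; your fallback check against the raw definition $\mathcal{P}(s)\setminus\{s,\emptyset\}$ is also sound.
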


\begin{proposition}[{\small\texttt{faceBoundary_simplicialIso}}]
  If $s \cong t$, then $\partial s \cong \partial t$.
\end{proposition}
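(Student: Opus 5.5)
We first make precise what $s \cong t$ means here. Since $s$ and $t$ are finite sets rather than complexes, we read the hypothesis as: there is a function $f : E \to F$ that restricts to a bijection $s \to t$. Equivalently, $f$ is a simplicial isomorphism between the full simplices $\mathcal{P}(s)\setminus\{\emptyset\}$ and $\mathcal{P}(t)\setminus\{\emptyset\}$, witnessed by an inverse $g$ on $t$. The plan is to show that the same pair $(f,g)$ already gives a \lstinline|SimplicialIso| between $\partial s$ and $\partial t$. Nothing new needs to be constructed; we only check the axioms.

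\emph{Simpliciality.} Take $u \in \partial s$. By \lstinline|faceBoundary_mem_iff_subset|, $u \subset s$ and $u \neq \emptyset$. Then $f(u) \subseteq f(s) = t$, and $f(u) \neq \emptyset$ because $u$ is nonempty. For strictness, note that $f$ is injective on $s$. Hence $|f(u)| = |u| < |s| = |t|$, so $f(u) \neq t$. Applying \lstinline|faceBoundary_mem_iff_subset| again gives $f(u) \in \partial t$. The argument for $g$ is symmetric, using injectivity of $g$ on $t$.

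\emph{Inverse conditions.} These are only required on vertices. Every vertex of $\partial s$ lies in $s$, because $V(\partial s) \subseteq \bigcup_{u\in\partial s} u \subseteq s$ by the earlier description of $V(K)$ as a union of faces. So $g(f(x)) = x$ for each vertex $x$ follows directly from the hypothesis on $s$. The same holds for $t$.

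\emph{Main obstacle.} The only nontrivial step is strictness: showing that a proper subface is not sent onto all of $t$. We handle it with the cardinality argument, via \lstinline|Finset.card_image_of_injOn| and the equality $|s|=|t|$ that comes from the bijection. One degenerate case needs a remark. If $|s| = 1$, then $\partial s = \bot = \partial t$, and the statement holds trivially. In Lean, a convenient route is to phrase the boundary as the restriction of the full-simplex isomorphism to a subcomplex (\lstinline|SimplicialMap.restrict|). We still have to check that the image lands in $\partial t$, which is exactly the strictness argument above.
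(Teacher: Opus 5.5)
Your proposal is correct and is the direct check the paper has in mind; the paper gives no written proof and only calls such preservation results straightforward. You reuse the given pair $(f,g)$, use injectivity on $s$ together with $|s|=|t|$ to send nonempty proper subfaces of $s$ into $\partial t$ (and symmetrically for $g$), and note that the inverse conditions descend because $V(\partial s)\subseteq s$. Your separate treatment of $|s|=1$ is unnecessary, since the general argument already covers it, and writing $\partial s=\bot=\partial t$ is loose because the two complexes live over different types $E$ and $F$.
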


\begin{proposition}[{\small\texttt{faceBoundary_simplicialCoe_image}}]
  If $\varphi$ is injective on $s$, then
  \[
    \partial (\varphi (s)) = \varphi (\partial s).
  \]
\end{proposition}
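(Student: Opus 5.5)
We prove the equality $\partial(\varphi(s)) = \varphi(\partial s)$ by extensionality on faces. Unfolding the definitions, a face of $\varphi(\partial s)$ has the form $\varphi(t)$ for some $t \in \partial s$. The image is taken as a \lstinline|SimplicialImage|, so its faces are exactly the images of faces. By \texttt{faceBoundary\_mem\_iff\_subset}, it therefore suffices to show, for every finite set $u \subseteq F$,
\[
  \bigl(u \subset \varphi(s) \land u \neq \emptyset\bigr) \iff \exists t,\; t \subset s \land t \neq \emptyset \land \varphi(t) = u .
\]
We also need the degenerate case $s = \emptyset$, where both sides are the empty complex; it is handled by the same argument.

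For the backward direction, take $t \subset s$ with $t \neq \emptyset$. Then $\varphi(t) \neq \emptyset$ because images of nonempty sets are nonempty, and $\varphi(t) \subseteq \varphi(s)$ by monotonicity of \lstinline|Finset.image|. The content is strictness. Pick $x \in s \setminus t$. If $\varphi(x) \in \varphi(t)$, then $\varphi(x) = \varphi(y)$ for some $y \in t \subseteq s$, and injectivity of $\varphi$ on $s$ gives $x = y \in t$, a contradiction. So $\varphi(x) \in \varphi(s) \setminus \varphi(t)$, and the inclusion is strict.

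For the forward direction, given $u \subset \varphi(s)$ nonempty, we take the preimage $t := \{x \in s \mid \varphi(x) \in u\}$, realised as \lstinline|Finset.filter| on $s$. Then $\varphi(t) = u$: the inclusion $\subseteq$ is by construction, and $\supseteq$ holds because every element of $u$ lies in $\varphi(s)$ and so has a preimage in $s$, which passes the filter. Nonemptiness of $t$ follows from $u \neq \emptyset$. Strictness $t \subset s$ follows since $t = s$ would give $u = \varphi(s)$. This direction does not need injectivity.

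The main obstacle is bookkeeping rather than mathematics. The hypothesis of a simplicial coercion is injectivity on vertices (\lstinline|Set.InjOn|), and we must transport it to injectivity on the coerced set $s$. Here $s$ is the vertex set of $\partial s$ only when $|s| \ge 2$, since $\partial$ of a vertex is empty, so we state the hypothesis directly as \lstinline|Set.InjOn φ ↑s|, as the proposition does. We also need to unfold \lstinline|SimplicialImage| so that its face set reads as an image set. I would first prove the bare \lstinline|Finset| lemma that $\mathrm{image}\,\varphi$ preserves and reflects strict inclusion when injective on $s$, probably via \lstinline|Finset.image_ssubset_image| or its \lstinline|InjOn| variant, and then close the main goal with \lstinline|ext| and the two directions above.
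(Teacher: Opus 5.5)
Your proof is correct and matches the paper's route: the paper gives no written proof, recording the identity as straightforward from unfolding the face sets of $\partial$ and the image complex and checking both inclusions. That is your extensionality argument, with injectivity on $s$ used only to keep $\varphi(t) \subset \varphi(s)$ strict and the filtered preimage giving the reverse inclusion. Your remark on the hypothesis is also right, and slightly more is true: both sides are empty when $|s| \le 1$, so injectivity on $V(\partial s)$ alone would already suffice.
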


\subsubsection*{Stars \& Star Complements}

The \emph{star complement} of $K$ with respect to $s$ is usually defined as the complement of the interior of the \emph{star} around $s$. Intuitively, the star around $s$ represents a ``neighborhood'' with all faces that contain $s$, denoted by $\text{St}(K, s)$. Thus, the star complement takes $K$ and removes the ``interior'' of the star, denoted by $K \setminus \text{St}(K, s)$. Figure~\ref{fig:subcomplexes}c shows a visual example of the latter.

Regarding first the star around $s$, we can define it formally as:
\begin{lstlisting}
def StarNeighborhood (X : AbstractSimplicialComplex E) (s : Finset E) :=
    ⟨{t ∈ X.faces | s ∪ t ∈ X.faces}, _, _⟩
\end{lstlisting}

As in the case of links, we can provide conditions under which isomorphisms descend to stars.

\begin{proposition}[{\small\texttt{star_simplicialIso}}]
  Let $f : E \to F$ be an isomorphism between $K$ and $L$. Then,
  \[
    \text{St}(K, s) \cong \text{St}(L, f(s)).
  \]
\end{proposition}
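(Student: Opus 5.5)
The plan is to reuse the isomorphism data for $K \cong L$ directly: let $f$ be the given simplicial isomorphism and $g$ its inverse. I would show that the \emph{same} underlying functions restrict to mutually inverse simplicial maps between $\text{St}(K, s)$ and $\text{St}(L, f(s))$. The argument mirrors the one for \texttt{link\_simplicialIso}, minus the disjointness condition. Throughout I assume $s \in K$; if $s \notin K$, the star is empty, and I would treat that case separately below.

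\textbf{Step 1: $f$ maps $\text{St}(K,s)$ into $\text{St}(L,f(s))$.} Take $t \in \text{St}(K,s)$, so $t \in K$ and $s \cup t \in K$. Since $f$ is simplicial, $f(t) \in L$. Since image distributes over union, $f(s \cup t) = f(s) \cup f(t)$, and this lies in $L$. Hence $f(t) \in \text{St}(L, f(s))$.

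\textbf{Step 2: $g$ maps $\text{St}(L,f(s))$ into $\text{St}(K,s)$.} Take $u \in L$ with $f(s) \cup u \in L$. Then $g(u) \in K$ and $g(f(s)) \cup g(u) \in K$. The key point is that $g(f(s)) = s$. This holds because $s \in K$ forces every element of $s$ to be a vertex of $K$ (by the earlier proposition $V(K) = \bigcup_{t\in K} t$), and $g \circ f$ is the identity on $V(K)$. So $s \cup g(u) \in K$, as required.

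\textbf{Step 3: inverse conditions.} These must hold on the vertex sets of the stars. The vertices of $\text{St}(K,s)$ are contained in $V(K)$, because stars are subcomplexes. Likewise the vertices of $\text{St}(L,f(s))$ are contained in $V(L)$. So both inverse conditions follow by restriction (cf.\ \texttt{SimplicialMap.restrict}).

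\textbf{Main obstacle: the degenerate case $s \notin K$.} Then $\text{St}(K,s)$ is empty, but $f(s)$ might still be a face of $L$ when $f$ behaves oddly off $V(K)$. For example, if $s \not\subseteq V(K)$, the junk values of $f$ could land $f(s)$ on a face of $L$.

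To handle this I would try, in order:
\begin{enumerate}
\item Show $f(s) \in L \Rightarrow s \in K$ whenever $s \subseteq V(K)$. This follows by applying $g$: we get $g(f(s)) = s \in K$.
\item Otherwise, check whether the formal statement carries a hypothesis $s \in K$ (or $s \subseteq V(K)$), and adopt that hypothesis.
\end{enumerate}
The remaining bookkeeping, namely the image/union distributivity and the vertex-containment lemmas, is routine.
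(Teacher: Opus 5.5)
The paper states this proposition without proof (it gives only the Lean name), so there is nothing to compare against. Your argument, restricting the same $f$ and $g$ to the stars, is the natural one, and Steps 1--3 are correct. Your treatment of the degenerate case needs two corrections.

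\textbf{The hypothesis on $s$ is necessary.} What you call the main obstacle is not a technicality to work around: the statement as printed is false there. Take $E=F=\mathbb{N}$, let $K=L$ be the complex whose only face is $\{0\}$, and let $f=g$ be the constant map $0$. This is a simplicial isomorphism, because the inverse conditions are only checked on $V(K)=V(L)=\{0\}$. For $s=\{1\}$ we get $\text{St}(K,s)=\bot$. But $f(s)=\{0\}$ and $\text{St}(L,\{0\})=L$, and there is no simplicial map $L\to\bot$, so the two stars are not isomorphic. Your fallback (2) is therefore mandatory, not optional. You must assume a hypothesis such as $s\in K$, or more generally $s\subseteq V(K)$. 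The printed statement omits it; presumably the formal version carries it.

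\textbf{No case split is needed.} Once you assume $s\subseteq V(K)$, Steps 1--3 prove the claim uniformly, including when $s\notin K$. Step 1 uses nothing about $s$, Step 3 uses nothing about $s$, and Step 2 uses only $g(f(s))=s$, which follows directly from $s\subseteq V(K)$. This matters because your assertion ``if $s\notin K$, the star is empty'' is false for $s=\emptyset$. We have $\emptyset\notin K$ by \texttt{empty\_notMem}, yet $\text{St}(K,\emptyset)=K$ and $\text{St}(L,f(\emptyset))=\text{St}(L,\emptyset)=L$. The emptiness argument via downward closure needs $s\neq\emptyset$. So your separate degenerate branch would wrongly conclude that $\text{St}(L,f(\emptyset))$ is empty. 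Steps 1--3, by contrast, give $K\cong L$ in this case directly.
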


Regarding the star complement, we can define it directly for convenience.

\begin{lstlisting}
def StarComplement (K : AbstractSimplicialComplex E) (s : Finset E) :=
  ⟨{t ∈ K.faces | ¬s ⊆ t}, _, _⟩
\end{lstlisting}

Star complements enjoy their own variations of isomorphism and coercion preservation results.

\begin{proposition}[{\small\texttt{starComplement_simplicialIso}}]
  Let $f : E \to F$ be an isomorphism between $K$ and $L$. Then,
  \[
    K \setminus \text{St}(K, s) \cong L \setminus \text{St}(L, f(s)).
  \]
\end{proposition}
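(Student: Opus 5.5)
The plan is to reuse the isomorphism data directly. Let $g : F \to E$ be the inverse simplicial map, with $g(f(x)) = x$ on $V(K)$ and $f(g(y)) = y$ on $V(L)$. I would show that $f$ restricts to a simplicial map from $K \setminus \text{St}(K, s)$ to $L \setminus \text{St}(L, f(s))$, and that $g$ restricts to one in the opposite direction. The inverse identities then transfer unchanged, because the vertex set of each star complement is contained in the vertex set of the ambient complex. This reduces the statement to two face-level claims.

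\textbf{Step 1: $f$ maps star-complement faces to star-complement faces.} Let $t \in K$ with $s \not\subseteq t$. Since $f$ is simplicial, $f(t) \in L$, so it remains to show $f(s) \not\subseteq f(t)$. Suppose instead $f(s) \subseteq f(t)$, and split on whether $s \subseteq V(K)$.
\begin{itemize}
  \item If $s \subseteq V(K)$, apply $g$ to both sides to get $g(f(s)) \subseteq g(f(t))$. Since $g \circ f$ is the identity on $V(K)$ and $t \subseteq V(K)$, this gives $s \subseteq t$, a contradiction.
  \item If $s \not\subseteq V(K)$, the argument above fails, because $f(s)$ may land inside $f(t)$ through junk values.
\end{itemize}

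\textbf{Step 2: $g$ maps star-complement faces back.} Let $u \in L$ with $f(s) \not\subseteq u$. We need $s \not\subseteq g(u)$. If $s \subseteq g(u)$, then $f(s) \subseteq f(g(u)) = u$, using that $f \circ g$ is the identity on $V(L)$ and $u \subseteq V(L)$. This contradicts the assumption, so $g(u) \in K \setminus \text{St}(K, s)$. This direction needs no hypothesis on $s$.

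The main obstacle is the degenerate case $s \not\subseteq V(K)$ in Step 1. Here $K \setminus \text{St}(K, s) = K$, since no face of $K$ contains $s$. The target $L \setminus \text{St}(L, f(s))$, however, could be strictly smaller than $L$.

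The first fix to try is to check whether the formal statement implicitly assumes $s \in K$, or at least $s \subseteq V(K)$; in that case the degenerate case simply disappears. If no such hypothesis is available, I would handle $s \not\subseteq V(K)$ separately:
\begin{itemize}
  \item If also $f(s) \not\subseteq V(L)$, both star complements equal the full complexes, and the original isomorphism $K \cong L$ gives the result.
  \item Otherwise I would look for a counterexample. If one exists, I would add the hypothesis $s \subseteq V(K)$, which is presumably what the Lean statement carries.
\end{itemize}

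Once these face-level claims are settled, assembling the isomorphism is routine. Package the restricted $f$ and $g$ as \lstinline|SimplicialMap|s, as in \lstinline|SimplicialMap.restrict|. Then inherit the two inverse equations on vertices, using the inclusion of the star complement's vertices into $V(K)$ and $V(L)$ respectively.
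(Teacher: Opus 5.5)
Your argument is correct once $s \subseteq V(K)$ is assumed, and it is the natural proof. You restrict $f$ and its inverse $g$ to the two star complements and check that faces go to faces, exactly as in your Steps 1 and 2. The inverse identities are then inherited, because each star complement is a subcomplex, so its vertex set lies in $V(K)$, resp.\ $V(L)$. The paper gives only the Lean name, so I cannot compare against a written argument, and nothing in the paper points to a different strategy.

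The branch you left open resolves negatively. The hypothesis is genuinely needed, so the paper's informal statement must be suppressing something like $s \in K$; your weaker $s \subseteq V(K)$ is all that Step 1 uses. Here is a counterexample:
\begin{itemize}
  \item Take $E = F = \mathbb{N}$ and $K = L = \{\{0\}\}$.
  \item Let $f$ send $1 \mapsto 0$ and fix every other number, and let $g = \mathrm{id}$.
  \item Both maps are simplicial, and both composites fix the only vertex $0$, so $f$ is an isomorphism.
  \item For $s = \{1\}$ you get $K \setminus \text{St}(K, s) = \{\{0\}\}$. But $f(s) = \{0\}$, so $L \setminus \text{St}(L, f(s)) = \emptyset$.
  \item There is no simplicial map from a complex with a face into the empty complex, so the two star complements are not isomorphic.
\end{itemize}
This is exactly your problematic case: $s \not\subseteq V(K)$, while $f(s)$, fixed by a junk value of $f$, lies in a face of $L$. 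With the hypothesis added, your proof is complete.
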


\begin{proposition}[{\small\texttt{starComplement_simplicialCoe_image}}]
  Let $\varphi : E \to F$ be a simplicial coercion on $K$. Then,
  \[
    \varphi(K \setminus \text{St}(K, s)) = \varphi(K) \setminus \text{St}(\varphi(K), \varphi(s)).
  \]
\end{proposition}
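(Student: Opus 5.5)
We prove the equality of simplicial complexes by extensionality on faces: for every finite set $u \subseteq F$ we show $u \in \varphi(K \setminus \text{St}(K, s))$ if and only if $u \in \varphi(K) \setminus \text{St}(\varphi(K), \varphi(s))$. Unfolding the definitions, the left side says there is $t \in K$ with $\neg(s \subseteq t)$ and $\varphi(t) = u$. The right side says there is $t \in K$ with $\varphi(t) = u$ and $\neg(\varphi(s) \subseteq u)$. So the whole statement reduces to the following key lemma. For every face $t \in K$,
\[
  s \subseteq t \iff \varphi(s) \subseteq \varphi(t).
\]
After that, both directions follow by carrying the same witness $t$ across.

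The forward implication of the key lemma is just monotonicity of images (\lstinline|Finset.image_subset_image|). For the converse, fix $x \in s$. Since $\varphi(x) \in \varphi(t)$, there is some $y \in t$ with $\varphi(y) = \varphi(x)$. We have $y \in V(K)$ because $t$ is a face of $K$, using the characterization $V(K) = \bigcup_{t \in K} t$ from the earlier proposition. If $x \in V(K)$ as well, the injectivity of $\varphi$ on $V(K)$ from the \lstinline|SimplicialCoe| structure gives $x = y \in t$.

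The main obstacle is exactly that $s$ is an arbitrary finite set, not assumed to be a face of $K$. So $x$ need not be a vertex, and injectivity cannot be applied. For example, $\varphi$ could send a non-vertex $x \in s$ onto the image of a vertex in $t$. Then $\varphi(s) \subseteq \varphi(t)$ while $s \not\subseteq t$, and the right side would lose a face that the left side keeps.

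I would first try to close this gap by a case split. If $s \subseteq V(K)$, the argument above goes through. Otherwise, no face $t \in K$ contains $s$, so the star complement is all of $K$ and the left side equals $\varphi(K)$. The right side then needs $\neg(\varphi(s) \subseteq \varphi(t))$ for every face $t$, and this is where the counterexample above bites. I therefore expect that the formal statement either carries a hypothesis that $s \in K$ (or $s \subseteq V(K)$), or that the coercion is injective on $V(K) \cup s$. Mirroring the face-boundary result, which assumes $\varphi$ injective on $s$, I would add whichever of these hypotheses the Lean statement has. The second case then either disappears or is handled by that injectivity. The rest is routine unfolding of \lstinline|Finset.mem_image|.
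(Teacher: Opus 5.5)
Your argument is correct and matches the direct definition-unfolding the paper uses for identities of this kind; the paper writes out no proof for this one and gives only the Lean name. The route is extensionality on faces, reduced to the key equivalence $s \subseteq t \iff \varphi(s) \subseteq \varphi(t)$ for $t \in K$, which follows from monotonicity of images and injectivity of $\varphi$ on $V(K)$.

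You are also right that the statement as printed needs an extra hypothesis: $s \in K$, $s \subseteq V(K)$, or injectivity of $\varphi$ on $V(K) \cup s$. For a concrete counterexample, take $K = \{\{0\}\}$ over $\mathbb{N}$, $s = \{1\}$ and $\varphi \equiv 0$. Then the left side is $\{\{0\}\}$, while $\varphi(K) \setminus \text{St}(\varphi(K), \{0\})$ is empty.
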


\subsubsection*{Cones}

The \emph{cone} over a simplicial complex $K$ is defined as the join (to be defined in the following subsection) of $K$ with an additional point $x$, denoted by $\text{Cone}(K, x)$. Figure~\ref{fig:join} shows a visual example.

\begin{lstlisting}
def Cone
    (K : AbstractSimplicialComplex E) (x : E)
    (x_nin_K : x ∉ K.vertices) :=
  (Simplex {x}) ⋆ K
\end{lstlisting}
Here, \lstinline|Simplex| converts a finite set $s$ into the simplicial complex given by its powerset.
\begin{lstlisting}
def Simplex (s : Finset E) := ⟨s.powerset.toSet \ {∅}, _, _⟩
\end{lstlisting}

Cones are preserved under isomorphisms and increase the dimension of a simplicial complex by 1.
\begin{proposition}[{\small\texttt{cone_simplicialIso}}, {\small\texttt{cone_dim}}]\phantom{text}
  \begin{itemize}
    \item Let $f : E \to F$ be an isomorphism between $K$ and $L$. Then,
    \[
      \text{Cone}(K, x) \cong \text{Cone}(L, y)
    \]
    where $x \notin V(K)$ and $y \notin V(L)$.
    \item $\dim \text{Cone}(K, x) = \dim K + 1$.
  \end{itemize}
\end{proposition}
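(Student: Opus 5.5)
The plan is to treat the two items separately, since they have different flavours: the first is a construction of an explicit simplicial isomorphism, the second is a computation with \lstinline|Finset.max'|. Throughout, I unfold $\text{Cone}(K,x) = \text{Simplex}\{x\} \star K$. Since the join has not yet been defined at this point, I take it to have the expected faces: the faces of $K$, together with $\{x\}$, together with all $\{x\} \cup s$ for $s \in K$. I assume the hypothesis $x \notin V(K)$ guarantees that these unions are disjoint, so no typecasting is needed.

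For the isomorphism statement, let $f$ (with inverse $g$) be the given isomorphism $K \cong L$. I define $\hat f : E \to F$ by $\hat f(x) = y$ and $\hat f(v) = f(v)$ for $v \neq x$, and dually $\hat g(y) = x$, $\hat g(w) = g(w)$ for $w \neq y$. This needs decidable equality on $E$ and $F$, or a classical \lstinline|if|. Simpliciality of $\hat f$ follows by case analysis on a face $t$ of the cone:
\begin{itemize}
  \item if $t \in K$, then $x \notin t$ (since $x \notin V(K)$), so $\hat f(t) = f(t) \in L$;
  \item if $t = \{x\}$, then $\hat f(t) = \{y\}$;
  \item if $t = \{x\} \cup s$ with $s \in K$, then $\hat f(t) = \{y\} \cup f(s)$, which is a face of $\text{Cone}(L,y)$.
\end{itemize}
The same argument applies to $\hat g$. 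The inverse conditions on vertices are checked by splitting on whether a vertex equals $x$ (respectively $y$). Vertices of the cone are $V(K) \cup \{x\}$, and on $V(K)$ we have $v \neq x$ and $f(v) \in V(L)$, so $f(v) \neq y$; hence $\hat g(\hat f(v)) = g(f(v)) = v$.

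The main obstacle I expect here is bookkeeping rather than mathematics. The cone's face set is phrased through the join, which may go through a sum type or a coercion. If so, I would first prove a characterization lemma for the faces of $\text{Cone}(K,x)$ in the three-case form above and rewrite with it, keeping the isomorphism argument itself elementary.

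For the dimension statement, I split on whether $K$ is empty. If $K = \bot$, the cone is the single vertex $\{x\}$, and its dimension is $0 = -1 + 1$. Otherwise, I use the characterization of faces to show that the multiset of face dimensions of the cone consists of:
\begin{itemize}
  \item $\dim s$ for $s \in K$;
  \item $0$, for the face $\{x\}$;
  \item $\dim s + 1$ for $s \in K$, since $|\{x\} \cup s| = |s| + 1$ by disjointness.
\end{itemize}
The maximum is then attained by $\{x\} \cup s_0$, where $s_0$ is a face of $K$ of maximal dimension. For the two inequalities I would use \lstinline|Finset.max'_le| and \lstinline|Finset.le_max'|. The upper bound is immediate, because every term listed above is at most $\dim K + 1$; this uses that $\dim K \ge -1$ handles the $0$ term. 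The lower bound comes from exhibiting $\{x\} \cup s_0$. The finiteness instance on the cone's faces must be derived from that on $K$, as an image of a finite set union a singleton. That instance mismatch, if \lstinline|dim| is sensitive to the chosen \lstinline|Fintype| instance, is the step I would handle first via a \lstinline|Subsingleton| or congruence argument.
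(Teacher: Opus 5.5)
Your argument is sound, but it is a direct, hands-on proof where the paper reduces everything to general join lemmas. It also rests on a misreading of the join. In the paper the join always uses the tagged union $s \times \{0\} \cup t \times \{1\}$ (a product with $\mathbb{k}$, not a sum type). So $\text{Cone}(K,x)$ lives over $E \times \mathbb{k}$ whether or not $x \in V(K)$. Freshness does not remove the typecasting; it only makes $p_1$ injective on vertices.

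Your three-case complex over $E$ is really $p_1(\text{Cone}(K,x))$. The cone itself has faces $s \times \{1\}$, $\{(x,0)\}$ and $\{(x,0)\} \cup s \times \{1\}$ for $s \in K$. Your construction transfers unchanged via $(v,i) \mapsto (\hat f(v), i)$ and $(w,i) \mapsto (\hat g(w), i)$, so this is a repair, not a gap. The dimension count is likewise unaffected, since the tags keep $\{(x,0)\}$ disjoint from $s \times \{1\}$.

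The paper instead gets both items almost for free from the fact that a cone is a join. Applying \texttt{simplicialJoin\_simplicialIso} to $\text{Simplex}\{x\} \cong \text{Simplex}\{y\}$ (via any map with $x \mapsto y$) and $K \cong L$ gives the first item. Applying \texttt{dim\_of\_join} with $\dim \text{Simplex}\{x\} = 0$ gives $\dim K + 1$. That route needs no case split on $K = \bot$, and it builds the \texttt{Fintype} instance once inside the join lemma. It also shows that the hypotheses $x \notin V(K)$, $y \notin V(L)$ are irrelevant to the isomorphism, because the tags already separate $x$ from $V(K)$. Your proof, by contrast, genuinely uses them, so that $\hat f$ agrees with $f$ on $V(K)$ and $f(V(K))$ avoids $y$. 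What your route buys is self-containment, since the join lemmas are only stated afterwards, together with an explicit isomorphism and an explicit face $\{(x,0)\} \cup s_0 \times \{1\}$ realising the dimension.
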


\subsubsection*{Disjoint Unions \& Joins}

Finally, we dedicate considerable attention to the \emph{join} of two simplicial complexes. Geometrically, we consider the join of $K$ and $L$, denoted $K \star L$, as the simplicial complex obtained by fully connecting the faces of $K$ with those in $L$ (cf. Figure~\ref{fig:join}). Described as a set, we consider all pairs $s \in K$ and $t \in L$, then take their disjoint unions. Hence, we first describe an appropriate notion of disjoint union.

\begin{figure}
    \centering
    \includegraphics[width=0.8\linewidth]{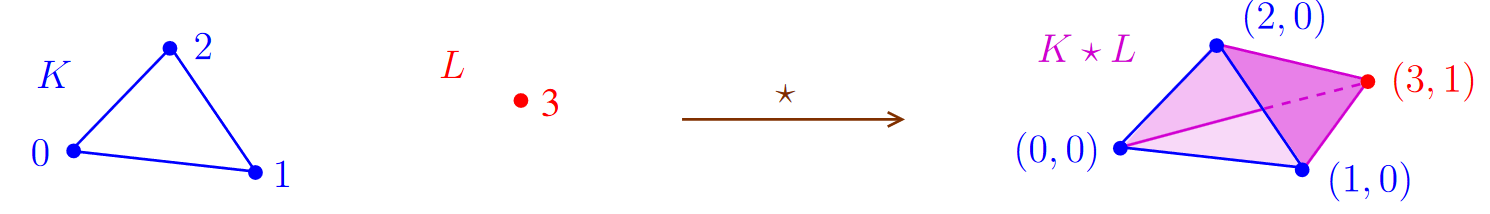}
    \caption{Example of the join of two simplicial complexes, $K$ and $L$. In fact, $K \star L \cong \text{Cone}(K, 3)$.}
    \label{fig:join}
\end{figure}

While \lstinline|mathlib| does include a definition of disjoint union for \lstinline|Finset|, it requires a proof that $s$ and $t$ are disjoint, then returns $s \cup t$. However, it is reasonable to expect that faces of $K$ and $L$ are not always disjoint. We use the common trick of creating disjoint sets of pairs. Stated formally in Lean:
\begin{lstlisting}
variable (𝕜 E) [DecidableEq 𝕜] [DecidableEq E]
variable [Semiring 𝕜] [Nontrivial 𝕜]
def FaceDisjointUnion (s t : Finset E) : Finset (E × 𝕜) :=
  s ×ˢ {0} ∪ t ×ˢ {1}
infixl:65 " ⊔ₛ " => FaceDisjointUnion

def SimplicialJoin (K L : AbstractSimplicialComplex E) :=
  ⟨{(s ⊔ₛ t) | (s ∈ K.faces ∪ {∅}) (t ∈ L.faces ∪ {∅})} \ {∅}, _, _⟩
\end{lstlisting}

\begin{remark}
  The definition deserves some explanation. First, the set description is designed to ensure that $s \sqcup_s \emptyset$ and $\emptyset \sqcup_s t$ are faces, so that $K$ and $L$ naturally embed into the join. Then, we remove $\emptyset \sqcup_s \emptyset = \emptyset$ to satisfy \lstinline|empty_notMem|.

  More striking is the choice of assumptions on $\mathbb{k}$. The basic function of $\mathbb{k}$ is for indexing which side of the disjoint union a face is in. If $\mathbb{k}$ is a nontrivial semiring, then there is a natural choice using 0 and 1 notation, mimicking the usual indexing by $\mathbb{N}$. This choice is designed with \lstinline|SimplicialComplex| and \lstinline|convexJoin| in mind, since if $\mathbb{k}$ is a ring and $E$ is a $\mathbb{k}$-module, then $E \times \mathbb{k}$ will have a natural $\mathbb{k}$-module structure and use identical notation. Hence, this choice sets up future paths to lifting this definition to the geometric case.

  One might alternatively choose to use the sum type $E \oplus E$. This has the advantage of more naturally capturing the disjoint union of types and directly encodes properties of the join like commutativity and associativity. For example, there is a natural equivalence $(E \oplus E) \oplus E \simeq E \oplus (E \oplus E)$. However, the sum type of $\mathbb{k}$-modules need not have a $\mathbb{k}$-module structure itself.
  
  Another choice that does carry a natural $\mathbb{k}$-module structure is the direct sum of a module with itself, which is equivalent to $E \times E$. However, this presents an issue when attempting to capture disjoint unions. For example, consider $K \star K$. One might desire that each copy of $K$ embeds into $E \times \{0\}$ and $\{0\} \times E$, respectively. However, if $0$ is a vertex of $K$, then both inclusions give the point $(0, 0)$ in $E \times E$ rather than distinguishing them as two distinct copies. Therefore, another trick is required to avoid such collisions. We find using $E \times \mathbb{k}$ to be a somewhat parsimonious solution, since it also carries a natural $\mathbb{k}$-module structure and captures the idea of disjoint unions using the standard trick of indexing by 0 and 1.
\end{remark}

Joins are associative and commutative up to typecasting and isomorphism, and moreover preserve isomorphism type. The empty complex also acts as a multiplicative identity.

\begin{proposition}[{\small\texttt{simplicialJoin_assoc}}, {\small\texttt{simplicialJoin_comm}}]\label{prop:join-assoc}\phantom{text}
  \begin{itemize}
    \item Let $\varphi : E \to E$ be a simplicial coercion on $K \star L$ and $\psi : E \to E$ be a simplicial coercion on $L \star Z$. Then,
    \[
      \varphi (K \star L) \star Z \cong K \star \psi (L \star Z).
    \]
    \item $K \star L \cong L \star K$.
  \end{itemize}
\end{proposition}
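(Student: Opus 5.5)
We prove both items by writing down explicit simplicial isomorphisms, in the bundled form \lstinline|SimplicialIso|. For each we give the forward map, the inverse map, a check that each sends faces to faces, and the two inverse identities on vertex sets. The key tool is a description of the faces of a join. A face of $K \star L$ is a nonempty set $s \sqcup_s t$ with $s \in K \cup \{\emptyset\}$ and $t \in L \cup \{\emptyset\}$. Its vertices are exactly the pairs $(x,0)$ with $x \in V(K)$ and the pairs $(y,1)$ with $y \in V(L)$. We prove this description first as a lemma; it follows from $\{(x,0)\} = \{x\} \sqcup_s \emptyset$ and the nontriviality of $\mathbb{k}$, which gives $0 \neq 1$. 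We also use that $\mathrm{image}$ distributes over the unions and products defining $\sqcup_s$.

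\emph{Commutativity.} The map is the swap $\tau(x,i) := (x, 1-i)$ on $E \times \mathbb{k}$. It exchanges $0$ and $1$, since $1 - 0 = 1$ and $1 - 1 = 0$. Because $\mathbb{k}$ is only a semiring, subtraction is unavailable, so we instead define $\tau$ by cases: $\tau(x,0) = (x,1)$, $\tau(x,1) = (x,0)$, and $\tau(x,c) = (x,c)$ otherwise (using \lstinline|DecidableEq 𝕜|). Then $\tau(s \sqcup_s t) = t \sqcup_s s$, which shows that $\tau$ is simplicial in both directions. Since $\tau \circ \tau = \mathrm{id}$, the same map serves as its own inverse, and both inverse identities hold on all of $E \times \mathbb{k}$.

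\emph{Associativity.} The types must be handled carefully. The complex $\varphi(K \star L)$ has type $E$, so $\varphi(K \star L) \star Z$ lives in $E \times \mathbb{k}$, and similarly for the right-hand side. A face of the left side has the form $\varphi(s \sqcup_s t) \sqcup_s u$, and a face of the right side has the form $s \sqcup_s \psi(t \sqcup_s u)$. Let $\varphi^{-1}$ and $\psi^{-1}$ be left inverses of $\varphi$ and $\psi$ on the relevant vertex sets; these exist by \lstinline|Set.InjOn| and the inverse function on a set, e.g.\ \lstinline|Function.invFunOn|, which is noncomputable. We define the forward map $F$ by three cases:
\begin{align*}
F(e,0) &= (x,0) && \text{if } \varphi^{-1}(e) = (x,0),\\
F(e,0) &= (\psi(y,0),1) && \text{if } \varphi^{-1}(e) = (y,1),\\
F(e,1) &= (\psi(e,1),1).
\end{align*}
The inverse $G$ is defined symmetrically using $\psi^{-1}$. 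Applied to a face, $F$ sends $\varphi(s \sqcup_s t) \sqcup_s u$ to $s \sqcup_s \psi(t \sqcup_s u)$. This is exactly the regrouping $(s \sqcup t) \sqcup u = s \sqcup (t \sqcup u)$, so faces go to faces. The mutual inverse identities then reduce, via the vertex lemma, to $\varphi^{-1}\varphi = \mathrm{id}$ on $V(K \star L)$ and $\psi\psi^{-1} = \mathrm{id}$ on $\psi(V(L \star Z))$, together with the symmetric statements.

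\emph{Main obstacle.} The hard step is the case analysis showing that $F$ maps faces to faces. We must show $\mathrm{image}\,F\,(\varphi(s\sqcup_s t)\sqcup_s u) = s \sqcup_s \psi(t\sqcup_s u)$ as Finsets. This requires pointwise rewriting under the \lstinline|InjOn| hypotheses, which hold only on vertices, so at each step we must know the element lies in a face. We plan to prove it by double inclusion on elements, splitting on the index in $\{0,1\}$ and using the vertex lemma to supply the membership hypotheses. The empty cases $s=\emptyset$, $t=\emptyset$ and $u=\emptyset$ need separate handling, since $\emptyset$ is not a face; these are routine but tedious.
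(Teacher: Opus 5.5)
Your proposal is correct. The swap $\tau$ and the regrouping map $F$, with inverses taken from the injectivity-on-vertices hypotheses (plus some fallback value on inputs whose index is neither $0$ nor $1$), do send faces to faces and are mutually inverse on vertices, and you were right to read the coercion $\varphi$ as a map $E\times\mathbb{k}\to E$ despite the statement's typo. The paper gives no written proof of this proposition, only the Lean identifiers, so there is no argument in the text to compare against; your explicit construction is the natural one.
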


\begin{proposition}[{\small\texttt{simplicialJoin_simplicialIso}}]
  If $K \cong L$ and $Z \cong W$, then
  \[
    K \star Z \cong L \star W.
  \]
\end{proposition}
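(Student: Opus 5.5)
Given bundled isomorphisms $f : K \cong L$ (maps $f_1 : E \to F$, $g_1 : F \to E$) and $h : Z \cong W$ (maps $f_2 : E' \to F'$, $g_2 : F' \to E'$), we construct the isomorphism $K \star Z \cong L \star W$ explicitly on the product types $E \times \mathbb{k} \to F \times \mathbb{k}$. Write the candidate map as
\[
  \Phi(x, i) := \begin{cases} (f_1(x), 0) & i = 0, \\ (f_2(x), 1) & \text{otherwise}, \end{cases}
\]
and define $\Psi$ symmetrically from $g_1, g_2$. Using ``otherwise'' rather than $i = 1$ keeps $\Phi$ total without needing to handle junk values of $\mathbb{k}$; these never occur in vertices of the join, so they do not matter for \lstinline|SimplicialIso|. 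Decidable equality on $\mathbb{k}$ is already assumed in the definition of the join.

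The first step is a computational lemma: for any finite sets $s, t$,
\[
  \Phi(s \sqcup_s t) = f_1(s) \sqcup_s f_2(t).
\]
This follows from \lstinline|Finset.image_union| and image of a product with a singleton, using $0 \neq 1$ from \lstinline|Nontrivial|. With this, simpliciality of $\Phi$ is immediate. A face of $K \star Z$ has the form $s \sqcup_s t$ with $s \in K \cup \{\emptyset\}$, $t \in Z \cup \{\emptyset\}$, and the union nonempty. Its image is $f_1(s) \sqcup_s f_2(t)$, where $f_1(s) \in L \cup \{\emptyset\}$ because $f_1$ is simplicial and the image of $\emptyset$ is $\emptyset$. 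Nonemptiness is preserved because images of nonempty sets are nonempty. The same argument shows $\Psi$ is simplicial.

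The second step is a description of the vertices: $V(K \star Z) = V(K) \times \{0\} \cup V(Z) \times \{1\}$. Membership of $\{(x, i)\}$ in the join forces exactly one of $s, t$ to be a singleton and the other empty, via the cardinality of disjoint unions. Given this, the inverse laws reduce componentwise to the \lstinline|left_inv| and \lstinline|right_inv| of $f$ and $h$.

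I expect the main obstacle to be the case analysis in unpacking membership in the set-builder join. A singleton $\{(x, i)\}$ equal to $s \times \{0\} \cup t \times \{1\}$ must be split into cases, and the degenerate empty cases must be excluded. I would isolate this into a reusable lemma characterizing vertices of the join and prove it once, before tackling the isomorphism itself.
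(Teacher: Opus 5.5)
Your proposal is correct and matches the direct approach the paper points to, which gives no written proof here but says these join results rest on utility lemmas about faces of disjoint unions. Your index-splitting map on $E \times \mathbb{k}$, with simpliciality and the inverse laws reduced to $\Phi(s \sqcup_s t) = f_1(s) \sqcup_s f_2(t)$ and $V(K \star Z) = V(K) \times \{0\} \cup V(Z) \times \{1\}$, is exactly that kind of argument. One cosmetic point: the paper's join is only defined for two complexes over the same type, so your $E'$ and $F'$ must equal $E$ and $F$, which your definition of $\Phi$ already implicitly assumes.
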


\begin{proposition}[{\small\texttt{simplicialJoin_bot_id}}] $K \star \bot \cong \bot \star K \cong K$.
\end{proposition}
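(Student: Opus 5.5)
We prove $K \star \bot \cong K$ by exhibiting an explicit simplicial isomorphism; $\bot \star K \cong K$ is handled the same way, with the tag $1$ in place of $0$. Here $\bot$ is the empty complex, so $\bot.\text{faces} = \emptyset$. Hence in the definition of \lstinline|SimplicialJoin| the only admissible $t$ is $\emptyset$, and the faces of $K \star \bot$ are exactly the sets $s \sqcup_s \emptyset = s \times \{0\}$ for $s \in K$; the case $s = \emptyset$ is removed with $\{\emptyset\}$. We first prove this as a lemma describing the faces of $K \star \bot$, by extensionality on faces.

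For the maps, take $f : E \times \mathbb{k} \to E$ to be the first projection $\mathrm{Prod.fst}$, and $g : E \to E \times \mathbb{k}$ to be $x \mapsto (x, 0)$. We check that both are simplicial.
\begin{itemize}
\item For $g$: the image of $s \in K$ is $s \times \{0\}$, which is a face of $K \star \bot$ by the lemma.
\item For $f$: the image of $s \times \{0\}$ under $\mathrm{Prod.fst}$ is $s$. This needs a small \lstinline|Finset.image| computation for products with a singleton; it uses \lstinline|DecidableEq| and the fact that $\{0\}$ is nonempty.
\end{itemize}

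Next we verify the inverse conditions on vertices. By the vertex characterization $V(K) = \bigcup_{s \in K} s$ (Proposition \texttt{vertices\_eq}), every vertex of $K \star \bot$ has the form $(x, 0)$. Then $g(f(x,0)) = (x,0)$, and $f(g(x)) = x$ holds for all $x$. This yields both the bundled \lstinline|SimplicialIso| and the propositional \lstinline|IsSimplicialIso|.

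For $\bot \star K$, the faces are $t \times \{1\}$. We use $g(x) = (x, 1)$ with the same $f$. The argument is identical, except that the vertex step now uses the second coordinate equal to $1$ (with $0 \neq 1$ from \lstinline|Nontrivial|, which is needed only to keep the two cases conceptually separate). Finally, $K \star \bot \cong \bot \star K$ follows by composing the first isomorphism with the inverse of the second; alternatively it is a direct instance of \lstinline|simplicialJoin_comm|.

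The main obstacle is the face-characterization lemma. The set-builder in \lstinline|SimplicialJoin| ranges over $K.\text{faces} \cup \{\emptyset\}$ and $\bot.\text{faces} \cup \{\emptyset\}$ and then removes $\emptyset$. We therefore have to show that $s \sqcup_s \emptyset = \emptyset$ forces $s = \emptyset$, and unfold \lstinline|FaceDisjointUnion| so that $s \times^s \{0\} \cup \emptyset \times^s \{1\}$ simplifies to $s \times^s \{0\}$. I would first try to discharge this with \lstinline|simp| using \lstinline|Finset.product_eq_empty| and the \lstinline|empty_product| lemmas, followed by a case split on whether $s = \emptyset$. Everything else is routine image and vertex bookkeeping.
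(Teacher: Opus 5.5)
Your proposal is correct and is essentially the intended argument: since $\bot$ has no faces, the faces of $K\star\bot$ (resp.\ $\bot\star K$) are exactly $s\times\{0\}$ (resp.\ $s\times\{1\}$) for $s\in K$, and the first projection $p_1$ together with the tagging map $x\mapsto(x,0)$ (resp.\ $(x,1)$) are mutually inverse simplicial maps on the vertex sets. This is the same use of $p_1$ as a vertex-injective coercion that the paper makes for joins of complexes with disjoint vertex sets (here $V(K)\cap V(\bot)=\emptyset$, and \texttt{simplicialJoinProj\_mem} gives $p_1(K\star\bot)=K$ directly); you are also right that \texttt{Nontrivial} plays no real role, and the middle isomorphism follows by symmetry and transitivity of $\cong$ or from \texttt{simplicialJoin\_comm}.
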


The join obeys a distributive law over unions and intersections, as well as observing a monotonicity property with respect to subcomplexes. This is in fact true on the level of equality, not just up to isomorphism.

\begin{proposition}[{\small\texttt{simplicialJoin_simplicialUnion}}]\phantom{text}
  \begin{itemize}
    \item $K \star (L \cup Z) = (K \star Z) \cup (L \star Z)$.
    \item $(K \cup L) \star Z = (K \star Z) \cup (L \star Z)$.
  \end{itemize}
\end{proposition}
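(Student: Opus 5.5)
We read the first item as the left distributive law $K \star (L \cup Z) = (K \star L) \cup (K \star Z)$, which mirrors the second item; the right-hand side as printed appears to be a typo. The plan is to prove both items as equalities of face sets. We use extensionality for \lstinline|AbstractSimplicialComplex|: two complexes with equal \lstinline|faces| are equal, since the remaining fields are proofs. No typecasting or coercion is needed. All complexes live over $E$, and both sides live over $E \times \mathbb{k}$ with the same indexing by $0$ and $1$.

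We unfold \lstinline|SimplicialJoin| and \lstinline|SimplicialUnion|. A finset $r \subseteq E \times \mathbb{k}$ is a face of $K \star (L \cup Z)$ if and only if $r \neq \emptyset$ and there exist $s \in K.\mathrm{faces} \cup \{\emptyset\}$ and $t \in (L.\mathrm{faces} \cup Z.\mathrm{faces}) \cup \{\emptyset\}$ with $r = s \sqcup_s t$. The key rewriting step is the set identity
\[
  (L.\mathrm{faces} \cup Z.\mathrm{faces}) \cup \{\emptyset\} = (L.\mathrm{faces} \cup \{\emptyset\}) \cup (Z.\mathrm{faces} \cup \{\emptyset\}).
\]
With it, the membership condition on $t$ becomes a disjunction. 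The existential $\exists t$ then distributes over the disjunction (\lstinline|exists_or|, after \lstinline|or_and_right|). This gives exactly the condition
\[
  \bigl(r \in K \star L\bigr) \lor \bigl(r \in K \star Z\bigr).
\]
The side condition $r \neq \emptyset$ (the set difference with $\{\emptyset\}$) is common to both disjuncts, so it factors out by \lstinline|and_or_right|.

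The second item is symmetric, with the splitting done on $s$ instead of $t$. In Lean, both directions are best handled by \lstinline|ext r| followed by \lstinline|constructor|. For the forward direction, we destructure the witness $(s, t)$ and case split on which summand $t$ (respectively $s$) lies in, feeding the same witnesses into the corresponding join. For the backward direction, we take a witness from either join and weaken its membership into the union.

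We expect no real mathematical obstacle. The main friction is the set-builder notation $\{(s \sqcup_s t) \mid (s \in \cdot)\,(t \in \cdot)\}$: it elaborates to an iterated existential over \lstinline|Set.image2|, with the case $\emptyset$ appearing both as an adjoined element and implicitly. The step I would invest in first is a lemma such as \lstinline|mem_simplicialJoin_iff|, characterising $r \in K \star L$ as $r \neq \emptyset \land \exists\, s\,t,\ (s \in K \lor s = \emptyset) \land (t \in L \lor t = \emptyset) \land s \sqcup_s t = r$. After this, both items reduce to propositional reshuffling closed by \lstinline|aesop| or \lstinline|tauto|.
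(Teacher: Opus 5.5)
Your proposal is correct and takes essentially the paper's route: extensionality on face sets, unfolding the join and the union, splitting the witness by which summand of the union it lies in, and observing that removing $\emptyset$ commutes with the union. You are also right to read the first item as $K \star (L \cup Z) = (K \star L) \cup (K \star Z)$; as printed it is false, e.g.\ for $Z = \bot$ and $L \neq \bot$, because no face of $(K \star \bot) \cup (L \star \bot)$ contains an element indexed by $1$, whereas $K \star (L \cup \bot) = K \star L$ does.
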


\begin{proposition}[{\small\texttt{simplicialJoin_simplicialInter}}]\phantom{text}
  \begin{itemize}
    \item $K \star (L \cap Z) = (K \star Z) \cap (L \star Z)$.
    \item $(K \cap L) \star Z = (K \star Z) \cap (L \star Z)$.
  \end{itemize}
\end{proposition}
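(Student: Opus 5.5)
We read the first item as the intended left-distributive law $K \star (L \cap Z) = (K \star L) \cap (K \star Z)$; the printed right-hand side looks like a typo mirroring the second item. Both items are equalities of simplicial complexes, so the plan is to reduce each to an equality of face sets by extensionality on the \lstinline|faces| field. The two proofs are symmetric, so we focus on the second item, $(K \cap L) \star Z = (K \star Z) \cap (L \star Z)$.

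\textbf{Key lemma (injectivity of $\sqcup_s$).} For finite sets $s, s', t, t'$,
\[
  s \sqcup_s t = s' \sqcup_s t' \implies s = s' \land t = t'.
\]
To prove it, recover $s$ as the set of $x$ with $(x,0) \in s \sqcup_s t$, and $t$ as the set of $x$ with $(x,1) \in s \sqcup_s t$. This uses $0 \neq 1$ in $\mathbb{k}$, which holds by \lstinline|Nontrivial|. Consequently, every face of a join $A \star B$ has a \emph{unique} decomposition $s \sqcup_s t$ with $s \in A \cup \{\emptyset\}$ and $t \in B \cup \{\emptyset\}$.

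\textbf{Easy inclusion.} We show $\subseteq$. A face of $(K \cap L) \star Z$ has the form $s \sqcup_s t \neq \emptyset$ with $s \in (K \cap L) \cup \{\emptyset\}$ and $t \in Z \cup \{\emptyset\}$. Since $s$ then lies in both $K \cup \{\emptyset\}$ and $L \cup \{\emptyset\}$, the same witness shows the face is in $K \star Z$ and in $L \star Z$. This is just monotonicity of the join in each argument.

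\textbf{Hard inclusion.} We show $\supseteq$, which is the only real content. Take a face lying in both joins. It is given as $s \sqcup_s t$ with $s \in K \cup \{\emptyset\}$, $t \in Z \cup \{\emptyset\}$, and also as $s' \sqcup_s t'$ with $s' \in L \cup \{\emptyset\}$, $t' \in Z \cup \{\emptyset\}$. The key lemma gives $s = s'$. Hence $s \in (K \cup \{\emptyset\}) \cap (L \cup \{\emptyset\}) = (K \cap L) \cup \{\emptyset\}$, so the face lies in $(K \cap L) \star Z$. The first item is identical, with the roles of the two sides swapped and the equality $t = t'$ used instead.

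\textbf{Main obstacle.} We expect the main obstacle to be the key lemma in Lean: manipulating $s \times^s \{0\} \cup t \times^s \{1\}$ through \lstinline|Finset.ext| and \lstinline|Finset.mem_product|. The other friction is bookkeeping the $\cup \{\emptyset\}$ and $\setminus \{\emptyset\}$ adjustments, which we would handle by case splits on whether $s$ or $t$ is empty.
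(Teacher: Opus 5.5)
Your proposal is correct and takes essentially the paper's approach: the paper gives no written proof, but its formalized join identities are proved by unfolding both face sets and using utility lemmas on the disjoint union, and the one that matters here is your injectivity lemma that $s \sqcup_s t$ determines $s$ and $t$ (since $0 \neq 1$ in $\mathbb{k}$). Your reading of the first item as $K \star (L \cap Z) = (K \star L) \cap (K \star Z)$ is the right one, since the printed version fails already for $K = L = \bot$ and $Z \neq \bot$.
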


\begin{proposition}[{\small\texttt{simplicialJoin_subcomplex}}]
  If $K \subseteq L$ and $Z \subseteq W$, then
  \[
    K \star Z \subseteq L \star W.
  \]
\end{proposition}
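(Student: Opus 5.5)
We prove the monotonicity of the join directly from the definitions. By \lstinline|IsSubcomplex|, the claim $K \star Z \subseteq L \star W$ means that every face of $K \star Z$ is a face of $L \star W$.

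First, unfold \lstinline|SimplicialJoin|. A face $u$ of $K \star Z$ is a finite set $u \subseteq E \times \mathbb{k}$ such that $u \neq \emptyset$, and there exist $s \in K.\text{faces} \cup \{\emptyset\}$ and $t \in Z.\text{faces} \cup \{\emptyset\}$ with $u = s \sqcup_s t$.

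Next, use the hypotheses $K \subseteq L$ and $Z \subseteq W$, i.e.\ $K.\text{faces} \subseteq L.\text{faces}$ and $Z.\text{faces} \subseteq W.\text{faces}$. They give
\[
  K.\text{faces} \cup \{\emptyset\} \subseteq L.\text{faces} \cup \{\emptyset\}
  \quad\text{and}\quad
  Z.\text{faces} \cup \{\emptyset\} \subseteq W.\text{faces} \cup \{\emptyset\},
\]
by \lstinline|Set.union_subset_union_left| applied to each. Therefore the same witnesses $s$ and $t$ exhibit $u = s \sqcup_s t$ as an element of the image set defining $L \star W$. Since $u \neq \emptyset$ still holds, $u$ survives the removal of $\{\emptyset\}$, and so $u \in L \star W$.

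In Lean this should amount to an \lstinline|intro| of the face and destructuring the existential. We use membership in the set difference and then provide the anonymous constructor $\langle s, hs', t, ht', rfl\rangle$ together with the nonemptiness proof.

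The only potential obstacle is syntactic, not mathematical. The set-builder notation $\{(s \sqcup_s t) \mid (s \in \dots) (t \in \dots)\}$ elaborates to a nested existential over membership proofs, possibly phrased through \lstinline|Set.image2|. Choosing the right simp lemmas (\lstinline|Set.mem_diff|, \lstinline|Set.mem_setOf_eq|, \lstinline|Set.mem_union|) to expose these witnesses cleanly is the step we expect to take care. A case split on whether $s$ or $t$ is empty should not be needed, since the inclusion of the augmented face sets handles both cases uniformly.
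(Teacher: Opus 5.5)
Your proof is correct: the faces of $K \star Z$ are the nonempty sets $s \sqcup_s t$ with $s \in K \cup \{\emptyset\}$ and $t \in Z \cup \{\emptyset\}$, and the hypotheses let the same witnesses $s, t$ exhibit such a face in $L \star W$ with no case split. The paper gives no written proof beyond the Lean name \texttt{simplicialJoin\_subcomplex}, but this direct unfolding of the join's face set is the kind of argument it describes for its join lemmas, so your route is essentially the same.
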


Finally, the dimension of a join is additive up to a constant.

\begin{proposition}[{\small\texttt{dim_of_join}}]
  $\dim (K \star L) = \dim K + \dim L + 1$.
\end{proposition}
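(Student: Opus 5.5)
We prove $\dim(K \star L) = \dim K + \dim L + 1$ for finite $K$ and $L$ (so that $\dim$ is defined) by reducing both sides to statements about maximal face cardinalities. Write $m_K := \dim K + 1$. By the definition of \lstinline|AbstractSimplicialComplex.dim|, this equals the maximum cardinality of a face of $K$, or $0$ if $K = \bot$, because the extra element $-1$ in the \lstinline|Finset.max'| is exactly the value contributed by the empty complex. The claim then becomes $m_{K\star L} = m_K + m_L$. We first record that $K \star L$ is finite: its face set is the image of the finite set $(K.\text{faces}\cup\{\emptyset\})\times(L.\text{faces}\cup\{\emptyset\})$ under $(s,t)\mapsto s\sqcup_s t$, minus $\emptyset$. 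This gives the \lstinline|Fintype| instance the statement needs.

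The key lemma is $|s \sqcup_s t| = |s| + |t|$. The sets $s\times^s\{0\}$ and $t\times^s\{1\}$ are disjoint because $0\neq 1$ in the nontrivial semiring $\mathbb{k}$, and each has the cardinality of $s$ and $t$ respectively, since $\{0\}$ and $\{1\}$ are singletons. So \lstinline|Finset.card_union_of_disjoint| and \lstinline|Finset.card_product| finish the lemma. It then suffices to prove two inequalities.
\begin{itemize}
\item \emph{Upper bound.} Every face of $K\star L$ has the form $s\sqcup_s t$ with $s\in K\cup\{\emptyset\}$ and $t\in L\cup\{\emptyset\}$. Hence its cardinality is $|s|+|t|\le m_K+m_L$, using $|\emptyset|=0\le m_K$.
\item \emph{Lower bound.} Choose $s$ attaining $m_K$, or $s=\emptyset$ if $K=\bot$, and choose $t$ in the same way for $L$. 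If $s\sqcup_s t\neq\emptyset$, it is a face of cardinality $m_K+m_L$. Otherwise both complexes are empty, and both sides equal $0$.
\end{itemize}

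The main obstacle is the formal handling of \lstinline|Finset.max'| over the image of $\text{face\_dim}$ united with $\{-1\}$, with its casts between $\mathbb{N}$ and $\mathbb{Z}$. I would first prove a characterization lemma: $\dim K = d$ if and only if every face $s$ satisfies $|s|-1\le d$, and either some face satisfies $|s|-1=d$ or $d=-1$. This follows from \lstinline|Finset.max'_le| and \lstinline|Finset.le_max'|, and it can probably be reused from the proof of \lstinline|cone_dim|. Once that lemma is in place, the argument above becomes a short case split on whether $K$ and $L$ are empty, with \lstinline|omega| closing the integer arithmetic.
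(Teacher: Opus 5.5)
Your proposal is correct and follows the route the paper indicates. The paper gives no written proof beyond remarking that such join identities rest on utility lemmas about the disjoint union $\sqcup_s$, and your lemma $|s \sqcup_s t| = |s| + |t|$ (via $0 \neq 1$ in $\mathbb{k}$), combined with the max-face-cardinality bounds and the case split on empty complexes, is exactly that unfolding of the definitions. One small remark: since $\text{Cone}(K, x)$ is defined as $\text{Simplex}\{x\} \star K$, \texttt{cone\_dim} is more naturally a corollary of this proposition than a source of your characterization lemma, though this does not affect your argument.
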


These proofs often make heavy use of the structure of faces under disjoint unions. Therefore, we provide numerous utility lemmas detailing properties about our definition of disjoint union above.

Since the complex $K \star L$ has the underlying type $E \times \mathbb{k}$, repeated applications of the join operation require typecasting $E \times \mathbb{k} \to E$, as seen in Proposition~\ref{prop:join-assoc} for associativity. In practice, we can often get away with simple maps that will completely cover our use cases in Section~\ref{sec:stellar-subdivisions}. For example, we defined disjoint unions in the manner above to handle the case where $V(K) \cap V(L) \neq \emptyset$. If they are disjoint, then the projection $p_1 : E \times \mathbb{k} \to E$ to the first coordinate will be injective on the vertex sets, hence can be used as a coercion. The faces then have nicer forms to work with, enabling much simpler proofs through some basic rewrites.

\begin{proposition}[{\small\texttt{simplicialJoinProj_mem}}]
  \[
    s \in p_1(K \star L) \iff \exists t \in K \cup \{\emptyset\},\; \exists u \in L \cup \{\emptyset\},\; s = t \cup u \land t \cup u \neq \emptyset.
  \]
\end{proposition}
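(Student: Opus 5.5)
The plan is to unfold the definitions and prove the biconditional by chasing elements in both directions. Here $p_1(K \star L)$ means the simplicial image of the join under the first projection. Concretely, $s \in p_1(K\star L)$ holds exactly when there is a face $w \in K \star L$ with $s = \mathrm{image}\;p_1\;w$. By the definition of \lstinline|SimplicialJoin|, such a $w$ has the form $t \sqcup_s u$ with $t \in K \cup \{\emptyset\}$, $u \in L \cup \{\emptyset\}$ and $t \sqcup_s u \neq \emptyset$. So the statement reduces to two facts about \lstinline|FaceDisjointUnion|.

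The first fact is the key utility lemma: $\mathrm{image}\;p_1\;(t \sqcup_s u) = t \cup u$. To prove it, rewrite $t \sqcup_s u = t \times^s \{0\} \cup u \times^s \{1\}$. The image of a union is the union of the images (\lstinline|Finset.image_union|). The image of $t \times^s \{c\}$ under the first projection is $t$, since $\{c\}$ is nonempty (\lstinline|Finset.image_fst_product| or a direct \lstinline|ext|/\lstinline|simp|). The second fact is $t \sqcup_s u = \emptyset \iff t = \emptyset \wedge u = \emptyset \iff t \cup u = \emptyset$. This holds because both products are empty precisely when their first factors are.

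With these two facts, the forward direction is immediate. Given $s = p_1(t \sqcup_s u)$, we get $s = t\cup u$, and $t \cup u \ne \emptyset$ follows from $t \sqcup_s u \neq \emptyset$. For the converse, given $t,u$ with $s = t\cup u \neq \emptyset$, take $w := t \sqcup_s u$. It lies in the set-builder family, it is nonempty by the second fact, and hence it lies in $K \star L$. Its projection is $t \cup u = s$ by the first fact.

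The main obstacle is not mathematical but bookkeeping. The image $p_1(K\star L)$ is defined through \lstinline|SimplicialImage| (or a coercion structure), and its faces are stated as an existential over faces of $K \star L$. The join's face set is also a set difference of a set-builder with two existential binders over $K.\text{faces} \cup \{\emptyset\}$. Peeling these layers apart cleanly, and in particular handling the \lstinline|Set.mem_diff|/\lstinline|Set.mem_singleton| conditions and the fact that $0 \ne 1$ in $\mathbb{k}$ (from \lstinline|Nontrivial|), is where I expect most effort. Notably, disjointness of $V(K)$ and $V(L)$ is not needed for this membership characterization, since the image is computed faithfully regardless. Disjointness is only needed elsewhere to make $p_1$ a coercion.
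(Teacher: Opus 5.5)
Your proposal is correct and takes the same approach as the paper. The paper gives no written argument beyond saying the lemma follows from unfolding the join and the $p_1$-image and applying basic rewrites, and your two facts, $p_1(t \sqcup_s u) = t \cup u$ and $t \sqcup_s u = \emptyset \iff t \cup u = \emptyset$, are exactly those rewrites. One small correction to your effort forecast: neither fact uses $0 \neq 1$ (if $0 = 1$, the disjoint union collapses to $(t \cup u) \times \{0\}$ and both facts still hold), so \texttt{Nontrivial} plays no role in this lemma.
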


\begin{proposition}[{\small\texttt{simplicialJoinProj_mem_vertices}}]
  \[
    x \in V(p_1(K \star L)) \iff x \in V(K) \lor x \in V(L).
  \]
\end{proposition}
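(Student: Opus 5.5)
The plan is to reduce the claim to the preceding membership characterization (\texttt{simplicialJoinProj\_mem}) together with the description of vertices as singleton faces, $V(K) = \{x \mid \{x\} \in K\}$. By definition, $x \in V(p_1(K \star L))$ means $\{x\} \in p_1(K \star L)$. Applying \texttt{simplicialJoinProj\_mem} with $s = \{x\}$, this holds if and only if there exist $t \in K \cup \{\emptyset\}$ and $u \in L \cup \{\emptyset\}$ with $\{x\} = t \cup u$ and $t \cup u \neq \emptyset$. The second condition is automatic, since $\{x\}$ is nonempty.

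For the forward direction, the union $t \cup u$ is the singleton $\{x\}$, so each of $t$ and $u$ is either $\emptyset$ or $\{x\}$, and at least one of them equals $\{x\}$. Split into cases on whether $t = \emptyset$.
\begin{itemize}
\item If $t \neq \emptyset$, then $t \in K$, and $t \subseteq \{x\}$ forces $t = \{x\}$, so $\{x\} \in K$, i.e.\ $x \in V(K)$.
\item If $t = \emptyset$, then $u = \{x\}$ is nonempty, hence $u \in L$, giving $x \in V(L)$.
\end{itemize}
In Lean, the step ``a nonempty subset of a singleton is that singleton'' is handled by \texttt{Finset.subset\_singleton\_iff} applied to \texttt{Finset.subset\_union\_left} and \texttt{Finset.subset\_union\_right}.

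For the reverse direction, if $\{x\} \in K$, take $t = \{x\}$ and $u = \emptyset$. Then $t \cup u = \{x\} \neq \emptyset$, and \texttt{simplicialJoinProj\_mem} gives $\{x\} \in p_1(K \star L)$. The case $x \in V(L)$ is symmetric, with $t = \emptyset$ and $u = \{x\}$.

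I expect the main obstacle to be bookkeeping rather than mathematics. First, one must make sure \texttt{simplicialJoinProj\_mem} applies in the present setting, in particular with whatever disjointness hypothesis on $V(K)$ and $V(L)$ is needed for $p_1$ to be a coercion. That hypothesis is simply carried along and plays no other role. Second, one must unfold the membership $t \in K \cup \{\emptyset\}$ of sets of faces into the disjunction $t \in K \lor t = \emptyset$. With these in place, the argument is a short case analysis.
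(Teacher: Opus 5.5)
Your proposal is correct, and it follows the route the paper suggests: the paper gives no written proof beyond the Lean name, but presents this statement directly after \texttt{simplicialJoinProj\_mem}, and specializing that characterization to $s = \{x\}$ and then splitting on the singleton is the natural derivation. The disjointness caveat in your last paragraph is not needed, because $p_1(t \sqcup_s u) = t \cup u$ for all finite sets $t, u$; so the face characterization, and hence this proposition, hold for arbitrary $K$ and $L$.
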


\begin{figure}
    \centering
    \includegraphics[width=0.85\linewidth]{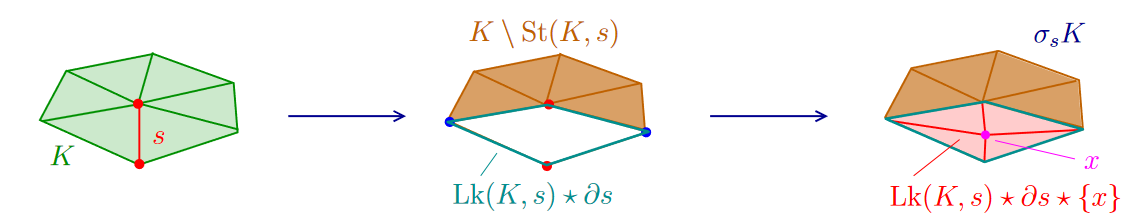}
    \caption{Example of a stellar subdivision, computed in steps. The interior of the star around $s$ is removed and the barycenter $x$ is placed inside, which is then fully connected to the complex via a join to $\text{Lk}(K, s) \star \partial s$.}
    \label{fig:stellar-subdivision}
\end{figure}

Finally, we observe that joins interact nicely with some of the other objects previously defined. For example, joins enjoy distributive properties with links and star complements. Note that these are true equalities.

\begin{proposition}[{\small\texttt{simplicialJoin_link}}]
  \[
    \text{Lk}(K \star L, s \sqcup_s t) = \text{Lk}(K, s) \star \text{Lk}(L, t).
  \]
\end{proposition}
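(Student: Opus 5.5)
We prove the equality by extensionality on faces, reducing everything to componentwise statements about the disjoint union $\sqcup_s$. We read the statement under the standing hypotheses $s \in K \cup \{\emptyset\}$ and $t \in L \cup \{\emptyset\}$. Some such hypothesis is needed: if $s \notin K$ and $s \neq \emptyset$, then $s \sqcup_s t$ is not a face of $K \star L$, so the left-hand side is empty. Meanwhile $\text{Lk}(K,s)=\bot$, so the right-hand side is $\bot \star \text{Lk}(L,t)$, which is nonempty in general.

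\textbf{Utility lemmas on $\sqcup_s$.} We first establish, using only that $0 \neq 1$ in the nontrivial semiring $\mathbb{k}$:
\begin{enumerate}[label=(\roman*)]
  \item Injectivity: $a \sqcup_s b = c \sqcup_s d \iff a = c \land b = d$.
  \item Emptiness: $a \sqcup_s b = \emptyset \iff a = \emptyset \land b = \emptyset$.
  \item Compatibility with union: $(a \sqcup_s b) \cup (c \sqcup_s d) = (a \cup c) \sqcup_s (b \cup d)$.
  \item Compatibility with intersection: $(a \sqcup_s b) \cap (c \sqcup_s d) = (a \cap c) \sqcup_s (b \cap d)$.
  \item Membership in the join: $a \sqcup_s b \in K \star L \iff a \in K\cup\{\emptyset\} \land b \in L\cup\{\emptyset\} \land (a \neq \emptyset \lor b \neq \emptyset)$.
\end{enumerate}
Lemmas (i)--(iv) follow by unfolding the product finsets and splitting on the second coordinate. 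Lemma (v) follows from the definition of \lstinline|SimplicialJoin| together with (i) and (ii). Each is a routine \lstinline|Finset.ext| argument.

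\textbf{($\subseteq$).} Let $u \in \text{Lk}(K\star L, s\sqcup_s t)$. Then $u \in K \star L$, so by definition of the join $u = a \sqcup_s b$ with $a \in K\cup\{\emptyset\}$, $b\in L\cup\{\emptyset\}$, and $u \neq \emptyset$.
\begin{itemize}
  \item The link condition gives $(s\sqcup_s t)\cup(a\sqcup_s b) \in K\star L$. By (iii) this face is $(s\cup a)\sqcup_s(t\cup b)$, and (v) then gives $s\cup a \in K\cup\{\emptyset\}$ and $t \cup b \in L\cup\{\emptyset\}$.
  \item Disjointness $(s\sqcup_s t)\cap(a\sqcup_s b)=\emptyset$ gives, via (iv) and (ii), $s\cap a=\emptyset$ and $t\cap b = \emptyset$.
\end{itemize}
Hence $a\in \text{Lk}(K,s)\cup\{\emptyset\}$: either $a=\emptyset$, or $a \in K$ and $s\cup a\in K$, where $s\cup a\neq\emptyset$ because $a \neq \emptyset$. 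Likewise $b\in\text{Lk}(L,t)\cup\{\emptyset\}$. Since $a$ and $b$ are not both empty, $u \in \text{Lk}(K,s)\star\text{Lk}(L,t)$.

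\textbf{($\supseteq$).} Let $u \in \text{Lk}(K,s)\star\text{Lk}(L,t)$. Then $u = a\sqcup_s b$ with $a \in \text{Lk}(K,s)\cup\{\emptyset\}$ and $b \in \text{Lk}(L,t)\cup\{\emptyset\}$, not both empty, and we run the same steps backwards.
\begin{itemize}
  \item Membership $u\in K\star L$ follows from (v), since link faces are faces.
  \item For the union condition, we need $s\cup a\in K\cup\{\emptyset\}$. If $a\neq\emptyset$ this is given by the link condition. If $a=\emptyset$ then $s \cup a = s$, and we use the hypothesis $s\in K\cup\{\emptyset\}$. The component $t \cup b$ is handled symmetrically.
  \item The union face is nonempty because $u$ is.
  \item Disjointness follows from (iv) and (ii).
\end{itemize}

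\textbf{Main obstacle.} The main bookkeeping burden is the case analysis on empty components. Each side has four cases according to whether $a$ and $b$ are empty, and the emptiness side conditions of the join and of $K\cup\{\emptyset\}$ must be tracked in each. The key insight is to state all membership facts in the uniform form $X \in K\cup\{\emptyset\}$. This lets (v) absorb the empty cases, so that only the single step $a=\emptyset \Rightarrow s\cup a = s$ needs the extra hypothesis on $s$ and $t$.
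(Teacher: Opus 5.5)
Your proof is correct and takes the same route the paper indicates. The paper gives no written proof of this identity, but it says such identities are proved by unfolding both sides face by face, using utility lemmas on $\sqcup_s$ such as injectivity, emptiness, compatibility with $\cup$ and $\cap$, and membership in the join. You are also right that the displayed equation needs the suppressed hypotheses $s \in K \cup \{\emptyset\}$ and $t \in L \cup \{\emptyset\}$: your $s \notin K$ counterexample is valid, and you use these hypotheses only in the empty-component case $a=\emptyset$ (resp.\ $b=\emptyset$) of the union condition.
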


\begin{proposition}[{\small\texttt{simplicialJoin_starComplement_left}}]
  \[
    (K \setminus \text{St}(K, s)) \star L = (K \star L) \setminus \text{St}(K \star L, s \sqcup_s \emptyset).
  \]
\end{proposition}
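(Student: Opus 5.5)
We prove the equality $(K \setminus \text{St}(K, s)) \star L = (K \star L) \setminus \text{St}(K \star L, s \sqcup_s \emptyset)$ by extensionality on faces. Both sides are subsets of $\text{Finset}(E \times \mathbb{k})$, so it suffices to show that an arbitrary $w$ is a face of the left side if and only if it is a face of the right side. Unfolding the definitions:
\begin{itemize}
  \item $w$ lies in the left side iff $w \neq \emptyset$ and $w = t \sqcup_s u$ for some $t \in K \cup \{\emptyset\}$ with $\neg(s \subseteq t)$ whenever $t \neq \emptyset$, and some $u \in L \cup \{\emptyset\}$.
  \item $w$ lies in the right side iff $w \in K \star L$ and $\neg(s \sqcup_s \emptyset \subseteq w)$.
\end{itemize}

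The whole argument rests on one utility lemma about the disjoint union: for finite sets $s, t, u$,
\[
  s \sqcup_s \emptyset \subseteq t \sqcup_s u \iff s \subseteq t.
\]
This holds because $s \sqcup_s \emptyset = s \times \{0\}$ and $t \sqcup_s u = t \times \{0\} \cup u \times \{1\}$. Since $0 \neq 1$ in a nontrivial semiring, an element $(x,0)$ lies in $t \sqcup_s u$ exactly when $x \in t$. I would prove this first, by unfolding \lstinline|FaceDisjointUnion| and using \lstinline|Finset.mem_product| together with $0 \neq 1$.

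For the forward direction, take $w = t \sqcup_s u$ with $t$ in the star complement (or $t = \emptyset$). By monotonicity of joins (\lstinline|simplicialJoin_subcomplex|), $w \in K \star L$, since the star complement is a subcomplex of $K$. It remains to show $\neg(s \subseteq t)$. When $t \neq \emptyset$ this is the star-complement condition. When $t = \emptyset$ we need $s \neq \emptyset$.

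The case $s = \emptyset$ is the expected obstacle, and it should be treated separately rather than worked around. If $s = \emptyset$, every $t$ satisfies $s \subseteq t$, so $K \setminus \text{St}(K, \emptyset) = \bot$. Then the left side is $\bot \star L$, which as a set equals the image of $L$ under $u \mapsto \emptyset \sqcup_s u$, and this is nonempty when $L$ is. On the right side, $\emptyset \sqcup_s \emptyset = \emptyset \subseteq w$ for every $w$, so the right side is empty. The identity therefore fails for $s = \emptyset$ when $L \neq \bot$. So I would either use a hypothesis $s \neq \emptyset$, which is presumably present in the Lean statement or implied by $s \in K$, or state the result with that assumption. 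I will assume $s \neq \emptyset$. In that case $t = \emptyset$ gives $\neg(s \subseteq \emptyset)$, and the forward direction is complete by the utility lemma.

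For the reverse direction, take $w \in K \star L$, so $w = t \sqcup_s u$ with $t \in K \cup \{\emptyset\}$ and $u \in L \cup \{\emptyset\}$, and suppose $\neg(s \sqcup_s \emptyset \subseteq w)$. The utility lemma gives $\neg(s \subseteq t)$. Then either $t = \emptyset$, or $t \in K$ with $\neg(s \subseteq t)$, which means $t$ is a face of $K \setminus \text{St}(K, s)$. In both cases $t \in (K \setminus \text{St}(K, s)) \cup \{\emptyset\}$. Since $w \neq \emptyset$ is already given, $w$ lies in the left side.

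The remaining work is bookkeeping with the $\cup\{\emptyset\}$ and $\setminus\{\emptyset\}$ conventions in \lstinline|SimplicialJoin|. I expect \lstinline|simp| with the membership lemma for joins to handle most of it.
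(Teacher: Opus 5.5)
Your proof is correct and follows essentially the route the paper indicates for its join identities: a face-by-face extensionality argument driven by the disjoint-union utility lemma $s \sqcup_s \emptyset \subseteq t \sqcup_s u \iff s \subseteq t$, which rests on $0 \neq 1$ in $\mathbb{k}$. Your treatment of $s = \emptyset$ is also right. For $L \neq \bot$ the left side is $\bot \star L \neq \bot$ while the right side is $\bot$, so the displayed statement tacitly needs $s \neq \emptyset$ (automatic when $s \in K$), and your argument handles exactly that case.
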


\section{Stellar Subdivisions}\label{sec:stellar-subdivisions}

Stellar subdivisions subdivide a ``neighborhood,'' or \emph{star}, around a face in a manner that preserves the homeomorphism type. Following Figure~\ref{fig:stellar-subdivision}, we excise out the interior of the star, introduce a new point $x$ (called the \emph{barycenter}), then use joins to connect the barycenter to the star's boundary and fill in the gaps. Informally stated:
\[
    \sigma_s K := K \setminus \text{St}(K, s) \cup \text{Lk}(K, s) \star \partial s \star \{x\},\quad \text{where } x \notin V(K).
\]

Note that we require iterated joins and a union for the definition. For the expression to typecheck in Lean, this means we must typecast $E \times \mathbb{k} \to E$. However, the components of the join are pairwise disjoint. Thus, we can apply $p_1$. Stated formally in Lean:
\begin{lstlisting}
def StellarSubdivision (K : AbstractSimplicialComplex E)
    (s : Finset E) (s_in_K : s ∈ K.faces)
    (x : E) (x_nin_K : x ∉ K.vertices) :
  AbstractSimplicialComplex E :=
    ⟨K\St(K, s) ∪ p₁ ''ˢ (p₁ ''ˢ (Lk(K, s) ⋆ ∂s) ⋆ {x}), _, _⟩
\end{lstlisting}
We can leverage simple rewrite lemmas (as in Section~\ref{sec:operations}) that allow us to act as if these typecasting steps never occurred. We call going from a stellar subdivision $\sigma_s K$ back to $K$ a \emph{stellar weld}. A \emph{stellar move} is then a stellar subdivision, weld, or an isomorphism. One can concretely describe a stellar weld as a set of faces.

\begin{proposition}[{\small\texttt{stellarWeld_faces}}]
  \begin{align*}
    K &= \{t \mid t \in \sigma_s K,\; x \notin t\} \cup \{s \cup (t \setminus \{x\}) \mid t \in \sigma_s K,\; x \in t\} \\
    &= ((\sigma_s K) \setminus \text{St}(\sigma_s K, \{x\})) \cup \{s \cup (t \setminus \{x\}) \mid t \in \sigma_s K,\; x \in t\}.
  \end{align*}
\end{proposition}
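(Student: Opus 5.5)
We prove the first equality by double inclusion, after first rewriting the faces of $\sigma_s K$ with the projection lemma. Applying \texttt{simplicialJoinProj\_mem} twice, a face of $p_1(p_1(\text{Lk}(K,s)\star\partial s)\star\{x\})$ has the form $a \cup b \cup c$, where $a \in \text{Lk}(K,s)\cup\{\emptyset\}$, $b \in \partial s \cup\{\emptyset\}$, $c \in \{\emptyset,\{x\}\}$, and $a\cup b\cup c \neq \emptyset$. So every face $t$ of $\sigma_s K$ either lies in $K\setminus\text{St}(K,s)$, i.e.\ $t\in K$ and $s\not\subseteq t$, or has the form $a\cup b\cup c$ as above. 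Because $x\notin V(K)$, and $a,b$ consist of vertices of $K$ (note $s\in K$), we get $x \in t$ exactly when $c=\{x\}$. The star-complement part never contains $x$.

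\emph{Inclusion $\supseteq$.}
\begin{itemize}
\item If $t\in\sigma_s K$ and $x\notin t$, then either $t\in K\setminus \text{St}(K,s)\subseteq K$, or $t=a\cup b$. In the latter case $a\cup b\subseteq a\cup s$, and $a\cup s\in K$ by the definition of the link (or $a\cup s=s$ if $a=\emptyset$). Downward closure and $t\neq\emptyset$ give $t\in K$.
\item If $x\in t$, then $t$ cannot come from the star complement, so $t=a\cup b\cup\{x\}$. Hence $t\setminus\{x\}=a\cup b$, using $x\notin a\cup b$. Then $s\cup(t\setminus\{x\})=s\cup a\in K$.
\end{itemize}

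\emph{Inclusion $\subseteq$.} Let $u\in K$ and split on whether $s\subseteq u$.
\begin{itemize}
\item If $s\not\subseteq u$, then $u\in K\setminus\text{St}(K,s)\subseteq\sigma_s K$, and $x\notin u$ since $x\notin V(K)$. So $u$ lies in the first set.
\item If $s\subseteq u$, put $a=u\setminus s$. If $a\neq\emptyset$, then $a\in K$ by downward closure, $s\cup a=u\in K$, and $s\cap a=\emptyset$, so $a\in\text{Lk}(K,s)$. Take $b=\emptyset$, $c=\{x\}$, and $t=a\cup\{x\}\in\sigma_s K$. Then $x\in t$ and $s\cup(t\setminus\{x\})=s\cup a=u$. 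If $a=\emptyset$, take $t=\{x\}$, which is a face via $a=b=\emptyset$; then $s\cup\emptyset=s=u$.
\end{itemize}
This gives the first equality.

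For the second equality it suffices to show that $\{t\in\sigma_sK\mid x\notin t\}$ equals the star complement $(\sigma_sK)\setminus\text{St}(\sigma_sK,\{x\})$. By the definition of \texttt{StarComplement} this is $\{t\in\sigma_sK\mid \neg\{x\}\subseteq t\}$, and $\{x\}\subseteq t\iff x\in t$ via \texttt{Finset.singleton\_subset\_iff}.

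The main obstacle is the bookkeeping in the membership characterisation: unfolding the nested $p_1$ images, and tracking that $x\notin a\cup b$. That fact needs $V(\text{Lk}(K,s))\subseteq V(K)$, $s\subseteq V(K)$, and \texttt{simplicialJoinProj\_mem\_vertices}. I would first isolate a lemma: $t\in\sigma_sK$ with $x\in t$ iff $t\setminus\{x\}\in\text{Lk}(K,s)\star\partial s$ (projected) or $t=\{x\}$. Both inclusions then reduce to short arguments.
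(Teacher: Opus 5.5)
Your proof is correct and follows the same route as the paper. Both split the faces of $\sigma_s K$ on whether $x \in t$, excise $x$ and glue $s$ back in the second case, recover $s$ itself from $t=\{x\}$, and obtain the second equality from $\{x\}\subseteq t \iff x\in t$. You are more explicit than the paper's sketch in two places: you write out the reverse inclusion via the split on $s\subseteq u$, and for $x\notin t$ you get $a\cup b\in K$ directly by downward closure from $a\cup s$, rather than first placing it in $K\setminus\text{St}(K,s)$ as the paper does.
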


As far as the authors are aware, there is no standard reference for this description, so we provide the intuition behind the construction, with details formalized in our code.

\begin{proof}
  There are two cases for any face $t \in \sigma_s K$. Either $x \notin t$, in which case $t \in K \setminus \text{St}(K, s)$, or $x \in t$ and hence $t$ is in the subdivided star. In the first case, $t \in K$, since $K \setminus \text{St}(K, s)$ is a subcomplex of $K$. In the second case (cf. Fig.~\ref{fig:stellar-weld}), we can excise the barycenter point $x$ and glue the face $s$ back in its place. Note that $s$ is recovered by $t = \{x\}$. The second equality follows since $x \in t$ iff $\{x\} \subseteq t$.
\end{proof}

\begin{figure}
    \centering
     \includegraphics[width=0.7\linewidth]{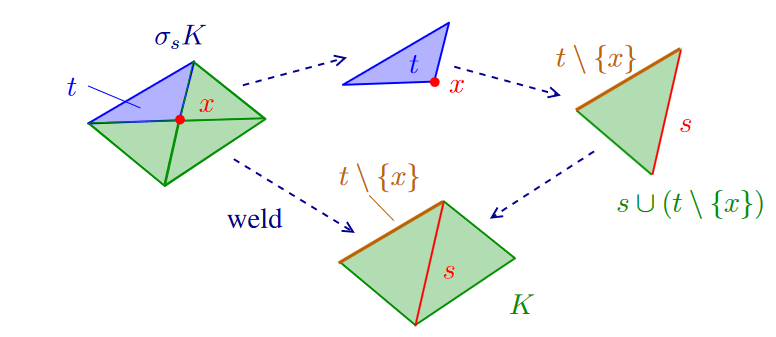}
    \caption{Example of a stellar weld, computing how the face $t \in \sigma_s K$ changes.}
    \label{fig:stellar-weld}
\end{figure}

\subsection{Stellar Equivalence}
We can consider sequences of stellar moves and generate an equivalence relation. We say that $K$ and $L$ are \emph{stellar equivalent} if there exists a finite sequence of stellar moves from $K$ to $L$, denoted $K \cong_\text{st} L$. Formally, this is the transitive closure of the stellar move relation.
\begin{lstlisting}
def StellarEquiv :
  AbstractSimplicialComplex E → AbstractSimplicialComplex E → Prop :=
    Relation.ReflTransGen StellarMove
\end{lstlisting}
This has the advantage that proofs can be straightforwardly done via transitivity or induction. We moreover provide support for quality-of-life tactics like \lstinline|calc|.

One challenge is to consider when typecasting from $E$ to $F$ preserves stellar equivalence. Suppose $K$, $L$ are simplicial complexes over $E$ and $Z$, $W$ over $F$. If $K \cong Z$, $L \cong W$, and $K \cong_\text{st} L$, can we say $Z \cong_\text{st} W$? The answer is: not always. The reason is that one needs to derive a sequence of stellar moves from $Z$ to $W$ given those from $K$ to $L$.

Recall that the definition of a stellar subdivision requires the existence of an unused value $x : E$ to act as the barycenter. Then, for example, if $F$ is a finite type such that there are more values $x_1, \ldots, x_n : E$ used in the relation $K \cong_\text{st} L$ than available in $F$, it may be impossible to directly derive $Z \cong_\text{st} W$. Thus, any result showing that stellar equivalence is preserved after typecasting must have an additional assumption to ensure that this is done without losing information about the barycenters.

\begin{theorem}[\texttt{stellarEquiv_iso}]\label{thm:iso-to-stellar-equiv}
  Let $K$, $L$ be simplicial complexes over $E$ and $Z$, $W$ over $F$ such that $K \cong Z$ and $L \cong W$. Let $f : E \to F$ be injective. Then,
  \[
    K \cong_\text{st} L \implies Z \cong_\text{st} W.
  \]
\end{theorem}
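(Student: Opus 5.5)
Since $K \cong Z$, $L \cong W$, and isomorphisms are stellar moves, it suffices to show that $K \cong_\text{st} L$ implies $f(K) \cong_\text{st} f(L)$. We then close the chain
\[
  Z \cong f(K) \cong_\text{st} f(L) \cong W
\]
by transitivity of $\cong_\text{st}$. Here $f(K)$ denotes the image complex, via \lstinline|SimplicialImage|. Because $f$ is injective, it is in particular injective on $V(K)$, so it is a simplicial coercion and $K \cong f(K)$ (\lstinline|SimplicialMap.coe|). Composing isomorphisms then gives $f(K) \cong Z$, and likewise $f(L) \cong W$.

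The main step is an induction on the \lstinline|Relation.ReflTransGen| derivation of $K \cong_\text{st} L$. The reflexive case is immediate. For the transitive case we need the single-step lemma: if $K_1 \to K_2$ is a stellar move over $E$, then $f(K_1) \cong_\text{st} f(K_2)$ over $F$. I would split this into three cases.

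\begin{itemize}
  \item \emph{Isomorphism.} If $K_1 \cong K_2$, then $f(K_1) \cong K_1 \cong K_2 \cong f(K_2)$, and this is a single move.
  \item \emph{Subdivision.} If $K_2 = \sigma_s K_1$ with barycenter $x \notin V(K_1)$, I would prove the commutation identity
  \[
    f(\sigma_s K_1) = \sigma_{f(s)} f(K_1)
  \]
  with barycenter $f(x)$. Injectivity of $f$ gives $f(s) \in f(K_1)$ and $f(x) \notin V(f(K_1))$, which are exactly the side conditions that make the right-hand side well defined.
  \item \emph{Weld.} This is the symmetric case and follows from the same identity.
\end{itemize}

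To prove the commutation identity, I would unfold \lstinline|StellarSubdivision| and distribute the coercion over each piece, applying the earlier results in turn:
\begin{itemize}
  \item \lstinline|simplicialCoe_union| for the union;
  \item \lstinline|starComplement_simplicialCoe_image| for $K \setminus \text{St}(K,s)$;
  \item \lstinline|link_simplicialCoe_image| for the link;
  \item \lstinline|faceBoundary_simplicialCoe_image| for $\partial s$.
\end{itemize}
For the $\{x\}$ part, the image of a simplex is again a simplex.

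I expect the main obstacle to be the join pieces, which are buried under the projections $p_1$. There I would not reason about $E \times \mathbb{k}$ at all. Instead I would use \lstinline|simplicialJoinProj_mem| to rewrite membership in $p_1(A \star B)$ as ``$t \cup u$ with $t \in A \cup \{\emptyset\}$, $u \in B \cup \{\emptyset\}$, nonempty''. Since the image of a union under $f$ is the union of the images, this gives
\[
  f(p_1(A \star B)) = p_1(f(A) \star f(B)).
\]
This requires $V(f(A))$ and $V(f(B))$ to be disjoint, which injectivity of $f$ guarantees via \lstinline|simplicialJoinProj_mem_vertices|. A lesser nuisance is that the Lean definitions carry the proofs \lstinline|s_in_K| and \lstinline|x_nin_K|. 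These should be handled by proving equality of face sets and then using extensionality of the structure, after which proof irrelevance disposes of them.

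Finally, global injectivity of $f$, rather than injectivity only on $V(K)$, is exactly what makes the induction work. Every intermediate complex and every fresh barycenter used along the chain is mapped injectively, so no barycenter $f(x)$ collides with an existing vertex. This is the point of the hypothesis in Theorem~\ref{thm:iso-to-stellar-equiv}.
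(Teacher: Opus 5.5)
Your proof is correct and follows the paper's argument. It runs induction over the chain of stellar moves, reduces each subdivision or weld step to $K \cong f(K)$ (from injectivity of $f$), and uses the identity $f(\sigma_s K) = \sigma_{f(s)} f(K)$ with barycenter $f(x)$. The differences are organizational: you push the whole chain through $f$ and attach $Z \cong f(K)$ and $f(L) \cong W$ at the ends, where the paper's lemma carries an arbitrary isomorphic copy forward step by step, and you derive the identity from the coercion lemmas for unions, star complements, links, boundaries and projected joins, where the paper unfolds the definitions directly and proves both inclusions.
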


Equivalently, stipulating the existence of an injective $f : E \to F$ means that $|E| \leq |F|$ on the level of cardinalities. This is to ensure that there is always ``room'' in $F$ to support any barycenters associated with the assumption $K \cong_\text{st} L$.

\begin{proof}
  We desire that a sequence of stellar moves over type $E$ induces a sequence of stellar moves over type $F$. If we show the case for a single stellar subdivision or stellar weld, the claim follows by induction. Therefore, it suffices to prove the following lemma.
\end{proof}

\begin{lemma}[\texttt{stellarMove_exists_iso}]
  Under the same assumptions, if $K \cong Z$ and there is a stellar move from $K$ to $L$, then there exists $W$ such that $L \cong W$ and $Z \cong_\text{st} W$.
\end{lemma}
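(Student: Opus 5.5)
We split on the three kinds of stellar move from $K$ to $L$: an isomorphism, a stellar subdivision, or a stellar weld. Throughout, let $g : E \to F$ denote the simplicial isomorphism witnessing $K \cong Z$, with inverse $h$. The injective map $f : E \to F$ is used only to guarantee that a fresh barycenter exists in $F$.

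\emph{Isomorphism case.} Suppose $L \cong K$. We take $W := Z$. Then $L \cong K \cong Z$ by composing isomorphisms (\texttt{SimplicialMap.comp}). Finally $Z \cong_\text{st} Z$ holds by reflexivity of \texttt{ReflTransGen}.

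\emph{Subdivision case.} Suppose $L = \sigma_s K$ with $s \in K$ and barycenter $x \notin V(K)$. We need a fresh $y \notin V(Z)$ and then set $W := \sigma_{g(s)} Z$ with barycenter $y$. This gives a single stellar move, so $Z \cong_\text{st} W$. To build the isomorphism $L \cong W$, define $g' : E \to F$ by $g'(x) = y$ and $g'(v) = g(v)$ otherwise, and define its inverse $h'$ symmetrically. We then check that $g'$ is a simplicial isomorphism, using the description
\[
\sigma_s K = K\setminus \text{St}(K,s) \cup p_1(p_1(\text{Lk}(K,s)\star\partial s)\star\{x\}).
\]
The pieces are handled by earlier results: \texttt{starComplement\_simplicialIso}, \texttt{link\_simplicialIso}, \texttt{faceBoundary\_simplicialIso}, and the join results (\texttt{simplicialJoin\_simplicialIso}, \texttt{simplicialJoinProj\_mem}). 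Alternatively, we can verify directly on faces, via the characterization of faces of the projected join, that $g'$ maps faces to faces and $h'$ maps faces back. The vertex set of $\sigma_s K$ is contained in $V(K)\cup\{x\}$, and $g'$ restricted to it is injective because $y \notin g(V(K)) = V(Z)$.

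\emph{Weld case.} Suppose $K = \sigma_s L$ with $x \notin V(L)$. Here we set $W$ to be the weld of $Z$ along $g(s)$ with barycenter $g(x)$, defined by the explicit face description in \texttt{stellarWeld\_faces}. We must check two things. First, $W$ is a simplicial complex with $g(x) \notin V(W)$ and $g(s) \in W$. Second, $Z = \sigma_{g(s)} W$. For the second point the cleanest route is to show $h(W) \cong L$ and transport subdivision along isomorphism as in the subdivision case. Concretely, we may instead take $W := g(L)$, restricting $g$ to $V(L) \subseteq V(K)\cup s$. Since $x \notin s$ and $x \notin V(L)$, the map $g$ is injective on $V(L)$ and so gives a coercion with $L \cong g(L)$. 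We then need $\sigma_{g(s)} g(L) = g(\sigma_s L) = Z$ with barycenter $g(x) \notin V(g(L))$, which is a commutation lemma of subdivision with coercions. That lemma follows from \texttt{starComplement\_simplicialCoe\_image}, \texttt{link\_simplicialCoe\_image}, \texttt{faceBoundary\_simplicialCoe\_image}, and the behavior of coercions on unions and projected joins (\texttt{simplicialCoe\_union}). The resulting equation is a subdivision move from $W$ to $Z$, hence $Z \cong_\text{st} W$ by symmetry of the move relation.

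\emph{Main obstacle.} The hardest step is producing the fresh barycenter $y \in F \setminus V(Z)$ in the subdivision case. $V(Z)$ may be all of $F$, and the injection $f$ alone does not obviously produce a free label: if $V(K)$ uses all of $E$ apart from $x$, then $V(Z)$ might exhaust $F$. The plan is to first relabel $Z$ along $f \circ h$, so that $Z \cong f(K)$ and $f(x) \notin V(f(K))$ by injectivity of $f$. We then perform the subdivision on the relabeled complex, so that freshness comes from $f(x)$. This requires that the stellar move relation over $F$ includes isomorphisms, which it does, so $Z \to f(h(Z))$ is itself a stellar move. This is where the injectivity hypothesis of Theorem~\ref{thm:iso-to-stellar-equiv} is genuinely used. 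With it in place, the weld case also simplifies to transporting along $f$.
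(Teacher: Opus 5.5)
Once your final paragraph is taken as the actual plan, this is exactly the paper's argument: relabel $Z$ to $f(K)$ by an isomorphism move, use $f(x)$ as the fresh barycenter, and invoke $f(\sigma_s K)=\sigma_{f(s)}f(K)$ (for welds, $W:=f(L)$ with $f(\sigma_s L)=\sigma_{f(s)}f(L)$). Your earlier $g$-based weld argument is not merely clumsier but wrong when $s=\{v\}$ is a single vertex, because then $v\notin V(K)$, so $g$ is unconstrained on $s$ and need not be injective on $V(L)$ or keep $g(x)$ fresh; this is why transporting along the globally injective $f$ is genuinely needed in the weld case too, not just a simplification.
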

\begin{proof}
  We sketch the argument in the stellar subdivision case. The stellar weld case is similar. The proof can be expressed as a diagram:
  \[\begin{tikzcd}
    K &&&& {\sigma_sK} \\
    \\
    Z && {f(K)} && {f(\sigma_sK) = \sigma_{f(s)}f(K)}
    \arrow["{\text{stellar move}}", from=1-1, to=1-5]
    \arrow["\cong"', from=1-1, to=3-1]
    \arrow["f", from=1-1, to=3-3]
    \arrow["f", from=1-5, to=3-5]
    \arrow["\cong"', from=3-1, to=3-3]
    \arrow[from=3-3, to=3-5]
  \end{tikzcd}\]
  Since $f$ is injective, $K \cong f(K)$. Hence, $f(K) \cong Z$ by transitivity. Likewise, $\sigma_s K \cong f(\sigma_s K)$ by injectivity. It suffices to show that $f(\sigma_s K) = \sigma_{f(s)} f(K)$. Then, $Z \cong_\text{st} \sigma_{f(s)} f(K)$, a stellar subdivision. Setting $W := \sigma_{f(s)}f(K)$ gives the desired result.

  The identity comes by carefully running through the definitions of each simplicial complex and showing both inclusions. The argument is elementary, but lengthy. Details are formalized in \lstinline|stellarSubdivision_injective_image|. In fact, this equality can be strengthened to simplicial coercions.
\end{proof}

\begin{proposition}[{\small\texttt{stellarCoe_stellarSubdivision}}]
  Let $\varphi : E \to F$ be a simplicial coercion on $K$. Then,
  \[
    \varphi(\sigma_s K) = \sigma_{\varphi(s)} \varphi(K).
  \]
\end{proposition}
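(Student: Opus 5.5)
The plan is to unfold the definition of $\sigma_s K$ and push $\varphi$ through each constructor, using the commutation lemmas already established for unions, star complements, links and face boundaries. One hypothesis must be made explicit first. For $\sigma_{\varphi(s)}\varphi(K)$ to make sense with barycenter $\varphi(x)$, we need $\varphi(x)\notin V(\varphi(K))$. I will therefore read "simplicial coercion on $K$" as also requiring $\varphi$ to be injective on $V(K)\cup\{x\}$, i.e. a coercion on $\sigma_s K$, whose vertex set is $V(K)\cup\{x\}$. Presumably the formal statement carries such an assumption on the barycenter.

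First, since $\sigma_s K = (K\setminus \text{St}(K,s)) \cup p_1(p_1(\text{Lk}(K,s)\star\partial s)\star\{x\})$, Proposition \texttt{simplicialCoe\_union} gives
\[
\varphi(\sigma_s K)=\varphi(K\setminus\text{St}(K,s))\cup\varphi\bigl(p_1(p_1(\text{Lk}(K,s)\star\partial s)\star\{x\})\bigr).
\]
The first summand equals $\varphi(K)\setminus\text{St}(\varphi(K),\varphi(s))$ by \texttt{starComplement\_simplicialCoe\_image}; this uses only injectivity on $V(K)$.

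For the second summand I would prove an auxiliary lemma. Let $A,B$ be complexes with $V(A)\cap V(B)=\emptyset$, and let $\varphi$ be injective on $V(A)\cup V(B)$. Then
\[
\varphi(p_1(A\star B)) = p_1(\varphi(A)\star\varphi(B)).
\]
The proof goes through the face description \texttt{simplicialJoinProj\_mem}. A face on the left has the form $\varphi(t\cup u)=\varphi(t)\cup\varphi(u)$ with $t\in A\cup\{\emptyset\}$ and $u\in B\cup\{\emptyset\}$, and conversely every face on the right has this form. Nonemptiness is preserved because images of nonempty sets are nonempty. The disjointness of $V(\varphi(A))$ and $V(\varphi(B))$, which is needed for the right side to be the intended object, follows from injectivity on the union of the vertex sets.

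Apply the lemma twice. The inner application takes $A=\text{Lk}(K,s)$ and $B=\partial s$. Their vertex sets lie in $V(K)$ and are disjoint, since link faces avoid $s$. The outer application takes $A=p_1(\text{Lk}(K,s)\star\partial s)$ and $B=\{x\}$. Here $x\notin V(K)$, and \texttt{simplicialJoinProj\_mem\_vertices} controls $V(A)$. Afterwards, rewrite $\varphi(\text{Lk}(K,s))=\text{Lk}(\varphi(K),\varphi(s))$ by \texttt{link\_simplicialCoe\_image}. Rewrite $\varphi(\partial s)=\partial\varphi(s)$ by \texttt{faceBoundary\_simplicialCoe\_image}, which needs injectivity on $s\subseteq V(K)$. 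Finally, $\varphi(\{x\})=\{\varphi(x)\}$ is immediate. Reassembling the pieces gives exactly the definition of $\sigma_{\varphi(s)}\varphi(K)$. Because faces are given as sets, the final equality of complexes is a set extensionality argument.

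I expect the main obstacle to be the bookkeeping of vertex sets and injectivity across the nested $p_1$-coercions. Each application of the auxiliary lemma requires identifying the vertex set of an intermediate complex and checking injectivity and disjointness there, including the role of $x$. I would discharge these by first proving
\[
V\bigl(p_1(p_1(\text{Lk}(K,s)\star\partial s)\star\{x\})\bigr)\subseteq V(K)\cup\{x\},
\]
then restricting injectivity to subsets of this set (\texttt{SimplicialCoe.restrict}). The remaining steps are routine rewrites.
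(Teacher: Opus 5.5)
Your proof is correct, but it is organized differently from the paper's. The paper establishes $f(\sigma_s K)=\sigma_{f(s)}f(K)$ by unfolding both complexes completely and checking both inclusions directly (for injective $f$, in \texttt{stellarSubdivision\_injective\_image}). This is an elementary but long case analysis over the pieces of a join face, and the paper only remarks that the result strengthens to coercions. You instead push $\varphi$ through one constructor at a time. You reuse the commutation lemmas for unions, star complements, links and face boundaries, and add a single lemma moving $\varphi$ past $p_1(A\star B)$. This is shorter and modular. It also shows exactly where injectivity matters: only in the link, star-complement and boundary lemmas (injectivity on $V(K)$, with $s\in K$ giving $s\subseteq V(K)$), and in the freshness of $\varphi(x)$. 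In fact your auxiliary lemma holds for arbitrary $\varphi$, $A$, $B$. By \texttt{simplicialJoinProj\_mem}, both sides consist of exactly the nonempty sets $\varphi(t)\cup\varphi(u)$ with $t\in A\cup\{\emptyset\}$ and $u\in B\cup\{\emptyset\}$, because images commute with unions and preserve nonemptiness. So the vertex-set bookkeeping you expected to be the main obstacle largely disappears. The paper's direct route, in exchange, is self-contained and does not depend on the precise side conditions of the intermediate lemmas.

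One correction to your setup. You are right that a freshness hypothesis is needed, since $\sigma_{\varphi(s)}\varphi(K)$ requires $\varphi(x)\notin V(\varphi(K))$. However, injectivity on $V(K)\cup\{x\}$ is not the same as being a coercion on $\sigma_s K$. If $s=\{v\}$, then $\partial s=\bot$ and $\sigma_s K$ is just $K$ with $v$ renamed to $x$, so $V(\sigma_s K)=(V(K)\setminus\{v\})\cup\{x\}$. A map that is a coercion on both $K$ and $\sigma_s K$ can then still have $\varphi(x)=\varphi(v)$, which leaves the right-hand side undefined. State the hypothesis as injectivity on $V(K)\cup\{x\}$, equivalently $\varphi(x)\notin\varphi(V(K))$, which is what your argument actually uses.
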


\subsection{Operations on Stellar Subdivisions}
Finally, we provide a study of how stellar subdivisions interact with some of the operations defined in Section~\ref{sec:operations}. The first result is about how the join of a subdivision can be rewritten as a subdivision of the join. This allows us to obtain results on stellar equivalences and joins.

\begin{proposition}[{\small\texttt{simplicialJoin_stellarSubdivision_left}}]
  \[
    (\sigma_{s}K) \star L = \sigma_{s \sqcup_s \emptyset}(K \star L).
  \]
\end{proposition}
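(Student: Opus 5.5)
The plan is to prove the identity by face-level extensionality, after first using the structural lemmas to cut the problem into two pieces. Both sides are complexes over $E \times \mathbb{k}$. On the right, the barycenter is $(x,0)$, and the subdivided face is $s \sqcup_s \emptyset = s \times \{0\}$.

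First I check that the right-hand side is well-formed. We need $s \sqcup_s \emptyset \in K \star L$, which holds by the definition of the join with $t = \emptyset$. We also need $(x,0) \notin V(K \star L)$. This follows because the vertices of $K \star L$ with second coordinate $0$ are exactly $V(K) \times \{0\}$, and $x \notin V(K)$.

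Next, unfold $\sigma_s K$ on the left as the union $K \setminus \text{St}(K,s) \,\cup\, p_1(p_1(\text{Lk}(K,s) \star \partial s) \star \{x\})$. Distributivity of the join over unions (\texttt{simplicialJoin\_simplicialUnion}) splits the left-hand side into
\[
  \bigl(K \setminus \text{St}(K,s)\bigr) \star L \;\cup\; p_1\bigl(p_1(\text{Lk}(K,s) \star \partial s) \star \{x\}\bigr) \star L .
\]
By \texttt{simplicialJoin\_starComplement\_left}, the first piece is exactly $(K \star L) \setminus \text{St}(K \star L, s \sqcup_s \emptyset)$, which is the first piece of $\sigma_{s \sqcup_s \emptyset}(K \star L)$. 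It therefore remains to identify the second pieces.

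For the second pieces I rewrite the ingredients on the right.
\begin{itemize}
  \item By \texttt{simplicialJoin\_link}, $\text{Lk}(K \star L, s \sqcup_s \emptyset) = \text{Lk}(K,s) \star \text{Lk}(L,\emptyset)$. Directly from the definition of the link, $\text{Lk}(L,\emptyset) = L$.
  \item The map $\iota : y \mapsto (y,0)$ is injective, so \texttt{faceBoundary\_simplicialCoe\_image} gives $\partial(s \sqcup_s \emptyset) = \iota(\partial s)$.
\end{itemize}
Then I compare faces using \texttt{simplicialJoinProj\_mem}.

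A face on the left has the form $(a \cup b \cup c) \sqcup_s d$, where $a \in \text{Lk}(K,s) \cup \{\emptyset\}$, $b \in \partial s \cup \{\emptyset\}$, $c \subseteq \{x\}$ and $d \in L \cup \{\emptyset\}$. A face on the right has the form $(a \sqcup_s d) \cup (b \sqcup_s \emptyset) \cup \iota(c)$. The key identity is that $\sqcup_s$ distributes over unions in its left argument:
\[
  (a \cup b \cup c) \sqcup_s d = (a \sqcup_s d) \cup (b \times \{0\}) \cup (c \times \{0\}).
\]
This matches the two descriptions, and each inclusion follows by choosing the corresponding witnesses.

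I expect the main obstacle to be bookkeeping rather than any conceptual step. There are many degenerate cases: each of $a, b, c, d$ may be empty, and every face must be nonempty. On the left, a face with $c = \emptyset$ must be shown to land in the star-complement part, or else it must be shown to be harmless because it already lies in both sides. I would handle this by proving the inclusion of the whole union rather than matching the pieces exactly. Concretely, any left face not containing $(x,0)$ is shown to lie in $(K \star L) \setminus \text{St}(K \star L, s \sqcup_s \emptyset)$, using $s \not\subseteq a \cup b$. Beyond that, the remaining work is repeatedly discharging the typecasts through $p_1$, for which I would use the rewrite lemmas \texttt{simplicialJoinProj\_mem} and \texttt{simplicialJoinProj\_mem\_vertices}.
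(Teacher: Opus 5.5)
Your argument is correct. The paper gives no written proof of this proposition, only its Lean name, and describes its stellar identities generally as proved by unfolding both face sets directly. You instead reduce the identity to previously stated lemmas: distributivity of $\star$ over $\cup$, \texttt{simplicialJoin\_starComplement\_left}, \texttt{simplicialJoin\_link} with $t=\emptyset$ together with $\text{Lk}(L,\emptyset)=L$, and \texttt{faceBoundary\_simplicialCoe\_image} for $\iota$. The only new computation left is the left-distributivity $(a\cup b\cup c)\sqcup_s d=(a\sqcup_s d)\cup(b\times\{0\})\cup(c\times\{0\})$ on the cone pieces. Your route gives a short, modular proof, while direct unfolding is self-contained but much longer. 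The cost of your route is that you inherit the side conditions of the reused lemmas, which the paper states without hypotheses. The star-complement identity fails for $s=\emptyset$. The link identity $\text{Lk}(K\star L,s\sqcup_s t)=\text{Lk}(K,s)\star\text{Lk}(L,t)$ needs $s\in K\cup\{\emptyset\}$ and $t\in L\cup\{\emptyset\}$, since otherwise the left side is empty while the right side need not be. So you should record that $s\in K$ (hence $s\neq\emptyset$), and check that the link lemma you cite allows $t=\emptyset\notin L$; the identity does hold in that case by a one-line unfolding. Finally, the degenerate cases you anticipate never arise. The parameter ranges for $a,b,c,d$ are identical on the two sides, each possibly empty with only the whole face required to be nonempty. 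So the cone pieces coincide exactly as face sets, and no face with $c=\emptyset$ has to be moved into the star-complement piece. Such a face lies there anyway, since $a\cup b\subseteq a\cup s\in K$ and $s\not\subseteq a\cup b$.
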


\begin{proposition}[{\small\texttt{simplicialJoin_stellarEquiv}}]
  If $K \cong_\text{st} L$ and $Z \cong_\text{st} W$, then
  \[
    K \star Z \cong_\text{st} L \star W.
  \]
\end{proposition}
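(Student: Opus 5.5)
We argue by reducing to single stellar moves on one side at a time, chaining them through stellar equivalence. Since $\cong_\text{st}$ is the reflexive--transitive closure of stellar moves, it suffices to prove
\[
  K \cong_\text{st} L \implies K \star Z \cong_\text{st} L \star Z
  \qquad\text{and}\qquad
  Z \cong_\text{st} W \implies L \star Z \cong_\text{st} L \star W .
\]
The statement then follows by transitivity: $K \star Z \cong_\text{st} L \star Z \cong_\text{st} L \star W$. Each implication is proved by induction on the \lstinline|Relation.ReflTransGen| derivation. The reflexive case is trivial, and the inductive step reduces to showing that a single stellar move $K \to K'$ induces a stellar move (or a short chain of them) $K \star Z \to K' \star Z$, and similarly on the right.

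For a single move on the left there are three cases.

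\emph{Stellar subdivision} $K' = \sigma_s K$. By \lstinline|simplicialJoin_stellarSubdivision_left|, $(\sigma_s K) \star Z = \sigma_{s \sqcup_s \emptyset}(K \star Z)$, which is literally a stellar subdivision of $K \star Z$. We must check the side conditions:
\begin{itemize}
  \item $s \sqcup_s \emptyset \in K \star Z$ holds because $K$ embeds into the join.
  \item The new barycenter, $(x,0)$ in $E \times \mathbb{k}$, must not be a vertex of $K \star Z$. This requires $x \notin V(K)$, which we have, but the barycenter also carries index $0$, so we must rule out a collision with vertices of $Z$. Those have second coordinate $1$, so there is none.
\end{itemize}

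\emph{Stellar weld} $K = \sigma_s K'$. Read the same identity backwards: $K \star Z = \sigma_{s\sqcup_s\emptyset}(K'\star Z)$, so $K\star Z \to K'\star Z$ is a weld.

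\emph{Isomorphism} $K \cong K'$. Use \lstinline|simplicialJoin_simplicialIso| together with the trivial isomorphism $Z \cong Z$.

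For moves on the right we do not have a right-handed subdivision lemma stated, so we conjugate by commutativity. From \lstinline|simplicialJoin_comm|, $L \star Z \cong Z \star L$ and $Z \star L \to Z' \star L$ by the left case. We then want $Z' \star L \cong L \star Z'$. Isomorphisms are themselves stellar moves, so the chain
\[
  L\star Z \cong Z\star L \cong_\text{st} Z'\star L \cong L\star Z'
\]
is a stellar equivalence.

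The main obstacle is typing. $K \star Z$ lives over $E \times \mathbb{k}$, while stellar equivalence is a relation on complexes over a fixed type. The left case stays inside $E\times\mathbb{k}$, so it is fine. The commutativity isomorphisms, however, are isomorphisms between complexes over the same type $E\times\mathbb{k}$ (swap the index $0\leftrightarrow 1$), so they are legitimate stellar moves. I would first check that \lstinline|simplicialJoin_comm| provides an isomorphism within one type. If it does not, I would use the index swap $(e,i)\mapsto(e,1-i)$ directly. This swap is injective, so Theorem~\ref{thm:iso-to-stellar-equiv} could alternatively transport the left-case equivalence along it.

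The remaining risk is the barycenter freshness bookkeeping in the subdivision case, handled as above by the index coordinate.
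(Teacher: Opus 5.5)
Your proposal is correct and takes essentially the approach the paper indicates. You induct on each stellar-equivalence chain and use \texttt{simplicialJoin\_stellarSubdivision\_left} to turn each subdivision or weld of a factor into a subdivision or weld of the join, with the barycenter $(x,0)$ fresh because $x\notin V(K)$ and $0\neq 1$; isomorphism moves go through \texttt{simplicialJoin\_simplicialIso}, and the second factor is handled by conjugating with commutativity, which is legitimate because isomorphisms of complexes over $E\times\mathbb{k}$ are themselves stellar moves. One small caveat: your fallback swap $(e,i)\mapsto(e,1-i)$ is not defined over a mere semiring $\mathbb{k}$, but you do not need it, since \texttt{simplicialJoin\_comm} already gives an isomorphism between complexes over the same type $E\times\mathbb{k}$.
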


We also consider how stellar subdivisions interact with links. This is a core issue in the theory of manifolds, specifically if one wants to study subdivisions on the boundary of an $n$-dimensional combinatorial manifold $M^n$:
\[
  \partial M^n := \{s \in M^n \mid \forall k,\; \text{Lk}(M^n, s) \not\cong_\text{st} S^k\}.
\]
Here, the ``sphere'' $S^k$ is defined as
\[
  S^k := \mathcal{P}(\{0, \ldots, k + 1\}) \setminus \{\{0, \ldots, k + 1\}, \emptyset\}.
\]

Thus, given $t \in \sigma_s K$, we want to consider $\text{Lk}(\sigma_s K, t)$ in terms of the original complex $K$ when possible. In particular, when can we derive an equality, isomorphism, or stellar equivalence? The simplest case is if $t \in K \setminus \text{St}(K, s)$ and $t \notin \text{Lk}(K, s) \star \partial s \star \{x\}$. Geometrically, $t$ is \emph{outside} the neighborhood being subdivided. Thus, the link should be unaffected by a stellar subdivision around the face $s$.

\begin{theorem}[{\small\texttt{stellarSubdivision_link_of_starComplement}}]
  If $t \in K \setminus \text{St}(K, s)$ and $t \notin \text{Lk}(K, s) \star \partial s$, then
  \[
    \text{Lk}(\sigma_s K, t) = \text{Lk}(K, t).
  \]
\end{theorem}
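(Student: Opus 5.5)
The plan is to prove the equality of links by double inclusion on faces, after first rewriting the hypothesis into a cleaner form. Throughout I use the face description of $\sigma_s K$ given by \texttt{simplicialJoinProj\_mem} (applied twice, to the two $p_1$-casts). A face of $\sigma_s K$ lies in one of two families:
\begin{itemize}
  \item the \emph{outer} part, $A := K \setminus \text{St}(K,s) = \{u \in K \mid s \not\subseteq u\}$;
  \item the \emph{inner} part, $B$, consisting of the nonempty sets $a \cup b \cup c$ with $a \in \text{Lk}(K,s) \cup \{\emptyset\}$, $b \in \partial s \cup \{\emptyset\}$ and $c \subseteq \{x\}$.
\end{itemize}
Since $x \notin V(K)$ and $t \in K$, we have $x \notin t$.

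\textbf{Step 1: reformulate the hypothesis.} I first show that, for $t \in K$ with $s \not\subseteq t$,
\[
t \notin p_1(\text{Lk}(K,s) \star \partial s) \iff s \cup t \notin K .
\]
Suppose $s \cup t \in K$. Put $a := t \setminus s$ and $b := t \cap s$.
\begin{itemize}
  \item $a \cup s = s \cup t \in K$ and $a \cap s = \emptyset$, so $a \in \text{Lk}(K,s) \cup \{\emptyset\}$.
  \item $b \subsetneq s$ because $s \not\subseteq t$, so $b \in \partial s \cup \{\emptyset\}$ by \texttt{faceBoundary\_mem\_iff\_subset}.
\end{itemize}
Hence $t = a \cup b$ lies in the projected join, which proves the direction $(\Rightarrow)$ by contraposition. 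Conversely, if $t = a \cup b$ with $a,b$ as in the description, then $s \cup t = s \cup a \in K$ (or $s \cup t = s \in K$ when $a = \emptyset$). From now on I only use the hypothesis in the form $s \cup t \notin K$.

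\textbf{Step 2: $\text{Lk}(\sigma_s K, t) \subseteq \text{Lk}(K,t)$.} Let $u \in \sigma_s K$ with $t \cup u \in \sigma_s K$ and $t \cap u = \emptyset$. The key claim is that $t \cup u \notin B$. Suppose instead $t \cup u = a \cup b \cup c$. Since $x \notin t$, we get $t \subseteq a \cup b$. Then
\[
s \cup t \subseteq s \cup a \in K ,
\]
so $s \cup t \in K$ by \texttt{down\_closed}, contradicting Step 1. Therefore $t \cup u \in A \subseteq K$. By down-closure $u \in K$ (it is nonempty, being a face of $\sigma_s K$), so $u \in \text{Lk}(K,t)$.

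\textbf{Step 3: $\text{Lk}(K,t) \subseteq \text{Lk}(\sigma_s K, t)$.} Let $u \in K$ with $t \cup u \in K$ and $t \cap u = \emptyset$.
\begin{itemize}
  \item If $s \subseteq t \cup u$, then $s \cup t \subseteq t \cup u \in K$, again contradicting Step 1. So $t \cup u \in A \subseteq \sigma_s K$.
  \item Since $u \subseteq t \cup u$, also $s \not\subseteq u$, so $u \in A \subseteq \sigma_s K$.
\end{itemize}
Hence $u \in \text{Lk}(\sigma_s K, t)$.

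\textbf{Expected obstacle.} The mathematics is short once Step 1 is in place. The main difficulty I expect is the bookkeeping through the two nested $p_1$-casts when unfolding membership in $\sigma_s K$, and the case analysis on whether $a$, $b$, $c$ are empty. I would isolate the characterization of the inner part $B$ as a standalone rewrite lemma first, so that Steps 2 and 3 reduce to the finite-set reasoning above.
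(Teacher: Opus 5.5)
Your argument is correct and follows the paper's route: a direct double inclusion obtained by unfolding the definitions of $\sigma_s K$, the links, and the projected joins. The only detail left implicit is that $t \setminus s \in K$ by down-closure, which Step 1 needs in order to place it in $\text{Lk}(K,s)$. Recasting the hypothesis as $s \cup t \notin K$ (so $t$ lies outside the closed star of $s$) neatly isolates the one decomposition $t = (t \setminus s) \cup (t \cap s)$ that both inclusions use, but it organizes that same unfolding rather than replacing it.
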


The assumption $t \notin \text{Lk}(K, s) \star \partial s$ is sufficient, since one can think of $\text{Lk}(K, s) \star \partial s$ as the boundary of the star being subdivided (cf. Figure~\ref{fig:stellar-subdivision}). Equivalently,

\begin{theorem}[{\small\texttt{stellarSubdivision_link_of_barycenter}}, {\small\texttt{star_boundary_is_join}}]
  \[
    K \setminus \text{St}(K, s) \cap \text{Lk}(K, s) \star \partial s \star \{x\} = \text{Lk}(K, s) \star \partial s = \text{Lk}(\sigma_s K, \{x\}).
  \]
\end{theorem}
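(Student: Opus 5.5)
We prove the two equalities separately, working throughout with the projected forms of joins. Since the components $\text{Lk}(K,s)$, $\partial s$ and $\{x\}$ have pairwise disjoint vertex sets ($x \notin V(K)$, and link faces are disjoint from $s$), we may use \texttt{simplicialJoinProj\_mem} to rewrite every join as the set of nonempty unions of faces drawn from each factor, with $\emptyset$ allowed. Concretely, a face of $\text{Lk}(K,s) \star \partial s \star \{x\}$ is a nonempty set $a \cup b \cup c$ with $a \in \text{Lk}(K,s)\cup\{\emptyset\}$, $b \subsetneq s$, and $c \in \{\emptyset,\{x\}\}$. Both equalities then become equalities of face sets, which we prove by double inclusion.

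\textbf{First equality.} For $\supseteq$, take $t = a \cup b$ with $a$ in the link (or empty) and $b \subsetneq s$, not both empty. Then $t \subseteq a \cup s \in K$, so $t \in K$ by \texttt{down\_closed}. Moreover $s \not\subseteq t$: since $a \cap s = \emptyset$, we have $t \cap s = b \neq s$. Hence $t$ lies in the star complement. It also lies in the triple join, taking $c = \emptyset$.

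For $\subseteq$, take $t$ in both complexes. Since $t \in K$ and $x \notin V(K)$, we get $x \notin t$. So in the join decomposition $c = \emptyset$ and $t = a \cup b$, which is exactly a face of $\text{Lk}(K,s)\star\partial s$. The one subtlety is that $b \subsetneq s$ needs no further check; membership in $\partial s$ is given by \texttt{faceBoundary\_mem\_iff\_subset}.

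\textbf{Second equality.} We unfold $\text{Lk}(\sigma_s K,\{x\})$ as the set of faces $t \in \sigma_s K$ with $x \notin t$ and $t \cup \{x\} \in \sigma_s K$.

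For $\supseteq$, take $t = a \cup b$ as above. Then $t \cup \{x\}$ is a face of the triple join (taking $c = \{x\}$), and $t$ is itself a face of $\sigma_s K$ (taking $c = \emptyset$). Since $x \notin t$ because $x \notin V(K)$, $t$ lies in the link.

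For $\subseteq$, suppose $t \cup \{x\} \in \sigma_s K$ and $x \notin t$. Then $t \cup \{x\}$ cannot lie in $K \setminus \text{St}(K,s)$, since it contains $x \notin V(K)$. So $t \cup \{x\} = a \cup b \cup c$ in the triple join. Removing $x$ and using $x \notin a \cup b$ gives $t = a \cup b$, and $t \neq \emptyset$ because it is a face.

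\textbf{Main obstacle.} We expect the difficulty to lie not in the combinatorics but in the bookkeeping of the nested $p_1$-images in the Lean definition. In particular, we must establish that $p_1$ is injective on the vertices of $p_1(\text{Lk}(K,s)\star\partial s)\star\{x\}$. This needs $V(p_1(\text{Lk}(K,s)\star\partial s)) \subseteq V(K)$, obtained from \texttt{simplicialJoinProj\_mem\_vertices}, together with $x \notin V(K)$. We will first prove a flattened membership lemma for the triple join in the $a \cup b \cup c$ form, and then use it in both equalities. The remaining set manipulations, such as $(a\cup b\cup\{x\})\setminus\{x\} = a\cup b$, are routine.
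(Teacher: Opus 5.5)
Your proposal is correct and takes the same route the paper indicates: these identities are proved by unfolding the projected joins, the star complement and the link into flat face descriptions and checking double inclusion, with $x \notin V(K)$ separating the barycenter from faces of $K$. The $p_1$-injectivity you flag as the main obstacle is at most a well-formedness matter for the definition of $\sigma_s K$; your argument never uses uniqueness of the decomposition $a \cup b \cup c$, only $a \cup b \subseteq V(K)$ and $x \notin V(K)$.
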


\begin{figure}
    \centering
     \includegraphics[width=0.8\linewidth]{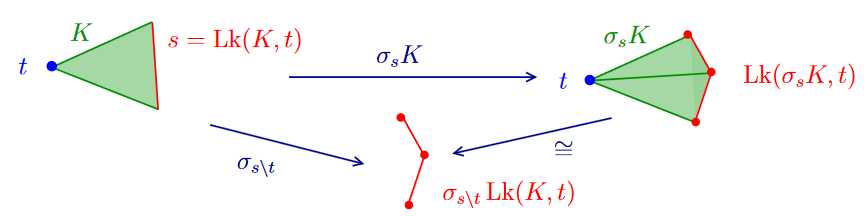}
    \caption{Geometric example of the identity stated in Theorem~\ref{thm:anticomm}.}
    \label{fig:link-subdivision}
\end{figure}

In the case where $t \in \text{Lk}(K, s) \star \partial s$, we obtain a more complicated relation. In general, $\text{Lk}(\sigma_s K, t)$ will contain $x$, thus any relationship to $\text{Lk}(K, t)$ cannot be an equality. Indeed, it need not be an isomorphism either. However, we can derive a stellar equivalence, in which the link and stellar subdivision ``anticommute'' in a precise sense (cf. Figure~\ref{fig:link-subdivision}).

\begin{theorem}[{\small\texttt{stellarSubdivision_anticomm_link}}]\label{thm:anticomm}
  If $t \in \text{Lk}(K, s) \star \partial s$, then
  \[
    \text{Lk}(\sigma_s K, t) = \sigma_{s \setminus t}(\text{Lk}(K, t)).
  \]
\end{theorem}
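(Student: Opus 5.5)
The plan is to reduce both sides to an explicit membership criterion for faces, and then prove equality of face sets by double inclusion. The case split is on whether the candidate face contains the barycenter $x$.

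\textbf{Setup.} First unpack the hypothesis. By \texttt{simplicialJoinProj\_mem}, $t \in \text{Lk}(K,s) \star \partial s$ means $t = a \cup b$ with $t \neq \emptyset$, where
\begin{itemize}
\item $a \in \text{Lk}(K,s) \cup \{\emptyset\}$, and
\item $b \subsetneq s$ (here $b = \emptyset$ is allowed).
\end{itemize}
Since $a \cap s = \emptyset$, we get $t \cap s = b$ and $s \setminus t = s \setminus b \neq \emptyset$. Moreover, $(s\setminus t) \cup t = s \cup a \in K$ and $(s \setminus t) \cap t = \emptyset$, so $s\setminus t \in \text{Lk}(K,t)$. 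Also $x \notin V(\text{Lk}(K,t))$ because $x \notin V(K)$. Hence the right-hand side $\sigma_{s\setminus t}(\text{Lk}(K,t))$ is well formed, with the same barycenter $x$.

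\textbf{Key lemma.} Next, prove a single characterization of faces of a stellar subdivision, using the rewrite lemmas \texttt{simplicialJoinProj\_mem} and \texttt{simplicialJoinProj\_mem\_vertices} to erase the $p_1$ typecasts. For $x \notin V(K)$ and $s \in K$, a set $u$ is a face of $\sigma_s K$ iff $u \neq \emptyset$ and one of the following holds:
\begin{itemize}
\item $x \notin u$, $u \in K$ and $s \not\subseteq u$; or
\item $x \in u$, writing $u' = u \setminus \{x\}$, we have $u' \cup s \in K$, $u' \cap s \neq s$, and $u'$ splits as (link part) $\cup$ (proper subface of $s$).
\end{itemize}
The last splitting condition is automatic: from $u'\cup s\in K$ we may take the link part $u'\setminus s$ and the subface part $u'\cap s$, and $u' \cap s \neq s$ is equivalent to $s \not\subseteq u'$. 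This is essentially the content behind \texttt{stellarWeld\_faces}, and I would state and prove it once as a standalone lemma.

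\textbf{Double inclusion.} Now take $u$ with $u \cap t = \emptyset$ and compare the conditions "$u \cup t \in \sigma_s K$" and "$u \in \sigma_{s\setminus t}(\text{Lk}(K,t))$".
\begin{itemize}
\item \emph{Case $x \notin u$.} The left condition reads $u \cup t \in K$ and $s \not\subseteq u\cup t$. The right condition reads $u \in \text{Lk}(K,t)$ and $s\setminus t \not\subseteq u$. These agree because $s \subseteq u \cup t$ iff $s \setminus b \subseteq u$, using $s \cap t = b$.
\item \emph{Case $x \in u$, with $u' = u\setminus\{x\}$.} The left condition reads $u' \cup t \cup s = u' \cup a \cup s \in K$ and $s \not\subseteq u' \cup t$. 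The right condition reads $u' \cup (s\setminus t) \in \text{Lk}(K,t)$ and $s\setminus t \not\subseteq u'$.
\end{itemize}
In the second case, the union condition matches since $u' \cup (s \setminus b) \cup t = u' \cup s \cup a$. The disjointness $(u' \cup (s\setminus b)) \cap t = \emptyset$ follows from three facts: $u'\cap t=\emptyset$, $(s\setminus b)\cap a=\emptyset$ since $a \in \text{Lk}(K,s)$, and $(s\setminus b)\cap b=\emptyset$. The non-containment conditions match as in the first case.

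Finally, the requirement $u \cap t = \emptyset$ on the left must correspond to the link on the right imposing no extra condition. It does, because $x \notin t$, so $u \cap t = u' \cap t$.

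\textbf{Main obstacle.} I expect the hardest part to be the key lemma itself, i.e.\ cleanly transporting membership through the two nested $p_1$-images of joins. Along with it come the degenerate cases: $a = \emptyset$, $b = \emptyset$, and $u = \{x\}$, which corresponds to $u' = \emptyset$ and recovers $s\setminus t$ itself. I would first prove the characterization with all emptiness side-conditions made explicit. After that, the double inclusion above is bookkeeping with finite-set identities.
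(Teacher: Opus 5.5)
Your proposal is correct, but it takes a different route from the paper. The paper unfolds both sides along the join structure. It writes a face of $\sigma_{s\setminus t}(\text{Lk}(K,t))$ as $u_1\cup u_2\cup u_3$ with $u_1\in\text{Lk}(\text{Lk}(K,t),s\setminus t)$, $u_2\in\partial(s\setminus t)$, $u_3\in\{x\}$, and splits $t=t_1\cup t_2$. It then runs through cases on which components are empty, using auxiliary facts such as $s\setminus t_1=s$ and $s\setminus t_2\neq\emptyset$. You instead front-load the work into one membership criterion for an arbitrary stellar subdivision. In that criterion the join decomposition of a face containing $x$ is forced to be $u'\setminus s$, $u'\cap s$, $\{x\}$. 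Applying it once to $\sigma_sK$ and once to $\sigma_{s\setminus t}(\text{Lk}(K,t))$ reduces the theorem to a split on $x\in u$ plus two identities. The first is $s\subseteq u\cup t\iff s\setminus t\subseteq u$, which holds for arbitrary sets, so you do not need $s\cap t=b$ there. The second is $u'\cup(s\setminus t)\cup t=u'\cup s\cup t$.

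This route has three advantages:
\begin{itemize}
\item The degenerate cases ($a=\emptyset$, $b=\emptyset$, $u=\{x\}$) are handled once inside the lemma rather than in every branch.
\item The lemma is a reusable membership-level counterpart of \texttt{stellarWeld\_faces}, and would likely also shorten the other link, join and coercion identities.
\item It shows that the hypothesis on $t$ enters only through $s\setminus t\in\text{Lk}(K,t)$, i.e.\ $t\cup s\in K$ and $s\not\subseteq t$, which also forces $x\notin t$. For nonempty $t$ this is equivalent to $t\in\text{Lk}(K,s)\star\partial s$, so the theorem is stated at its natural generality.
\end{itemize}
The paper's route needs no intermediate characterization, but it pays for that in length.

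Two steps you leave implicit should be written out:
\begin{itemize}
\item The left-hand link also requires $u\in\sigma_sK$. This follows from $u\cup t\in\sigma_sK$ by downward closure, since $u\neq\emptyset$.
\item When $x\in u$, the link condition $u'\cup(s\setminus t)\in K$ follows from $u'\cup s\cup a\in K$ by downward closure, using $s\setminus t\neq\emptyset$.
\end{itemize}
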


In all cases, these identities are proven directly by unfolding the definitions of both sets in the equality. This results in rather lengthy arguments (the previous theorem alone constitutes $\sim$2,000 lines of code), but yields direct, elementary proofs. This perhaps demonstrates why some of this theory has resolved into folklore status, since arguments can involve an immense amount of tedium. In fact, to our knowledge, the previous theorem has no standard reference in the literature. We run through a simple portion of the argument to demonstrate the general flow of these proofs and provide a more complete reference for the above.

\begin{proof}
  We show the backwards inclusion. Let $u \in \sigma_{s \setminus t} (\text{Lk}(K, t))$. By definition,
  \[
    u \in \text{Lk}(K, t) \setminus \text{St}(\text{Lk}(K, t), s \setminus t) \lor u \in \text{Lk}(\text{Lk}(K, t), s \setminus t) \star \partial (s \setminus t) \star \{x\}.
  \]
  To obtain $u \in \text{Lk}(\sigma_s K, t)$, we must show $u \in \sigma_s K$, $t \cup u \in \sigma_s K$, and $t \cap u = \emptyset$.

  If $u \in \text{Lk}(K, t) \setminus \text{St}(\text{Lk}(K, t), s \setminus t)$, then unfolding the definition gives $u$, $t \cup u \in K$; $t \cap u = \emptyset$; and $s \setminus t \not\subseteq u$. Note that $u$, $t \cup u \in K \setminus \text{St}(K, s)$ since $u$, $t \cup u \in K$ and since $s \setminus t \not\subseteq u$ implies both $s \not\subseteq u$ and $s \not\subseteq t \cup u$. Hence, $u \in \text{Lk}(\sigma_s K, t)$.

  If $u \in \text{Lk}(\text{Lk}(K, t), s \setminus t) \star \partial (s \setminus t) \star \{x\}$, then $u = u_1 \cup u_2 \cup u_3$ mutually disjoint and not all empty where $u_1 \in \text{Lk}(\text{Lk}(K, t), s \setminus t)$, $u_2 \in \partial (s \setminus t)$, $u_3 \in \{x\}$. Likewise, we can split $t = t_1 \cup t_2$ where $t_1 \in \text{Lk}(K, s)$ and $t_2 \in \partial s$. We can derive by unfolding the definitions:
  \begin{enumerate}
    \item $s \setminus t_1 = s$
    \item $s \setminus t_2 \neq \emptyset$
    \item If $u_1 = \emptyset$ or $u_2 \cup u_3 = \emptyset$, then $s \not\subseteq u$.
  \end{enumerate}

  If $u_2 \cup u_3 = \emptyset$, then $u \in \text{Lk}(\text{Lk}(K, t), s \setminus t)$. The definitions and above properties yield $u$, $t \cup u \in K \setminus \text{St}(K, s)$. $t \cap u = \emptyset$ follows since $u \in \text{Lk}(K, t)$. Therefore, $u \in \text{Lk}(\sigma_s K, t)$.
  
  Otherwise, suppose $u_1 = \emptyset$. Then, $\partial (s \setminus t) \star \{x\} \subseteq \partial s \star \{x\}$ implies $u = u_2 \cup u_3 \in \sigma_s K$. Properties (1) and (2) give $t_2 \cup u_2 \in \partial s$, hence $t \cup u \in \sigma_s K$. $x \notin t$, so $t \cap u_3 = \emptyset$. Moreover, $u_2 \subset s \setminus t$ implies $t \cap u_2 = \emptyset$. Therefore, $t \cap u = \emptyset$ and $u \in \text{Lk}(\sigma_s K, t)$.

  If $u_1 \neq \emptyset$, then $u \in \sigma_s K$ follows by combining the above arguments. $t \cup u \in \sigma_s K$ follows by splitting $t \cup u = (t_1 \cup u_1) \cup (t_2 \cup u_2) \cup u_3$. Similar arguments give that $t_1 \cup u_1 \in \text{Lk}(K, s)$, $t_2 \cup u_2 \in \partial s$, and $u_3 \in \{x\}$. Likewise, $t \cap u = \emptyset$ follows since $t \cap u_1 = \emptyset$ by definition then running the previous argument to get $t \cap u_2 = t \cap u_3 = \emptyset$. Therefore, $u \in \text{Lk}(\sigma_s K, t)$.
\end{proof}

\section{Discussion}\label{sec:discussion}

\subsection{Design \& Challenges}
This project stemmed from a desire to see a formalized proof of Alexander and Pachners' theorems on (bi)stellar subdivisions. In particular, to validate the fine details, primarily following the purely combinatorial proof by Hannes~\cite{friedl2024pachner}. The first necessary step is a useful formalization of simplicial complexes and stellar subdivisions. Throughout, we were motivated by three main points: (a) applications to manifolds, (b) maintaining computability, and (c) maintaining generality. The first point motivated our choice of results on stellar subdivisions. The second point is motivated by a downstream desire to use subdivisions for concrete computations in applications. The third point arose by testing the assumptions and conclusions of Hannes' results. In many cases, we were able to remove conditions like finiteness or strengthen results to equality instead of isomorphism.

\subsubsection{Simplicial Complexes}
This project began before the addition of \lstinline|SimplicialComplex| to \lstinline|mathlib|. Initially, our goal was to reformulate our results, but eventually pivoted to isolating just abstract simplicial complexes, since supporting the extra geometric theory was too large of a detour for our purposes.

We highlight some differences in conventions between our original work and the current formalization. First, we relax the assumptions on the labeling type $E$. Since we live in the setting of purely combinatorial objects, the additional requirement of being a module over some ring $\mathbb{k}$ is unnecessary. We foresee this making work connecting \lstinline|mathlib|'s formalization of the simplex category (defined over $\mathbb{N}$) and undirected graphs (defined over arbitrary vertex type $V$) as 1-dimensional complexes far more straightforward. Indeed, we demonstrate the latter in Remark~\ref{rem:graphs}. We set up future work by inducing instances of \lstinline|SimplicialComplex|, forwarding the results presented here to the geometric case.

Second, we originally allowed $\emptyset$ as a $(-1)$-dimensional face. This is simply a matter of convention---all of our results still hold regardless of which is used. Our original choice simply made much shorter proofs. For example, as seen in Theorem~\ref{thm:anticomm}, faces in
\[
  \text{Lk}(K, s) \star \partial s \star \{x\}
\]
split into unions from each component of the join where at least one is nonempty, generating a fair amount of case work that $\emptyset$ can normally shortcut. We align our convention with \lstinline|mathlib|'s \lstinline|SimplicialComplex| for the purpose of compatibility, which excludes $\emptyset$ as a face.

\subsubsection{Pre-abstract Simplicial Complexes}
Since the preparation of this manuscript, \lstinline|mathlib| has accepted independent formalizations of (pre-)abstract simplicial complexes under the \lstinline|AlgebraicTopology| library. Their definition of \lstinline|PreAbstractSimplicialComplex| is identical to our formalization presented here. They provide instances of partial order and lattice structures, as well as the complex obtained as the image of a map $E \to F$. Additionally, they define \lstinline|AbstractSimplicialComplex| as an extension of \lstinline|PreAbstractSimplicialComplex| by the additional totality axiom:
\begin{lstlisting}
singleton_mem : ∀ v : E, {v} ∈ faces
\end{lstlisting}

We alluded to the issues from including a totality axiom in Remark~\ref{rem:isomorphisms}. Formalizing subdivisions often requires obtaining fresh labels for new vertices generated in the process. This is true for barycentric, stellar, and chromatic subdivisions, for example. Therefore, formalizing subdivisions for both objects will require a radically different approach. We outline an approach that is applicable for \lstinline|PreAbstractSimplicialComplex| in this paper, but highlight some challenges as applies to their \lstinline|AbstractSimplicialComplex|.

Notably, if all values are currently used as labels, then one must find a source of fresh labels either within the existing type or from a new type. In the former case, one must impose a cardinality assumption that $E$ is infinite and supply a map to shift labels so that a value is freed up. In the latter case, one can define the subdivision as being labeled in a new type that contains fresh values. However, these both conceivably generate an immense overhead in handling proper typecasting.

A downstream consequence is that many of the identities proven above are no longer equalities, but rather only hold up to isomorphism. Even if the base type $E$ remains unchanged, this occurs with the added data of an injection $f : E \to E$ to obtain new labels. For example, consider $K \setminus \text{St}(K, s)$ as a subcomplex of both $K$ and $\sigma_s K$ where we use some injective $f$ to label the barycenter. Under this formalization, these two objects are not equal, since $f$ changes the labels of vertices. They are merely isomorphic.

Likewise, one often works with finite abstract simplicial complexes labeled over $\mathbb{N}$ in practice. In this setting, there is an infinite selection of fresh labels for subdivisions to choose from. However, a totality axiom will require a simplicial complex with $n$ vertices to use the type \lstinline|Fin| $n$ and typecast to some \lstinline|Fin| $(n + m)$ whenever $m$ vertices are added by some subdivision.

\subsubsection{Simplicial Maps}
To suitably define simplicial maps for \lstinline|SimplicialComplex|, one needs to carefully preserve the affine independence and gluing conditions. There are many equivalent definitions of simplicial maps, most of which either directly state it for embedded complexes or state it in the abstract case and extend to the embedded case. We believe following the latter would be the more successful approach, but leave some concerns for consideration.

Namely, one needs to extend a simplicial map $f : E \to F$ to be linear on the faces. Obviously, this is true if one simply requires $f : E \to_l\!\![\mathbb{k}]\; F$ to be linear, but this misses many simplicial maps. Hence, a more refined implementation is needed to capture the entire category of simplicial complexes. We judged this complication as too far beyond our interests to resolve at the moment. Particularly, proving that maps are simplicial makes up a substantial portion of our code. Proving linearity conditions would at least double this work for no benefit. Hence, we find it better to leave the abstract definition in a form where one can later provide the correct extension to the geometric case.

Finally, we note that one could, in general, implement isomorphisms to utilize \lstinline|Equiv| or \lstinline|EquivLike| from \lstinline|mathlib|. For example, one may then easily formalize the automorphism group of a simplicial complex. However, both expect the maps to be equivalences of the underlying types. This poses a problem, as we explicitly do not require simplicial isomorphisms to have inverses on the entire ambient type, but rather only on the vertex sets of the respective simplicial complexes. One might resolve this issue with \lstinline|PartialEquiv|, which only requires maps to be inverses on specified source and target sets. This still presents a subtle issue. \lstinline|PartialEquiv| (and, in fact, \lstinline|Equiv|), expects a function between types. So, one would desire some $f : K \to L$ where $K$ and $L$ are fixed (abstract) simplicial complexes. This is doable if one can express something of the approximate form
\begin{lstlisting}
{E F K L : Type _}
[AbstractSimplicialComplex E K] [AbstractSimplicialComplex F L]
(f : K → L) [SimplicialMap K L f]
\end{lstlisting}
similar to the syntax for rings and groups, for example.

However, this requires that \lstinline|AbstractSimplicialComplex| be defined with the \lstinline|class| keyword, rather than \lstinline|structure|. We choose to stay compatible with \lstinline|mathlib|'s implementations, which choose to define simplicial complexes via \lstinline|structure|. In reality, we would desire some \lstinline|PartialEquivLike| feature to fully resolve this issue, which currently does not exist.

\subsubsection{Stellar Equivalence}
Recall that our definition of stellar equivalence was defined between two simplicial complexes over type $E$, which resulted in a fair amount of work to support typecasting between stellar equivalences. See, for example, Theorem~\ref{thm:iso-to-stellar-equiv}. In principle, one could define stellar equivalence between simplicial complexes of different underlying types to bake this property into the definition.

The trade-off is a lack of support for heterogenous equivalence relations in \lstinline|mathlib|. Specifically, relations between different types are given by \lstinline|Rel|, which generalizes the usual \lstinline|Relation|. While we can prove analogs of reflexivity, symmetry, and transitivity, we lose access to any reasonable pre-implemented support for useful tools like induction. We briefly searched for alternative solutions and reached out to the \lstinline|mathlib| community regarding this and the above issue of isomorphisms, but received no fruitful responses. So, we chose to limit our definitions to access the more complete toolbox of \lstinline|Relation|.

\subsection{Related Work}
Within the \lstinline|mathlib| community, significant progress has been made with the formalization of simplicial sets~\cite{cnossen2021simplicial} and the simplex category, including the recent addition of simplicial homology defined for simplicial sets. There have been a number of impressive nontrivial results formalized using this theory, including work on group cohomology~\cite{livingston2023group}, homotopy type theory~\cite{du2025formalization}, higher category theory~\cite{carneiro2025formalizing}, and Quillen equivalences of model categories~\cite{riou2024topcat}. ACL2 formalized simplicial sets and objects in the mid-2000's~\cite{andres2007formalizing} with subsequent work toward verifying the algorithms used in the Kenzo software~\cite{heras2010proving, martin2009acl2}, implementing algorithms that arose from the field of constructive algebraic topology~\cite{rubio2002constructive}. For example, computing homotopy groups of simply connected simplicial sets~\cite{sergeraert1994computability}.

Generalizing to CW complexes, \lstinline|mathlib| recently saw preliminary work formalizing the definition and basic properties~\cite{scholz2024formalisation}. Much earlier, CW complexes had been formalized in HOL Light~\cite{harrison2018hol} and Agda~\cite{buchholtz2018cellular} with the primary intent of formalizing singular homology.

The most directly comparable work to our own is L\"{o}h's~\cite{loh2022exploring} expository text on Lean 3, which uses simplicial complexes as a case study. We found this work to be a useful first reference, but which required universal retooling. However, the text contains many great explicit examples of simplicial complexes and theorems not found in our current work, hence potentially good targets for further development. With the goal of formalizing persistent homology~\cite{heras2013computing, heras2012towards}, simplicial complexes have been formalized in the Rocq proof assistant~\cite{heras2011incidence}. More recently, they have also been formalized in Isabelle~\cite{aransay2021simplicial} for the purpose of studying Alexander duals and evasiveness properties~\cite{aransay2022formalizing}. These works are highly comparable to our own, focusing their attention on abstract simplicial complexes as an explicit computational tool.

\subsection{Future Work}
This work is an initial step toward a larger long-term goal of formalizing Alexander and Pachners' theorems. Much of the foundation for the work presented resulted from a push to define \emph{combinatorial manifolds}~\cite{friedl2019algebraic, friedl2024pachner, lickorish1999simplicial} and to provide an equally comprehensive treatment of stellar subdivisions and homeomorphism type thereon. However, our work building up to and including stellar subdivisions became mature enough to be of its own general interest to share our experiences formalizing them in Lean and inspire further applications. We present a few potential routes inspired by classical use cases or experiences during development.

\subsubsection{Algorithmic Applications}
Our primary downstream consideration is the computational power of simplicial complexes and subdivisions. In particular, simplicial complexes naturally arise as a combinatorial tool for extracting properties of different spaces. Most recognizable are the nerve of a cover, the flag complex of a graph, or the Rips complex of a metric space. One may also consider determining the homotopy type of a space via collapsing algorithms. Discrete Morse theory and shelling orders of simplicial complexes are a key tool in this study. The latter is a core ingredient in Hannes' proof of Pachner's theorem~\cite{friedl2024pachner}. Simplicial homology and cohomology are, of course, natural candidates for future formalization. Kozlov~\cite{kozlov2008combinatorial,kozlov2021organized} provides an excellent overview of these topics and their applications.

\subsubsection{Integration with \texttt{mathlib}}
Given the fragmentary related work already conducted in \lstinline|mathlib|, we placed some emphasis on attempting integration with these various constructions. In particular, isolating just the abstract necessities and type signatures that are as general as possible. Our hope is that our formalization can be somewhat painlessly integrated with these formalizations, particularly with simplicial sets and inducing instances of a geometric \lstinline|SimplicialComplex|, a CW complex, or a graph structure. Indeed, we have started work on resolving issues presented above toward the geometric case, and have initiated plans for integration with elements of the \lstinline|mathlib| library.

\lstinline|Analysis.Convex| also defines the \lstinline|convexJoin| between two sets as the collection of line segments between them. In principle, one could use this as a connection to give a coherent notion of joins for \lstinline|SimplicialComplex|. Formally, one would prove a theorem of the form
\begin{lstlisting}
(K ⋆ L).space = convexJoin 𝕜 K.space L.space
\end{lstlisting}

Moreover, many of the proofs involved in this work are direct, burning through tedious checks that faces satisfy conditions for equalities (cf. Theorem~\ref{thm:anticomm}) or showing that maps are simplicial. Many arguments run similar courses, which could be automated or refactored in more paradigmatic ways, perhaps with more cleverly structured arguments. Indeed, there are instances where we scrapped existing pen-and-paper proofs in favor of more streamlined ones. However, further work can be done to dramatically reduce much of the tedium and proof lengths.

\subsection{Conclusion}
This paper presents a summary of the core definitions and results from our efforts to formalize abstract simplicial complexes and stellar subdivisions. Our motivation and focus is toward computational downstream use cases, in particular the combinatorial theory of manifolds, and isolating only the necessities to achieve that goal, while encouraging future work toward integration with the wide array of similar developments in \lstinline|mathlib|. We constrain ourselves to discussion of the more pertinent features. The available Lean code contains more constructions and results than what can be presented in a clear and concise manner here. While somewhat ``classical,'' the tools of combinatorial topology are not only widely used in many fields of mathematics but also outside of mathematics. We are optimistic that a proper treatment of the subject will end with a useful tool for future formalizations in Lean, hopefully avoiding further slides into mathematical folklore.

\bibliographystyle{plainurl}
\bibliography{refs}
\end{document}